\documentclass[11pt]{article}

\usepackage[dvipsnames]{xcolor}
\usepackage{hyperref}
\usepackage{amsmath,amssymb,bbm,amsthm}
\usepackage{fullpage,wrapfig}
\usepackage{thm-restate,color,xcolor,xspace,nicefrac}
\usepackage{mathtools,xspace}
\usepackage[T1]{fontenc}
\usepackage{float}

\usepackage{graphicx}
\usepackage{xcolor}
\usepackage{enumitem}
\usepackage{algorithm}
\usepackage[noend]{algpseudocode}
\usepackage{comment}
\usepackage[normalem]{ulem}

\usepackage[capitalise,nameinlink]{cleveref}

\usepackage[format=plain,
            labelfont=it,
            textfont=it]{caption}
            
\newcommand\myshade{85}
\colorlet{mylinkcolor}{violet}
\colorlet{mycitecolor}{YellowOrange}
\colorlet{myurlcolor}{Aquamarine}
\hypersetup{
  linkcolor  = mylinkcolor!\myshade!black,
  citecolor  = mycitecolor!\myshade!black,
  urlcolor   = myurlcolor!\myshade!black,
  colorlinks = true,
}

\makeatletter
\setlength{\parindent}{0pt}
\addtolength{\partopsep}{-2mm}
\setlength{\parskip}{5pt plus 1pt}
\allowdisplaybreaks
 \textheight 9in
\makeatother

\usepackage{setspace}
\usepackage[compact]{titlesec}
\allowdisplaybreaks

\newcommand{\sse}{\subseteq}

\newcommand{\BE}{\begin{enumerate}}
\newcommand{\EE}{\end{enumerate}}

\newcommand{\BI}{\begin{itemize}}
\newcommand{\EI}{\end{itemize}}

\newcommand{\R}{\mathbb R}

\newcommand{\pr}{\mathbb P}

\newcommand{\eps}{\varepsilon}

\newcommand{\E}{\mathbb E}

\DeclareMathOperator{\Ber}{Ber}

\newcommand{\cD}{\mathcal{D}}
\newcommand{\cE}{\mathcal{E}}

\newcommand{\cM}{\mathcal{M}}
\newcommand{\cP}{\mathcal{P}}

\newcommand{\cT}{\mathcal{T}}

\newcommand{\lmt}{\left[\begin{matrix}}
\newcommand{\rmt}{\end{matrix}\right]}

\newtheorem{theorem}{Theorem}[section]
\newtheorem{lemma}[theorem]{Lemma}
\newtheorem{definition}[theorem]{Definition}
\newtheorem{corollary}[theorem]{Corollary}
\newtheorem{observation}[theorem]{Observation}
\newtheorem{claim}[theorem]{Claim}
\newtheorem{subclaim}{Subclaim}
\newtheorem{fact}[theorem]{Fact}

\newtheorem{assumption}[theorem]{Assumption}
\newcommand{\BT}{\begin{theorem}}
\newcommand{\ET}{\end{theorem}}
\newcommand{\BL}{\begin{lemma}}
\newcommand{\EL}{\end{lemma}}
\newcommand{\BD}{\begin{definition}}
\newcommand{\ED}{\end{definition}}
\newcommand{\BC}{\begin{corollary}}
\newcommand{\EC}{\end{corollary}}
\newcommand{\BO}{\begin{observation}}
\newcommand{\EO}{\end{observation}}
\newcommand{\BCL}{\begin{claim}}
\newcommand{\ECL}{\end{claim}}
\newcommand{\BSCL}{\begin{subclaim}}
\newcommand{\ESCL}{\end{subclaim}}
\newcommand{\BF}{\begin{fact}}
\newcommand{\EF}{\end{fact}}
\newcommand{\BA}{\begin{assumption}}
\newcommand{\EA}{\end{assumption}}
\newcommand{\BP}{\begin{proof}}
\newcommand{\EP}{\end{proof}}
\newcommand{\BPS}{\begin{proof}[Proof (Sketch)]}
\newcommand{\EPS}{\end{proof}}
\Crefname{observation}{Observation}{Observations}
\Crefname{claim}{Claim}{Claims}
\Crefname{subclaim}{Subclaim}{Subclaims}
\Crefname{fact}{Fact}{Facts}
\Crefname{assumption}{Assumption}{Assumptions}

\newcommand{\ignore}[1]{}

\renewcommand{\emptyset}{\varnothing}

\newcommand{\odd}{\text{ odd}}

\newcommand{\reduced}{\text{ reduced}}
\newcommand{\haff}{\nicefrac12}

\newcommand{\nf}{\nicefrac}
\renewcommand{\wp}{w.p.\ }
\newcommand{\yh}{\widehat{y}}
\newcommand{\Hh}{\widehat{H}}

\newcommand{\initOneLiners}{%
    \setlength{\itemsep}{0pt}
    \setlength{\parsep }{0pt}
    \setlength{\topsep }{0pt}
}
\newenvironment{OneLiners}[1][\ensuremath{\bullet}]
    {\begin{list}
        {#1}
        {\initOneLiners}}
    {\end{list}}

\newcommand{\Ktsp}{K_{TSP}}
\newcommand{\Kpm}{K_{PM}}

\newcommand{\Kst}{K_{spT}}
\newcommand{\Kjoin}{K_{join}}

\newcommand{\lAngle}{\langle\!\langle}
\newcommand{\rAngle}{\rangle\!\rangle}
\newcommand{\dip}[1]{{\lAngle #1 \rAngle}}
\newcommand{\ones}{\mathbf{1}}

\newcommand{\red}{\text{ reduced}}
\newcommand{\contract}{G\lAngle S \rAngle}
\newcommand{\ex}{\mathbb{E}}
\newcommand{\rr}{\mathbb{R}}

\newcommand{\MI}{\textsc{MatInt}\xspace}
\newcommand{\ME}{\textsc{MaxEnt}\xspace}

\setstretch{1.05}

\newif\ifArXivVersion

\ArXivVersiontrue

\begin{document}

\title{Matroid-Based TSP Rounding for Half-Integral Solutions}

\author{Anupam Gupta\thanks{Carnegie Mellon University, Pittsburgh PA
    15217.} \and Euiwoong Lee\thanks{University of Michigan, Ann Arbor
    MI 48109. Most of this work was done when the author was a postdoc at NYU and supported in part by the Simons Collaboration on Algorithms and Geometry. } \and Jason Li\thanks{Simons Institute for the Theory of Computing and UC Berkeley, Berkeley CA 94720.}\and Marcin Mucha\thanks{University of Warsaw, Warsaw, Poland.}
 \and Heather Newman$^*$ \and Sherry Sarkar$^*$}

\maketitle

\begin{abstract}
  We show how to round any half-integral solution to the
  subtour-elimination relaxation for the TSP, while losing a
  less-than-1.5 factor. Such a rounding algorithm was recently given
  by Karlin, Klein, and Oveis Gharan based on sampling from
  max-entropy distributions. We build on an approach of Haddadan and
  Newman to show how sampling from the matroid intersection polytope,
  and a new use of max-entropy sampling, can give better guarantees.
\end{abstract}

\section{Introduction}
\label{sec:introduction}

The (symmetric) traveling salesman problem asks: given an graph
$G = (V,E)$ with edge-lengths $c_e \geq 0$, find the shortest tour
that visits all vertices at least once. The Christofides-Serdyukov algorithm~\cite{Chr,Ser} gives a
$\nf32$-approx\-imation to this APX-hard problem; this was recently
improved to a $(\nf32-\eps)$-approx\-imation by the breakthrough work of
Karlin, Klein, and Oveis Gharan, where $\eps > 0$~\cite{KKO20}. A
related question is: \emph{what is the integrality gap of the
  subtour-elimination polytope relaxation for the TSP}? Wolsey had
adapted the Christofides-Serdyukov analysis to show an upper bound of
$\nf32$~\cite{Wol80} (also~\cite{SW90}), and there exists a lower bound of
$\nf43$. Building on their above-mentioned work, Karlin, Klein, and
Oveis Gharan gave an integrality gap of $1.5 - \eps'$ for another
small constant $\eps' > 0$~\cite{KKO21}, thereby making the first
progress towards the conjectured optimal value of $\nf43$ in nearly half a
century.

Both these recent results are based on a randomized version of the
Christofides-Serdyukov algorithm proposed by Oveis Gharan,
Saberi, and Singh~\cite{OSS11}. This algorithm first samples a
spanning tree (plus perhaps one edge) from the \emph{max-entropy
  distribution} with marginals matching the LP solution, and adds an
$O$-join on the odd-degree vertices $O$ in it, thereby getting an
Eulerian spanning subgraph. Since the first step has expected cost
equal to that of the LP solution, these works then bound the cost of
this $O$-join by strictly less than half the optimal value, or the LP
value. The proof uses a cactus-like decomposition of the min-cuts of
the graph with respect to the values $x_e$, like in~\cite{OSS11}.

Given the $\nf32$ barrier has been broken, we can ask: what other
techniques can be effective here? How can we make further progress?
These questions are interesting even for cases where the LP has
additional structure. The half-integral cases are particularly
interesting due to the Schalekamp, Williamson, and van Zuylen
conjecture, which says that the integrality gap is achieved on
instances where the LP has optimal half-integral
solutions~\cite{SWZ14}. The team of Karlin, Klein, and Oveis Gharan
first used their max-entropy approach to get an integrality gap of
$1.4998$ for half-integral LP solutions~\cite{KKO19}, before they
moved on to the general case in \cite{KKO20} and obtained an
integrality gap of $1.5-\varepsilon$; the latter improvement
is considerably smaller than in the half-integral case.  It is natural
to ask: can we do better for half-integral
instances? %

In this paper, we answer this question affirmatively. We show how to
get tours of expected cost at most 
$1.49842$ times the linear program
value using an algorithm based on matroid intersection. Moreover, some
of these ideas can be used to strengthen the max-entropy sampling approach in the
half-integral case. The matroid intersection approach and the strengthened max-entropy approach each yield improvements over the bound in \cite{KKO19}. Combining the techniques gives our final quantitative improvement:

\begin{theorem}
  \label{thm:main}
  Let $x$ be a half-integral solution to the subtour elimination
  polytope with cost $c(x)$. There is a randomized algorithm that
  rounds $x$ to an integral solution whose cost is at most
  ${(1.5-\varepsilon)\cdot c(x)}$, where 
  $\varepsilon = 0.00158$. 
\end{theorem}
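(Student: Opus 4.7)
The plan is to follow the randomized Christofides-Serdyukov framework of Oveis Gharan, Saberi, and Singh that underlies the recent breakthroughs: first sample a (near-)spanning tree $T$ from a distribution whose edge marginals are $x$ (or close to $x$), and then add a minimum $O$-join on the odd-degree vertex set $O$ of $T$ to turn $T$ into an Eulerian subgraph, and finally shortcut. Since $\E[c(T)] = c(x)$ by the marginal condition, the whole argument reduces to exhibiting an $\eps > 0$ such that $\E[c(O\text{-join})] \le (\haff - \eps)\, c(x)$. For the half-integral $x$ the support has a rigid structure (the $\haff$-edges decompose into vertex-disjoint even cycles connected by $1$-edges), and this is what we intend to exploit.

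For the sampling step I would combine two ingredients. First, instead of drawing $T$ purely from a max-entropy distribution, I would use a distribution supported on the common bases of the graphic matroid and an auxiliary partition matroid that respects the cycle structure of the $\haff$-support; the matroid-intersection polytope still permits marginals equal to $x$, but its bases automatically give us favorable parity behavior at certain ``easy'' cuts following the Haddadan-Newman template. Second, where the matroid-intersection argument is weakest I would overlay an improved max-entropy sampling step, tailored to the half-integral instance, to gain additional slack. The goal is that on every family of cuts at least one of the two mechanisms contributes, so the guarantees add up to an overall improvement beyond the $1.4998$ bound of \cite{KKO19}.

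For the analysis I would bound $\E[c(O\text{-join})]$ via Edmonds' $T$-join polyhedron: construct a random fractional $O$-join of the form $y_e = x_e/2 - s_e$ with a signed slack $s_e$ whose expectation is strictly positive in aggregate. The feasibility constraint is that $\E[y(\delta(S))] \ge \Pr[|\delta(S)\cap T| \text{ odd}]$ for every $S$, so for each tight cut (with $x(\delta(S)) = 2$) we must upper bound the odd-parity probability under our sampling distribution by $1 - 2\eps'$ for some fixed $\eps'$; non-tight cuts are handled by LP slack. I would classify tight cuts using the cactus-like decomposition (degree-$2$ cuts that sit inside a single $\haff$-cycle, cuts spanning multiple cycles, and ``top'' cuts), and analyze the odd-parity probability case by case under the combined sampling distribution.

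The main obstacle I anticipate is controlling the odd-cut probabilities once the matroid-intersection sampler and the max-entropy sampler are combined: the two distributions are engineered for different cut types and it is not automatic that their union still satisfies the strong negative-correlation / parity bounds used in prior analyses. I expect the core technical work to be (i) proving a parity lemma for the matroid-intersection distribution on the ``hard'' cuts internal to $\haff$-cycles, and (ii) showing that the modified max-entropy step does not destroy the parity improvements already obtained; once these are in place, optimizing the trade-off between the two savings yields the numerical constant $\eps = 0.001695$ claimed in the theorem.
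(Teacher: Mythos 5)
Your proposal correctly identifies the high-level framework (Christofides--Serdyukov with sampled tree plus $O$-join, cactus-style decomposition of min-cuts, a combination of matroid-intersection and max-entropy samplers, bounding the expected fractional $O$-join cost), which is indeed the skeleton of the paper's argument. However, several of the load-bearing mechanisms are either missing or stated incorrectly, and without them the argument cannot be completed.

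First, the proposal omits the crucial \emph{shifting} step, inherited from Haddadan--Newman: after doubling, one samples a random perfect matching $M$ on the support with $\Pr[e\in M]=\nf{x_e}{2}$ and replaces $x$ by the $\{\nf13,1\}$-valued vector $y$ with $y_e=1$ on $M$ and $y_e=\nf13$ otherwise, noting $\E_M[y]=x$. It is this shift---not the raw marginals $x$---that makes the partition-matroid constraints and the max-entropy negative-correlation bounds effective; the correlation lemmas the paper uses are proved for $T$ sampled from the $y$-marginals, not the $x$-marginals. Your plan to ``use a distribution supported on the common bases of the graphic matroid and an auxiliary partition matroid that respects the cycle structure'' is too vague to pin down why any particular edge becomes even with constant probability.

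Second, the two samplers are not ``overlaid'' into a single distribution; the algorithm selects one of \MI or \ME on each piece with probability $1-\lambda$ or $\lambda$, and the final $\eps$ is obtained by optimizing $\lambda$ together with the $O$-join decrement parameters $\tau,\gamma,\beta$ via a small LP. Your worry about whether the combined distribution preserves the parity bounds is, in the paper's scheme, a non-issue precisely because the distributions are not merged.

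Third, and most significantly, the proposal replaces the paper's two core analytic tools with a hand-wave. The constraint you write, ``$\E[y(\delta(S))]\ge\Pr[|\delta(S)\cap T|\text{ odd}]$,'' is not the $O$-join feasibility condition: the fractional join $z$ built for a realization of $T$ must satisfy $z(\delta(S))\ge 1$ \emph{pointwise} for every $S$ with $|\delta(S)\cap T|$ odd, so one cannot get away with an aggregate expectation bound. The paper instead introduces (i) the notion of being \emph{even-at-last}---an edge must be even at the unique level of the cut hierarchy where it is ``settled,'' which is what lets one reduce $z_e$ without violating the min-cuts where $e$ participates---and (ii) a \emph{flow-based charging argument} (via Hall/max-flow on a bipartization of each local multigraph) that redistributes the resulting deficits at lower-level cuts to other edges, while keeping the net expected decrease positive. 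Your proposal, which imagines a per-cut odd-parity probability bound $1-2\eps'$, would not even be well-defined for cuts that are only non-trivial on one side of the hierarchy, and distributing the charge uniformly (which is the obvious alternative) is explicitly noted in the paper to be insufficient. These missing pieces are not optimization details; they are the reason the proof works, so as written the proposal has a genuine gap.
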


We view our work as showing a proof-of-concept of
the efficacy of combinatorial techniques (matroid
intersection, and flow-based charging arguments) in getting an
improvement for the half-integral case. We hope that these techniques,
ideally combined with max-entropy sampling techniques, can give
further progress on this central problem.

\paragraph{Our Techniques}

The algorithm is again in the Christofides-Serdyukov framework. It is
easiest to explain for the case where the graph (a)~has an even number
of vertices, and (b)~has no (non-trivial) proper min-cuts with respect
to the LP solution values $x_e$---specifically, the only sets for
which $x(\partial S) = 2$ correspond to the singleton cuts. Here, our
goal is that each edge is ``even'' with some probability: i.e., both of
its endpoints have even degree with probability $p > 0$. In this case
we use an idea due to Haddadan and Newman~\cite{HN19}: we \emph{shift}
and get a $\{\nf13,1\}$-valued solution $y$ to the subtour elimination
polytope $\Ktsp$. Specifically, we find a random perfect matching $M$
in the support of $x$, and set $y_e = 1$ for $e\in M$, and $\nf13$
otherwise, thereby ensuring $\E[y] = x$. To pick a random tree from
this shifted distribution $y$, we do one of the following:
\begin{enumerate}
\item We pick a random ``independent'' set $M'$ of matching edges (so
  that no edge in $E$ is incident to two edges of $M'$). For each
  $e' \in M'$, we place partition matroid constraints enforcing that
  exactly one edge is picked at each endpoint---which, along with $e'$
  itself, gives degree $2$ and thereby makes the edge even as
  desired. Finding spanning trees subject to another matroid
  constraint can be implemented using matroid intersection.
\item Or, instead we sample a random spanning tree from the
  max-entropy distribution, with marginals being the shifted value
  $y$. (In contrast, \cite{KKO19} sample trees from $x$ itself; our
  shifting allows us to get stronger notions of evenness than they do:
  e.g., we can show that every edge is ``even-at-last'' with constant
  probability, as opposed to having at least one even-at-last edge in
  each tight cut with some probabiltiy.) %
\end{enumerate}
(Our algorithm randomizes between the two samplers to achieve the best
guarantees.) For the $O$-join step, it suffices to give fractional
values $z_e$ to edges so that for every odd cut in $T$, the $z$-mass
leaving the cut is at least 1. In the special case we consider, each
edge only participates in two min-cuts---those corresponding to its
two endpoints. %
So set $z_e = \nf{x_e}3$ if $e$ is even, and $\nf{x_e}2$ if not; the
only cuts with $z(\partial S) < 1$ are minimum cuts, and these cuts
will not show up as $O$-join constraints, due to
evenness.
For
this setting, if an edge is even with probability $p$, we get a
$(\nf32 -
\nf{p}6)$-approximation! %

It remains to get rid of the two simplifying assumptions. To sample
trees when $|V|$ is odd (an open question from~\cite{HN19}), we add a
new vertex to fix the parity, and perform local surgery on the
solution to get a new TSP solution and reduce to the even case. The
challenge here is to show that the losses incurred are small, and
hence each edge is still even with constant probability.

Finally, what if there are proper tight sets $S$, i.e., where
$x(\partial S) = 2$? We use the cactus decomposition of a graph (also used
in \cite{OSS11,KKO19}) to sample spanning trees from pieces of $G$
with no proper min-cuts, and stitch these trees together. These pieces
are formed by contracting sets of vertices in $G$, and have a
hierachical structure. Moreover, each such piece is either of the form
above (a graph with no proper min-cuts) for which we have already seen
samplers, or else it is a double-edged cycle (which is easily sampled
from). Since each edge may now lie in many min-cuts, we no longer just
want an edge to have both endpoints be even. Instead, we use an idea
from~\cite{KKO19} that uses the hierarchical structure on the pieces
considered above. Every edge of the graph is ``settled'' at exactly
one of these pieces, and we ask for %
both of its endpoints to have even degree \emph{in the piece at which
  it is settled}. The $z_e$ value of such an edge may be lowered in
the $O$-join without affecting constraints corresponding to cuts
\emph{in the piece at which it is settled}.

Since cuts at other
levels of the hierarchy may now be deficient because of the lower values
of $z_e$, %
we may need to increase the $z_f$ values for other ``lower'' edges $f$
to satisfy these deficient cuts. This last part requires a charging
argument, showing that each edge $e$ has $z_e$ that is strictly
smaller than $\nf{x_e}2$ in expectation. For our samplers, 
the na\"{\i}ve approach of distributing charge uniformly as in ~\cite{KKO19} does not work,
so we instead
formulate this charging as a flow problem. 

\section{Notation and Preliminaries}
\label{sec:defn}

Given a multigraph $G = (V,E)$, and a set $S \sse V$, let $\partial S$
denote the cut consisting of the edges connecting $S$ to
$V\setminus S$; $S$ and $\bar{S} \coloneqq V \setminus S$ are called \emph{shores} of the cut. (For a singleton set $\{v\}$, we write $\partial v$
instead of $\partial\{v\}$.) A subset $S \sse V$ is \emph{proper} if
$1 < |S| < |V|-1$; a cut $\partial S$ is called \emph{proper} if the
set $S$ is a proper subset.  A set $S$ is \emph{tight} if $|\partial S|$
equals the size of the minimum edge-cut in $G$. Two sets $S$ and $S'$
are \emph{crossing} if $S \cap S'$, $S \setminus S'$, $S' \setminus
S$, and $V \setminus (S \cup S')$ are all non-empty.

Define the \emph{subtour elimination polytope}
$\Ktsp(G) \sse \R^{|E|}$:
\begin{align}
  x(\partial v) &= 2 \qquad\qquad \forall v \in V \tag{LP-TSP} \label{eq:HK}\\
  x(\partial S) &\geq 2 \qquad\qquad\forall \;\text{proper $S$} \notag\\
  x \geq 0. \notag
\end{align}

Let $x$ be half-integral and feasible for~(\ref{eq:HK}).  W.l.o.g.\ we
can focus on solutions with $x_e = \haff$ for each $e \in E$, doubling
edges if necessary. The support graph $G$ is then a $4$-regular
$4$-edge-connected (henceforth 4EC) multigraph.

The \emph{spanning tree polytope} $\Kst(G) \sse \R^{|E(G)|}$ for a
multigraph $G$ is:
\begin{align}
  x(E(S)) &\leq |S| - 1 \qquad\qquad\forall S \sse V(G) \tag{LP-spT} \label{eq:st-poly}\\
  x(E(V(G))) &= |V(G)|-1 \notag \\
  x &\geq 0. \notag
\end{align}
where $E(S)$ is the set of edges with both endpoints in $S$. This is
the graphic matroid polytope, and the convex hull of the spanning trees. 

\begin{definition}[$r$-Trees]
Given a multigraph $G$ and ``root'' vertex $r \in V(G)$, an
\emph{$r$-tree} $T$ is a connected subgraph with $n$ edges: the vertex $r$
has degree exactly $2$, and the subgraph restricted to the other
vertices $V(G) \setminus \{r\}$ is a spanning tree on them. This is a
matroid, though we do not use this fact.
\end{definition}

The (integral) \emph{perfect matching polytope} $\Kpm \sse \R^{|E(G)|}$ is defined as follows: 
\begin{align}
  x(\partial v) &= 1 \qquad\qquad \forall v \in V \tag{LP-PM} \label{eq:PM}\\
  x(\partial S) &\geq 1 \qquad\qquad\forall \text{$S$ with $|S|$ odd} \notag\\
  x \geq 0. \notag
\end{align}

The (integral) \emph{$O$-join dominator polytope} $K_{join}(G,O)$ is defined as follows. Let $O \subseteq V$, $|O|$ even. 
\begin{align*}
    z(\partial(S)) &\geq 1 \qquad\qquad \forall S \subseteq V, |S \cap O| \text{ odd} \\
    z &\geq 0
\end{align*}

\begin{fact}
  \label{fact:tsp-implies-onetree}
  For any solution $x \in \Ktsp(G)$, it holds that $x|_{E(V(G)\setminus\{r\})} \in \Kst(G[V(G) \setminus \{r\}])$, $\nf{x}2 \in
  \Kpm(G)$ (when $|V(G)|$ is even), and $\nf{x}{2} \in K_{join}(G,O)$ for $O \subseteq V(G)$, $|O|$ even. 
\end{fact}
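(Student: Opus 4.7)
The plan is to verify each of the three polytope-memberships separately, using only the defining constraints of $\Ktsp(G)$. The workhorse identity I would use throughout is the standard double-counting
\[ 2\,x(E(S)) + x(\partial S) \;=\; \sum_{v \in S} x(\partial v) \;=\; 2|S|, \]
valid for any $S \sse V(G)$: on the left, each edge with both endpoints in $S$ is counted twice and each boundary edge once, while on the right the degree constraints of $\Ktsp$ give $x(\partial v) = 2$.

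For the spanning-tree inclusion, I would apply this identity to arbitrary $S \sse V \setminus \{r\}$, rewriting it as $x(E(S)) = |S| - \nf12\, x(\partial S)$. When $S$ is a proper subset of $V$ (the only case the subtour LP constrains), the cut inequality $x(\partial S) \geq 2$ gives $x(E(S)) \leq |S| - 1$, matching the required $\Kst$ inequality. For $S = V \setminus \{r\}$ itself, $\partial S = \partial r$, so the degree constraint gives $x(\partial S) = 2$ and hence $x(E(V\setminus\{r\})) = (|V|-1) - 1 = |V(G[V\setminus\{r\}])| - 1$, which is exactly the equality required of the spanning tree polytope on $G[V\setminus\{r\}]$. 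Nonnegativity is inherited from $x$.

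For the perfect-matching and $O$-join inclusions, scaling by $\nf12$ converts each degree equality $x(\partial v) = 2$ into $(\nf{x}{2})(\partial v) = 1$ and each cut inequality $x(\partial S) \geq 2$ into $(\nf{x}{2})(\partial S) \geq 1$. The only thing that needs care is that every set $S$ on which $\Kpm(G)$ or $K_{join}(G,O)$ imposes an ``odd cut'' constraint is already covered by either a degree or a subtour constraint of $\Ktsp$. For $\Kpm$, when $|V|$ is even any odd-sized $S$ is automatically nonempty and not equal to $V$, so it is either a singleton, the complement of a singleton, or a proper subset, and in each case $x(\partial S) \geq 2$. For $K_{join}(G,O)$, when $|O|$ is even any $S$ with $|S \cap O|$ odd is again forced to be nonempty and unequal to $V$ (since $|\emptyset \cap O| = 0$ and $|V \cap O| = |O|$ are both even), so the same case split applies.

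No step here is genuinely hard; the only thing even close to a subtlety is the ``odd cut'' case analysis in the last paragraph, and it is routine once one observes that the parity assumption on $|V|$ (resp.\ $|O|$) prevents the degenerate sets $\emptyset$ and $V$ from ever triggering a matching (resp.\ join) constraint. The whole proof is bookkeeping around the double-counting identity.
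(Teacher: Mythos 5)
Your proof is correct and is the standard argument, built around the double-counting identity $2x(E(S)) + x(\partial S) = \sum_{v\in S} x(\partial v) = 2|S|$ and the observation that the parity hypotheses ($|V|$ even, resp.\ $|O|$ even) rule out the degenerate sets $\emptyset$ and $V$ from triggering odd-cut constraints. The paper states \Cref{fact:tsp-implies-onetree} without proof, treating it as folklore; your argument is exactly the one a reader would be expected to supply, and it covers all the required cases (the singleton case of the spanning-tree constraint is trivially $0\le 0$, which you implicitly absorb).
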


\begin{lemma}
  \label{lem:sampler}
  Consider a sub-partition $\cP = \{P_1, P_2, \ldots, P_t\}$ of the
  edge set of $G$. Let $x$ be a fractional solution
  to~(\ref{eq:st-poly}) that satisfies $x(P_i) \leq 1$ for all
  $i \in [t]$. Then we can efficiently sample from a probability
  distribution $\cD$ over spanning trees which contain at most one edge
  from each of the parts $P_i$, such that
  $ \Pr_{T \gets \cD}[e \in T] = x_e$. 
\end{lemma}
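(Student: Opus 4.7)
The plan is to recognize this as a matroid intersection statement and invoke Edmonds' polytope characterization. Let $M_1$ be the graphic matroid on $E(G)$, whose base polytope is exactly $\Kst(G)$. Let $M_2$ be the partition matroid on $E(G)$ obtained by taking the parts $P_1,\dots,P_t$ together with singleton parts for every edge not covered by $\cP$, with capacity $1$ on each block. Then the independence polytope of $M_2$ is $\{y \geq 0 : y(P_i) \le 1 \text{ for all } i, \; y_e \le 1 \text{ for edges } e \notin \cup_i P_i\}$. By hypothesis $x \in \Kst(G)$, and the constraints $x(P_i)\le 1$ together with the bound $x_e \le 1$ (which follows from $\Kst$ applied to the pair $S=\{u,v\}$ for each edge $e=uv$) put $x$ in $P(M_2)$.

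By Edmonds' matroid intersection polytope theorem, $P(M_1)\cap P(M_2)$ is the convex hull of characteristic vectors of common independent sets of $M_1$ and $M_2$. Hence $x$ can be expressed as $x = \sum_i \lambda_i \, \ones_{I_i}$ with $\lambda_i \ge 0$, $\sum_i \lambda_i = 1$, each $I_i$ a forest of $G$ containing at most one edge from each $P_j$. Using $x(E(V(G))) = |V(G)|-1$ from~(\ref{eq:st-poly}) and the bound $|I_i|\le |V(G)|-1$ (since $I_i$ is a forest), the equation $\sum_i \lambda_i |I_i| = |V(G)|-1$ forces $|I_i|=|V(G)|-1$ whenever $\lambda_i > 0$. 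Thus every $I_i$ in the support is in fact a spanning tree obeying the partition constraints.

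Algorithmically, the decomposition $\{(\lambda_i,I_i)\}$ has polynomial support size and can be computed in polynomial time: one can either run a Carath\'eodory-style extraction using the polynomial-time separation oracle for $P(M_1)\cap P(M_2)$ together with Edmonds' matroid intersection algorithm, or use the standard swap-based decomposition for matroid intersection. Sampling $T$ by drawing $I_i$ with probability $\lambda_i$ then yields the desired distribution $\cD$: $\Pr_{T\gets\cD}[e\in T] = \sum_{i\,:\,e\in I_i}\lambda_i = x_e$, and $T$ is a spanning tree with at most one edge from each $P_j$.

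The only conceptually subtle step is verifying that $x$ automatically lies in $P(M_1)\cap P(M_2)$ and that the tightness $x(E)=|V|-1$ promotes the common independent sets appearing in the decomposition to full-size bases; once these are noted, the lemma follows directly from the matroid intersection polytope theorem together with a standard efficient decomposition procedure.
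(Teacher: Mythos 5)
Your proposal is correct and is essentially the same argument the paper intends: the paper's proof is the one-liner ``this follows from the integrality of the matroid intersection polytope,'' and you have filled that in faithfully, taking $M_1$ the graphic matroid, $M_2$ the capacity-one partition matroid, invoking Edmonds' theorem, and observing that the tightness $x(E)=|V|-1$ forces every common independent set in the support of the decomposition to be a spanning tree. The details you add (e.g., $x_e\le 1$ from $S=\{u,v\}$, and polynomial-time decomposition via a separation oracle or swap moves) are standard and correct.
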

\begin{proof}
  This follows from the
  integrality of the matroid intersection polytope.
\end{proof}

\subsection{The Max-Entropy Distribution over Spanning Trees}

\begin{definition}[Strong Rayleigh Distributions, \cite{BBL09}]
  Let $\mu$ be a probability distribution on spanning trees. The
  generating polynomial for $\mu$ is
  $p(z) = \sum_{T \in \mathcal{T}} \mathbb{P}(e \in T) \prod_{e \in
    T}z_e$.  We say $\mu$ is \emph{strongly Rayleigh (SR)} if $p$ is
  a real stable polynomial (i.e., $p(z) \neq 0$ if $z$ lies in the
  upper half plane).
\end{definition}
 
A distribution $\mu$ over spanning trees is called $\lambda\textit{-uniform}$ or $\textit{weighted uniform}$ %
if there exist non-negative weights $\lambda: E \rightarrow \rr$ such that 
$\mathbb{P}(T) \propto \prod_{e \in T} \lambda(e)$. Borcea et al. \cite{BBL09} showed that $\lambda$-uniform spanning tree distributions are SR. 
\begin{theorem}[\cite{OSS11}]
There exists a $\lambda$-uniform distribution $\mu$ over spanning trees such that given $z$ in the spanning tree polytope of $G = (V, E)$,
\[ \sum_{T \ni e} \mathbb{P}_\mu(T) = z_e. \]
The \emph{max-entropy distribution}, which is the distribution $\mu$ on spanning trees maximizing the entropy of $\mu$ subject to preserving the marginals $z_e$ (that is, $\pr_{\mu}(e \in T)=z_e$), is one such distribution. 
\end{theorem}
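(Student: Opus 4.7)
The plan is to exhibit the $\lambda$-uniform distribution as the optimizer of a convex program, then read off its closed form from the KKT conditions. Concretely, I would consider the max-entropy convex program
\begin{align*}
\max\quad & -\sum_{T \in \mathcal{T}} \mu(T) \log \mu(T) \\
\text{s.t.}\quad & \sum_{T \ni e} \mu(T) = z_e \qquad \forall e \in E, \\
& \sum_{T} \mu(T) = 1, \\
& \mu(T) \geq 0 \qquad \forall T \in \mathcal{T},
\end{align*}
where $\mathcal{T}$ is the set of spanning trees of $G$. First I would argue that the program is feasible: since $z$ lies in the spanning tree polytope $\Kst(G)$, and $\Kst(G)$ is the convex hull of indicator vectors of spanning trees, $z$ can be written as a convex combination $z = \sum_T \mu(T) \mathbf{1}_T$, and any such $\mu$ is feasible. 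To avoid the boundary issues caused by edges with $z_e \in \{0,1\}$, I would first reduce to the case $0 < z_e < 1$ for all $e$: edges with $z_e = 0$ can be deleted, and edges with $z_e = 1$ must appear in every tree in the support (by a standard spanning-tree-polytope argument), so they can be contracted.

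Next I would take the Lagrangian with multipliers $\gamma_e$ for the marginal equalities and $\eta$ for the normalization:
\[
\mathcal{L}(\mu, \gamma, \eta) = -\sum_T \mu(T) \log \mu(T) - \sum_e \gamma_e \Bigl(\sum_{T \ni e} \mu(T) - z_e\Bigr) - \eta\Bigl(\sum_T \mu(T) - 1\Bigr).
\]
Setting $\partial \mathcal{L}/\partial \mu(T) = 0$ gives $-\log \mu(T) - 1 - \sum_{e \in T} \gamma_e - \eta = 0$, so at optimality
\[
\mu(T) \;\propto\; \exp\Bigl(\sum_{e \in T} \gamma_e\Bigr) \;=\; \prod_{e \in T} \lambda_e
\]
where $\lambda_e := e^{\gamma_e} \geq 0$. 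This is exactly the form of a $\lambda$-uniform distribution. Since the entropy is strictly concave on the (compact) simplex of distributions over $\mathcal{T}$, a unique optimizer exists; strong duality applies because the feasible region has relative interior (any strict convex combination of trees attaining $z$), so an optimal dual $\gamma$ exists as well, yielding the desired $\lambda$.

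Finally I would verify that this optimizer satisfies the marginal constraint $\sum_{T \ni e} \mu(T) = z_e$ by primal feasibility of the optimal $\mu$. The main technical obstacle is the potential unboundedness of $\gamma_e$ when $z_e \to 0$ or $z_e \to 1$; this is handled by the contraction/deletion preprocessing above, which ensures the optimum is attained in the interior of the probability simplex over $\mathcal{T}$. Once we have $\lambda$ on the reduced graph, we extend it by setting $\lambda_e = 0$ for deleted edges and treating contracted edges as forced, which keeps the distribution $\lambda$-uniform on the original graph and preserves the marginals. The fact that $\lambda$-uniform distributions are SR is immediate from~\cite{BBL09}, as cited.
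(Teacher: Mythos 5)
The paper does not prove this statement at all: it is quoted from \cite{OSS11}, and the adjacent theorem of Asadpour et al.\ is quoted precisely because the algorithmic/exact version needs care. So your proposal has to stand on its own, and the Lagrangian/max-entropy route you take is indeed the standard one behind the cited result. The stationarity computation and the conclusion that an interior optimizer has the product form $\mu(T)\propto\prod_{e\in T}\lambda_e$ are fine (up to a sign convention on the multipliers).

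The gap is in the step ``strong duality applies because the feasible region has relative interior.'' What you actually need is a feasible $\mu$ with $\mu(T)>0$ for \emph{every} spanning tree $T$ (equivalently, $z$ in the relative interior of the spanning tree polytope); a strictly positive convex combination over \emph{some} subset of trees attaining $z$ does not give this. Your deletion/contraction preprocessing removes only the coordinate boundary $z_e\in\{0,1\}$; it does not remove tightness of the set constraints $z(E(S))=|S|-1$ for proper $S$, and when such a constraint is tight every feasible $\mu$ must put zero mass on any tree with fewer than $|S|-1$ edges inside $S$, so the optimum lies on the boundary of the simplex, the KKT stationarity you wrote does not apply there, and the dual need not be attained. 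In fact the conclusion itself can fail on such faces: in $K_4$, put $z_e=2/3$ on the three edges of a triangle $S$ and $z_e=1/3$ on the three edges to the fourth vertex. This $z$ is in the spanning tree polytope with all coordinates in $(0,1)$, but $z(E(S))=2$ is tight, so any distribution with these marginals is supported on trees having exactly two triangle edges; a $\lambda$-uniform distribution with all $\lambda_e>0$ gives positive mass to the star at the fourth vertex (zero triangle edges), and setting any $\lambda_e=0$ kills a positive marginal, so no $\lambda$-uniform distribution matches these marginals exactly. To repair the argument you must either assume $z$ is in the relative interior of the spanning tree polytope (which is how the exact statement is usually phrased), or settle for approximate marginals via a limiting argument (this is exactly what the quoted Asadpour et al.\ theorem provides), or decompose along the tight sets --- the corresponding face of the base polytope is a product of spanning tree polytopes of minors, and one applies the interior argument on each piece, obtaining a product of $\lambda$-uniform distributions rather than a single $\lambda$-uniform distribution on $G$.
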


Asadpour et al \cite{AGMGS17} showed that we can find weights $\Tilde{\lambda}: E \rightarrow \rr^+$ which approximately respect the marginals given by a vector $z$ in the spanning tree polytope in polynomial time. 
\begin{theorem}[\cite{AGMGS17}]\label{approximate-lambda-uniform}
Given $z$ in the spanning tree polytope of $G = (V, E)$ and some $\varepsilon > 0$, values $\Tilde{\lambda}_e$ for $e \in E$ can be found such that, for all $e \in E$ , the $\Tilde{\lambda}$-uniform distribution $\mu$ satisfies
\[ \sum_{T \ni e} \mathbb{P}_{\mu}(T) \leq (1 + \varepsilon) z_e. \]
The running time is polynomial in $|V|, \log \nicefrac{1}{\min_e z_e}$ and $\log \nicefrac{1}{\varepsilon}$.
\end{theorem}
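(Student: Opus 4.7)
The plan is to cast the construction of $\tilde{\lambda}$ as an unconstrained convex optimization problem and solve it approximately, using the matrix-tree theorem to compute marginals at each iterate.

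First, reformulate the task. For $\lambda: E \to \rr_{>0}$ define $Z(\lambda) := \sum_T \prod_{e \in T} \lambda_e$, and let $q_e(\lambda)$ denote the marginal probability that $e$ appears in a tree drawn from the $\lambda$-uniform distribution. By Kirchhoff's matrix-tree theorem, $Z(\lambda)$ equals a principal minor of the weighted Laplacian of $G$ with conductances $\lambda_e$, and a standard calculation yields $q_e(\lambda) = \lambda_e \cdot R_{\mathrm{eff}}(e;\lambda)$, where $R_{\mathrm{eff}}(e;\lambda)$ is the effective resistance of $e$ in the weighted graph. Both quantities are therefore computable in polynomial time. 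Substituting $\lambda_e = e^{\gamma_e}$, consider
\[ F(\gamma) := \log Z(e^\gamma) - \langle z, \gamma\rangle, \qquad \gamma \in \rr^E. \]
A direct computation gives $\nabla F(\gamma) = q(e^\gamma) - z$, and $F$ is convex as a log-partition function of an exponential family minus a linear term. The prior theorem of \cite{OSS11} furnishes a minimizer $\gamma^\ast$ of $F$ at which $q(e^{\gamma^\ast}) = z$ exactly.

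Second, optimize $F$ only approximately: since we need $q_e \leq (1+\varepsilon)z_e$ rather than equality, an $\varepsilon$-approximate minimizer suffices. A natural iterative procedure is to initialize $\gamma^{(0)} \leftarrow 0$; at each iteration compute $q(e^{\gamma^{(t)}})$ via the matrix-tree theorem and check the constraints $q_e \leq (1+\varepsilon)z_e$; if some edge $e$ is violating, decrement $\gamma_e$ by a step size proportional to $\log(q_e/((1+\varepsilon)z_e))$ and leave the other coordinates unchanged; terminate when no violation remains.

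Third, bound the iteration count. By convexity and a coordinate-descent progress lemma, each update strictly decreases the potential $\Phi(\gamma) := F(\gamma) - F(\gamma^\ast) \ge 0$. An a priori bound $\|\gamma^\ast\|_\infty = O(\log(|V|/\min_e z_e))$ follows from the KKT conditions together with the existence of $\gamma^\ast$ from \cite{OSS11}; combined with $\Phi(\gamma^{(0)}) \leq \log(\text{number of spanning trees of } G) = O(|V|\log|V|)$, this controls how far $\gamma^{(t)}$ can wander and produces an iteration bound of $\poly(|V|, \log(1/\min_e z_e), \log(1/\varepsilon))$.

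The main obstacle is achieving the \emph{logarithmic} dependence on $1/\varepsilon$ and $1/\min_e z_e$ stated in the theorem; a naive gradient-descent analysis yields only polynomial dependence on these parameters. \cite{AGMGS17} overcome this by applying an ellipsoid-style method to the reformulated convex program, with the matrix-tree theorem providing an efficient separation oracle that detects a violated marginal constraint whenever one exists. The doubly-logarithmic convergence of the ellipsoid method in the accuracy parameter, together with the a priori bound on $\|\gamma^\ast\|_\infty$, is exactly what is needed to obtain the claimed running time; the polynomial dependence on $|V|$ then comes from the polynomial cost per oracle call.
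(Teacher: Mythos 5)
This theorem is not proved in the paper at all: it appears with the citation \cite{AGMGS17} and is used as a black box, so there is no internal proof to compare your attempt against. You should therefore judge your sketch against the argument in Asadpour, Goemans, M\k{a}dry, Oveis Gharan, and Saberi, which indeed proceeds by the route you land on in your final paragraph: cast the problem as minimizing the dual convex program $F(\gamma) = \log Z(e^\gamma) - \langle z,\gamma\rangle$, show that an (approximate) minimizer lies in a box of side $\mathrm{poly}(|V|,\log(1/\min_e z_e))$, and run the ellipsoid method with the matrix-tree theorem providing function and gradient evaluations. So the high-level architecture in your opening and closing paragraphs is correct.

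However, your middle two paragraphs are a detour that you yourself abandon, and the sketch as written has real gaps even at the level of the final (ellipsoid) route. First, your coordinate-descent proposal with potential $\Phi$ is not analyzed to any bound, and as you note, a first-order method would give $\mathrm{poly}(1/\varepsilon)$ rather than $\log(1/\varepsilon)$ dependence; those paragraphs should simply be cut. Second, the claim that \cite{OSS11} ``furnishes a minimizer $\gamma^\ast$'' is imprecise: when $z$ lies on the boundary of the spanning tree polytope, the infimum of $F$ is not attained at a finite $\gamma$, and the actual proof handles this by either perturbing $z$ into the interior or restricting to a face; establishing the $\|\gamma^\ast\|_\infty = O(\log(|V|/\min_e z_e))$ box bound requires an argument of its own, not just an appeal to the KKT conditions. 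Third, you do not explain how an $\varepsilon$-approximate minimizer of the scalar objective $F$ translates into the per-edge marginal guarantee $q_e \le (1+\varepsilon) z_e$ simultaneously for all $e$; this is where the actual proof does nontrivial work relating additive error in $F$ to multiplicative error in each marginal. Finally, a small slip: the ellipsoid method has logarithmic, not ``doubly-logarithmic,'' dependence on the accuracy parameter.
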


\begin{fact}[\cite{BBL09}]
  Let $\mu$ be an SR distribution. Let $F \subset E$ be a subset of
  edges. Then the projection of $\mu$ onto $F$, i.e.,
  $\mu_{|F}(A) = \sum_{S: S \cap F = A} \mu(S)$, is also SR.
  Moreover, for any edge $e \in E$, conditioning on $e \in T$ or on $e
  \not \in T$ preserves the SR property.   
\end{fact}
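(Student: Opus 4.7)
The plan is to pass to the generating polynomial $p_\mu(z) := \sum_T \mu(T) \prod_{e \in T} z_e$ and translate the two distributional operations into polynomial operations, then invoke the standard closure properties of real stable polynomials from \cite{BBL09}. Two features of $p_\mu$ are crucial: (i)~by definition, $\mu$ is SR iff $p_\mu$ is real stable, and (ii)~$p_\mu$ is multiaffine (each $z_e$ appears to degree at most $1$), so I may decompose
\[
    p_\mu(z) \;=\; z_e\, A(z_{-e}) + B(z_{-e}), \qquad A = \partial_{z_e} p_\mu, \quad B = p_\mu\big|_{z_e = 0}.
\]

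For the projection statement, I would observe by direct expansion that
\[
    p_\mu(z)\big|_{z_e = 1 \text{ for } e \in E \sm F} \;=\; \sum_{A \sse F} \lp \sum_{T:\, T \cap F = A} \mu(T) \rp \prod_{e \in A} z_e \;=\; p_{\mu|_F}(z),
\]
so $p_{\mu|_F}$ is obtained from $p_\mu$ by substituting the real value $1$ into each variable $z_e$ with $e \notin F$; since real stability is preserved under substitution of a real value (modulo the trivial case of becoming identically zero, which is ruled out here because $p_{\mu|_F}(\mathbf{1}) = 1$), $\mu|_F$ is SR. For the conditioning statement, using the decomposition above, the generating polynomial of $\mu(\,\cdot \mid e \in T)$ (viewed as a distribution where $e$ is deterministically present) is $z_e\, A / \pr_\mu(e \in T)$, a positive multiple of $z_e\, \partial_{z_e} p_\mu$; and the generating polynomial of $\mu(\,\cdot \mid e \notin T)$ is $B / \pr_\mu(e \notin T)$, a positive multiple of $p_\mu\big|_{z_e = 0}$. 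Both remain real stable: real stability is preserved under partial differentiation, under substitution of a real value ($z_e = 0$), and (since $z_e$ is nonzero whenever $\operatorname{Im}(z_e) > 0$) under multiplication by the single variable $z_e$.

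The core of the argument is therefore a dictionary between distributional operations and polynomial operations, invoking the closure theorems for real stability. The one step that requires actual thought is the identification of ``trees containing $e$'' with $\partial_{z_e} p_\mu$, which is a pure consequence of multi-affineness; after that, the SR property for the conditioned distributions transfers for free from the closure properties, and likewise the substitution $z_e \mapsto 1$ handles the projection immediately.
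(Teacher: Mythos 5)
The paper does not prove this statement; it is cited directly to \cite{BBL09} as a black-box fact. Your proposal is correct and is essentially the standard derivation underlying that citation: you translate the two distributional operations into polynomial operations on the multiaffine generating polynomial, and then invoke the closure of real stability under (a)~substitution of a real number (giving both the projection, via $z_e \mapsto 1$ for $e \notin F$, and the conditioning on $e \notin T$, via $z_e \mapsto 0$), (b)~partial differentiation (giving the $e \in T$ part), and (c)~multiplication by a variable $z_e$, which is harmless since $z_e \neq 0$ in the open upper half-plane. The decomposition $p_\mu = z_e\,\partial_{z_e} p_\mu + p_\mu|_{z_e=0}$ (valid by multiaffineness) correctly identifies the relevant pieces. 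The one thing you handle explicitly only for the projection case but not for the conditioning case is the degenerate possibility that the substituted or differentiated polynomial vanishes identically; for conditioning this happens exactly when $\Pr[e \in T] \in \{0,1\}$, in which case the conditioned distribution either does not exist or equals $\mu$, so the claim is vacuous or trivial. That omission aside, the argument is complete and matches the approach of \cite{BBL09}.
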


\begin{theorem}[Negative Correlation,~\cite{OSS11}]\label{thm:SRfacts}
  Let $\mu$ be an SR distribution on spanning trees. 

  \begin{enumerate}
  \item Let $S$ be a set of edges and $X_S = |S \cap T|$, where $T \sim \mu$. Then, 
    $X_S \sim \sum_{i = 1}^{|S|} Y_i$,
    where the $Y_i$ are independent Bernoulli random variables with success probabilities $p_i$ and $\sum_i p_i = \mathbb{E}[X_S]$ . 
  \item For any set of edges $S$ and $e \not \in S$,   
    \begin{OneLiners}
    \item[(i)] $\mathbb{E}_{\mu}[X_S] \leq \mathbb{E}_{\mu}[X_S \mid
      X_e = 0] \leq \mathbb{E}_{\mu}[X_S] + \mathbb{P}_{\mu}(e \in
      T)$, and
    \item[(ii)] $\mathbb{E}_{\mu}[X_S]-1+\mathbb{P}_{\mu}(e \in T) \leq \mathbb{E}_{\mu}[X_S \mid X_e =1] \leq \mathbb{E}_{\mu}[X_S]$.
    \end{OneLiners}
  \end{enumerate} 
\end{theorem}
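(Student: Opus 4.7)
The plan is to prove the two parts separately, relying on the real-stability machinery for Part 1 and a conservation-of-tree-size trick for Part 2.

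For Part 1, I would begin by using closure properties of SR distributions to reduce to a univariate real-rootedness statement. By the projection fact stated just above, $\mu_{|S}$ is also SR, so its multivariate generating polynomial $p_S(z) = \sum_{A \sse S} \mu_{|S}(A) \prod_{e \in A} z_e$ is real stable. Now substitute $z_e = t$ for every $e \in S$; by real stability, this univariate polynomial $q(t) = \sum_{k=0}^{|S|} \Pr[X_S = k]\, t^k$ has only real (nonpositive) roots. Then I would invoke the classical fact (Pitman / Newton / Harper) that a polynomial with nonnegative coefficients summing to $1$ and with only real roots is the probability generating function of a sum of independent Bernoulli random variables, with success probabilities determined by the roots. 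Matching expectations forces $\sum_i p_i = \E[X_S]$.

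For Part 2, I would first dispose of the easy directions using pairwise negative correlation, which is implied by SR: expanding $\E[X_S \mid X_e = b] = \sum_{f \in S} \Pr[f \in T \mid e \in T]^{(b=1)}$ (resp.\ for $b=0$) and using $\Pr[f \in T \mid e \in T] \leq \Pr[f \in T] \leq \Pr[f \in T \mid e \notin T]$ gives the lower bound in (i) and the upper bound in (ii). For the remaining two bounds, I would exploit the fact that every $T \sim \mu$ is a spanning tree, so $|T| = n-1$ is deterministic. Writing $\bar S \coloneqq E \sm (S \cup \{e\})$, we have $X_S + X_{\bar S} + \1[e \in T] = n-1$ pointwise, so
\[
  \E[X_S \mid X_e = 1] = (n-2) - \E[X_{\bar S} \mid X_e = 1].
\]
Applying pairwise negative correlation edge-by-edge to the right-hand side gives $\E[X_{\bar S} \mid X_e = 1] \le \E[X_{\bar S}] = (n-1) - \E[X_S] - \Pr[e \in T]$, which after substitution yields exactly the lower bound $\E[X_S \mid X_e = 1] \ge \E[X_S] - 1 + \Pr[e \in T]$ of (ii). The symmetric upper bound in (i) then follows either by repeating the argument with $X_e = 0$, or directly by solving for $\E[X_S \mid X_e = 0]$ using the total-expectation identity $\E[X_S] = \Pr[e \in T]\,\E[X_S \mid X_e=1] + \Pr[e \notin T]\, \E[X_S \mid X_e=0]$ and plugging in the bound just obtained.

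The main obstacle is Part 1: namely, carefully justifying the passage from ``SR over spanning trees'' to ``univariate real-rooted generating polynomial of $X_S$'' and then to ``sum of independent Bernoullis.'' The real-rootedness needs the projection fact plus the standard observation that real stability is preserved under identifying variables. The final conversion to independent Bernoullis is the classical characterization of real-rooted probability generating functions; I would just cite this rather than reprove it. Part 2 is essentially routine once one notices the conservation identity $X_S + X_{\bar S} + \1[e \in T] = n-1$, which is what upgrades plain pairwise negative correlation into the tighter additive $\Pr[e \in T]$ bounds.
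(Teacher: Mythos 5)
The paper does not prove this theorem; it is imported verbatim from \cite{OSS11} and used as a black box, so there is no in-paper argument to compare against. That said, your proposal is a correct self-contained proof and matches the standard derivations in the strongly Rayleigh literature. Part~1 is the usual chain: projection onto $S$ preserves SR, diagonalization $z_e \mapsto t$ preserves real stability, a univariate real-stable polynomial with nonnegative coefficients is real-rooted with nonpositive roots, and a real-rooted probability generating function factors as $\prod_i\bigl(p_i t + (1-p_i)\bigr)$, whence $\sum_i p_i = q'(1) = \mathbb{E}[X_S]$. Part~2 is also sound: the easy halves follow from pairwise negative correlation summed over $S$, and the two sharper bounds fall out of the pointwise conservation identity $X_S + X_{\bar S} + \mathbbm{1}[e \in T] = n-1$ (which is exactly where the spanning-tree structure enters) combined with negative correlation applied to $X_{\bar S}$; I checked the algebra, e.g.\ $\mathbb{E}[X_S \mid X_e = 1] = (n-2) - \mathbb{E}[X_{\bar S} \mid X_e = 1] \ge (n-2) - \mathbb{E}[X_{\bar S}] = \mathbb{E}[X_S] - 1 + \mathbb{P}(e \in T)$, and the total-expectation identity then yields the upper bound in (i). The only thing I would ask you to make explicit in a write-up is the closure property that identifying variables preserves real stability: it is true and standard (Borcea--Br\"and\'en), but it is not among the SR closure facts actually quoted in the paper, so it needs its own citation.
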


\begin{theorem}[\cite{Hoe56}, Corollary 2.1]\label{thm:Hoeff}
Let $g: \{ 1, \hdots, m\} \rightarrow \rr$ and let $0 \leq p \leq
m$. Let $B_1, \hdots, B_m$ be Bernoulli random variables with
probabilities $p_1^*, \hdots, p_m^*$ that maximize (or minimize) ${\ex[g(B_1+ \cdots + B_m)]}$
over all possible success probabilities $p_i$ for $B_i$ for which $p_1 + \cdots + p_m = p$. Then $\{p_1^*, \hdots, p_m^*\} \in \{0, x, 1\}$ for some $x \in (0, 1)$. 
\end{theorem}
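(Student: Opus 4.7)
The plan is a pairwise exchange argument: I would show that at any optimum $(p_1^*, \ldots, p_m^*)$, no two coordinates lying in the open interval $(0,1)$ can take distinct values, which immediately places the optimum in $\{0, x, 1\}^m$ for some $x \in (0,1)$.

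The first step is to isolate the dependence on any pair of coordinates. Fix $i \neq j$ and let $S_{ij} := \sum_{k \neq i, j} B_k$, which is independent of $B_i, B_j$. For any fixed realization $s$ of $S_{ij}$, a direct expansion of the three-term distribution of $B_i + B_j$ gives
\[\ex[g(B_i + B_j + s)] \;=\; g(s) \,+\, (p_i + p_j)\bigl(g(s{+}1)-g(s)\bigr) \,+\, p_i p_j\, \bigl(g(s{+}2)-2g(s{+}1)+g(s)\bigr).\]
Taking the outer expectation over $s$, the full objective becomes $A + B(p_i + p_j) + C\, p_i p_j$ for constants $A, B, C$ depending only on $g$ and on $(p_k)_{k \neq i,j}$.

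Now I would hold $p_i + p_j = c$ fixed (which preserves the mean constraint $\sum_k p_k = p$) and vary the product $p_i p_j = p_i(c - p_i)$. This product is a concave quadratic on the feasible interval, uniquely maximized at the midpoint $p_i = p_j = c/2$ (always feasible since $c \leq 2$) and minimized at the boundary of the interval, where one of $p_i, p_j$ is forced into $\{0, 1\}$. Consequently, depending on the sign of $C$, local optimality at any maximizer or minimizer of the original objective forces either $p_i^* = p_j^*$ or $\{p_i^*, p_j^*\} \cap \{0, 1\} \neq \emptyset$.

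Finally, applying this pairwise necessary condition across all pairs: if two distinct values $x_1 \neq x_2$ in $(0,1)$ both appeared among $p_1^*, \ldots, p_m^*$, realized say by $p_i^* = x_1$ and $p_j^* = x_2$, then neither disjunct above would hold for the pair $(i,j)$, a contradiction. So at most one value in $(0,1)$ survives, giving the claimed structure $\{p_1^*, \ldots, p_m^*\} \subseteq \{0, x, 1\}$. The only delicate step is the degenerate case $C = 0$ for some pair, where the conditional objective is independent of $p_i p_j$ and the pairwise condition becomes vacuous; here I would argue that mass can be freely redistributed between $p_i^*$ and $p_j^*$ toward the boundary without changing the objective, so one can always post-process an arbitrary optimum into one with the desired form. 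This bookkeeping for the $C = 0$ case is the only place that requires care; the rest is routine algebra.
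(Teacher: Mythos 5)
Your proof is correct. Note, first, that the paper does not prove this statement at all: it is quoted as Corollary~2.1 of Hoeffding's 1956 paper, so there is no internal proof to compare against. Your argument is essentially the classical exchange proof of that result: freeze all coordinates except a pair $(p_i,p_j)$, observe that the objective is an affine function $A+B(p_i+p_j)+C\,p_ip_j$, and note that along the segment $p_i+p_j=c$ the product is a strictly concave quadratic, so strict optimality forces either $p_i^*=p_j^*$ (when the sign of $C$ favors the midpoint) or one coordinate in $\{0,1\}$ (when it favors the endpoints). Two small points deserve explicit mention. First, the theorem as literally phrased (``the probabilities that maximize \ldots{} then $\{p_1^*,\ldots,p_m^*\}\subseteq\{0,x,1\}$'') cannot hold for \emph{every} optimizer — take $g$ constant — so the correct reading, and the one Hoeffding proves and the paper actually uses (extrema are \emph{attained} at such configurations, which suffices for the worst-case bounds in \S3), is the existence version; your $C=0$ post-processing is exactly what converts your argument into that existence statement, so it is not optional bookkeeping but the step that aligns the proof with the true content of the theorem. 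Second, the post-processing should be accompanied by a one-line termination remark: each redistribution along a $C=0$ pair moves at least one coordinate into $\{0,1\}$, coordinates at $\{0,1\}$ are never touched again, so after at most $m$ steps all remaining interior coordinates share a common value $x\in(0,1)$. With those two sentences added, the proof is complete and self-contained, which is arguably a nice addition given that the paper only cites the source.
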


\section{Samplers} \label{sec:samplers}

In this section, we describe the \ME and \MI
samplers for %
graphs that contain no
proper min-cuts. We give bounds on %
certain correlations between edges that will be used in
\S\ref{sec:good-edges} to prove that every edge is ``even'' with constant probability. The
samplers for the case where the graph has an odd number of vertices
are more technical and are deferred to Appendix~\ref{sec:odd-sampler}.

Suppose the graph
$H = (V,E)$ is 4-regular and 4-edge-connected (4EC), contains at least four
vertices, and has no proper min-cuts.\footnote{This implies that $H$ is a
  simple graph: since parallel edges between $u,v$ means that
  $\partial(\{u,v\})$ is a proper min-cut; we use this simplicity of
  the graph often in the arguments of this section.} This means all
proper cuts have six or more edges. 
We are given a dedicated
\emph{external} vertex $r \in V(H)$; the vertices
$I := V \setminus \{r\}$ that are not external are called
\emph{internal}.
(In future sections, this vertex $r$ will be given by a cut hierarchy.) 
Call the edges in $\partial r$ 
\emph{external edges}; all other edges are \emph{internal}. An
internal vertex is called a \emph{boundary vertex} if it is adjacent
to $r$. An edge is said to be \emph{special} if both of its endpoints
are non-boundary vertices.  

We show two ways to sample a spanning tree on $H[I]$, the graph induced on
the internal vertices, being faithful to the marginals $x_e$, i.e.,
$\mathbb{P}_T(e \in T)= x_e$ for
all $e \in E(H) \setminus \partial r$.
Moreover, we want that for each internal edge, both its endpoints have
even degree in $T$ with constant probability. This property will allow
us to lower the cost of the $O$-join in \S\ref{sec:charging-argument}.
While both samplers will satisfy this property, each will do better in
certain cases. The \MI sampler targets special edges; it allows us to
randomly ``hand-pick'' edges of this form and enforce that both of its
endpoints have degree 2 in the tree. The \ME sampler, on the other
hand, relies on maximizing the randomness of the spanning tree sampled
(subject to being faithful to the marginals); negative correlation
properties allow us to obtain the evenness property, and in
particular, better probabilities than \MI for non-special edges, and a
worse probability for the special edges.

Our samplers 
will depend on the parity of $|V|$: when $|V|$ is even, %
the \MI sampler is the %
one given by~\cite[Theorem~13]{HN19}, which we describe in
\S\ref{sec:even-sampler}. %
They left the case of odd $|V|$ as an open problem, and in Appendix \ref{sec:odd-sampler}
we extend their procedure to the odd case. %

\subsection{Samplers for Even $|V(H)|$}
\label{sec:even-sampler}

Since $H$ is $4$-regular and 4EC and $|V(H)|$ is even, setting a value
of $\nf14 = \nf{x_e}{2}$ on each edge gives a solution to the perfect
matching polytope $K_{PM}(H)$ by \Cref{fact:tsp-implies-onetree}.
  
\begin{enumerate}
\item Sample a perfect matching $M$ such that
  $\mathbb{P}(e \in M) = \nf14 = x_e/2$ for all $e \in E(H)$.
\item (Shift) Define a new fractional solution $y$ (that depends on $M$): set
  $y_e = 1$ for $e \in M$, and $y_e = \nf13$ otherwise.  We have
  $y \in K_{TSP}(H)$ (and hence $y|_I \in \Kst(H[I])$ by
  \Cref{fact:tsp-implies-onetree}): %
  indeed, each vertex has 
  $y(\partial v) = 1 + 3\cdot \nf13 = 2$ because $M$ is a perfect
  matching. Moreover, every proper cut $U$ in $H$ has at least
  six edges, so $y(\partial U) \geq |\partial U| \cdot \nf13 \geq 2$. 
  Furthermore, 
  \begin{gather}
    \E_M[ y_e ] = \nf14\cdot 1 + \nf34\cdot \nf13 = \nf12 = x_e. \label{eq:y-is-x}
  \end{gather}

\item Sample a spanning tree faithful to the marginals $y$, using one of two samplers:

  \begin{enumerate}
  \item \ME Sampler: Sample from the max-entropy distribution on
    spanning trees with marginals $y$.
  \item \MI Sampler:
    \begin{enumerate}
    \item \label{item:pick-Mp-even} Color the edges of $M$ using $7$
      colors such that no edge of $H$ is adjacent to two edges of $M$
      having the same color; e.g., by greedily $7$-coloring the
      $6$-regular graph $H/M$. Let $M'$ be one of these color classes
      picked uniformly at random. Hence, $\mathbb{P}(e \in M') = \nf1{28}$,
      and $\mathbb{P}(\partial v \cap M' \neq \emptyset) = \nf17$.
    \item \label{item:beetles-even} For each edge $e = uv \in M'$, let
      $L_{uv}$ and $R_{uv}$ be the sets of edges incident at $u$ and
      $v$ other than $e$. Note that $|L_{uv}| = |R_{uv}| = 3$.
      Place partition matroid constraints $y(L_{uv}) \leq 1$ and
      $y(R_{uv}) \leq 1$ on each of these sets. Finally, restrict the
      partition constraints to the internal edges of $H$; this means
      some of these constraints are no longer tight for the solution
      $y$.
      
    \end{enumerate}
  \item %
    Given   the sub-matching $M' \sse M$, and the
    partition matroid $\cM$ on the internal edges defined using
    $M'$, %
    use~\Cref{lem:sampler} to sample a tree on
    $H[V\setminus \{r\}]$ (i.e., on the internal vertices and edges of
    $H$) with marginals $y_e$, subject to this partition matroid $\cM$.
  \end{enumerate}
\end{enumerate}
Conditioned on the matching $M$, we have %
$\mathbb{P}(e \in T \mid M) = y_e$; now using~(\ref{eq:y-is-x}), we have $\mathbb{P}(e \in T) =
x_e$ for all $e \in (E \setminus
  \partial r)$.

\begin{figure}\centering
\includegraphics[scale=.3]{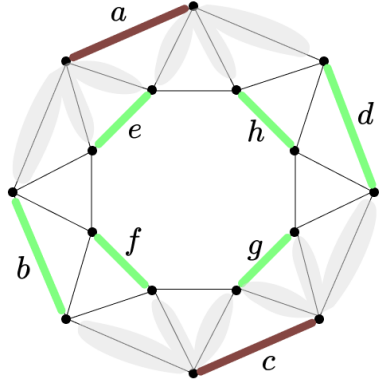}
\caption{The matching $M$ consists of all highlighted edges (both brown
    and green), one possible choice of $M'$ has edges $\{a,c\}$ highlighted in brown, and the
  constraints are placed on the edges adjacent to those in
    $M'$ (marked in gray). }
\label{fig:beetle}
\end{figure}

\subsection{Correlation Properties of Samplers}
\label{sec:corr-props}

Let $T$ be a tree sampled using either the \MI or the \ME sampler. The following claims will be used to prove the evenness property in \S\ref{sec:good-edges}. Each table gives lower bounds on the corresponding probabilities for each sampler. The proofs for $|V(H)|$ odd are in Appendix \ref{sec:odd-sampler}.

\begin{restatable}{lemma}{Oddfg}
  \label{clm:odd1}
  If $f,g$ are internal edges incident to a vertex $v$, then
  \begin{table}[h]
    \centering
    \begin{tabular}{|l|l|l|}
      \hline
      \textbf{Probability Statement}   & \textbf{\MI} & \textbf{\ME}           \\ \hline
      $\mathbb{P}( |T \cap \{f, g\}| = 2)$     & $\nicefrac{1}{9}$    & $\nicefrac{1}{9}$      \\ \hline
      $\mathbb{P}( T \cap \{f, g\} = \{f\})$     & $\nicefrac{1}{9}$    & $\nicefrac{12}{72}$    \\ \hline
    \end{tabular}
  \end{table}
 \end{restatable}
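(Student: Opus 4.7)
The idea is to condition on the randomness of the shifting step: the perfect matching $M$ in the even-$|V(H)|$ case (\S\ref{sec:even-sampler}), or its analog in the odd case (Appendix~\ref{sec:odd-sampler}). Once $M$ is fixed, both the \ME and \MI samplers are faithful to the marginals $y$, so $\pr(e \in T \mid M) = y_e$ for every internal edge $e$. A short case analysis on whether $f$, $g$, or neither lies in $M$ then suffices.

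In the even case, since $f$ and $g$ both meet $v$ and $M$ is a matching, at most one of them lies in $M$. Because $\pr(e \in M) = \nf{x_e}{2} = \nf14$, there are three cases: (A) $f \in M$, probability $\nf14$, forcing $y_f = 1$ (so $f \in T$ surely) and $y_g = \nf13$; (B) $g \in M$, symmetric; (C) neither, probability $\nf12$, with $y_f = y_g = \nf13$.

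For the row $\pr(|T \cap \{f,g\}| = 2)$: in (A), $f$ is forced and marginal faithfulness gives $\pr(g \in T \mid M) = y_g = \nf13$, contributing $\nf14 \cdot \nf13 = \nf1{12}$; case (B) contributes the same amount symmetrically. Summing, $\nf16 \geq \nf19$, independent of case (C) and of which sampler is used. For the row $\pr(T \cap \{f,g\} = \{f\})$: case (A) forces $f \in T$ and has $\pr(g \notin T \mid M) = \nf23$, contributing $\nf14 \cdot \nf23 = \nf{12}{72}$; case (B) contributes $0$ because $g \in T$ surely. This already meets the \ME bound $\nf{12}{72}$, and \emph{a fortiori} the \MI bound $\nf19$.

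For the odd case, I would invoke the odd-parity sampler from Appendix~\ref{sec:odd-sampler}. There, a dummy vertex is attached to restore parity and a local surgery reduces the instance to an even-parity one; an analogous shifting step provides the randomness to condition on. The case split around $v$ has the same flavor: the dominant contribution again comes from the cases in which one of $y_f, y_g$ is forced to $1$, and marginal faithfulness handles the other edge. As in the even case, these ``easy'' cases alone yield the stated thresholds, while the ``hard'' case in which both marginals are fractional contributes nonnegatively.

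\textbf{Main obstacle.} The even-case arithmetic is essentially mechanical because no positive contribution from case (C) is ever needed; the entire budget comes from cases (A) and (B), where one edge is pinned and marginal faithfulness handles the other. The real work lies in the odd case: the surgery locally perturbs the graph around $v$ and the new dummy vertex, so one must check that the analogous cases still carry the right probabilities and the same forcing structure, and in particular that the shifted values $y_f, y_g$ are still in $\{\nf13, 1\}$ with the same unconditional probabilities. Verifying that the $\nf19$ and $\nf{12}{72}$ thresholds survive the surgery is the substantive part, and is why the odd-case proof is postponed to the appendix.
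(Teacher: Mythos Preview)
Your even-case argument is correct and matches the paper's: cases (A) and (B) alone deliver all the needed probability, with no use of case (C).

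The gap is in your odd-case plan. In the odd sampler, after the matching $M$ is drawn, a \emph{local increase or decrease} step perturbs $y$ to $\hat y$ on one edge near the boundary; the perturbed values are not confined to $\{\nf13,1\}$ (a local decrease can send an edge to $0$, a local increase to $\nf23$). The paper handles this by conditioning on the event that neither $f$ nor $g$ is touched by the perturbation, which costs a factor of at least $\nf23$ (their \Cref{lemma:untouched}). After paying this factor, case (A) alone gives only
\[
\Pr(f\in M)\cdot\Pr(f,g\text{ untouched})\cdot\Pr(g\notin T)\ \ge\ \tfrac14\cdot\tfrac23\cdot\tfrac23\ =\ \tfrac19\ =\ \tfrac{8}{72},
\]
which meets the \MI threshold $\nf19$ but falls short of the \ME threshold $\nf{12}{72}$. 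Your claim that ``these `easy' cases alone yield the stated thresholds, while the `hard' case~(C) contributes nonnegatively'' is therefore false for the odd \ME bound: case (C) must supply the missing $\nf{4}{72}$. The paper obtains this by first proving, in the even case, that $\Pr(T\cap\{f,g\}=\{f\}\mid f,g\notin M)\ge\nf29$ via the strong-Rayleigh negative-correlation machinery (\Cref{thm:SRfacts} and \Cref{thm:Hoeff}), and then reusing this bound in the odd case after paying an additional untouched-probability factor of $\nf12$, giving $\nf12\cdot\nf12\cdot\nf29=\nf{4}{72}$.

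So the missing idea is twofold: you need the ``untouched'' lemma to quantify the probability that the surgery leaves $f,g$ alone, and you need a genuine negative-correlation argument (specific to the \ME sampler) in case (C) to close the gap between $\nf8{72}$ and $\nf{12}{72}$. Marginal faithfulness alone does not suffice here.
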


\begin{restatable}{lemma}{Oddefgh}
  \label{clm:odd2}
  If edges $e,f,g,h$ incident to a vertex $v$ are all internal, then
  \begin{table}[H]
    \centering
    \begin{tabular}{|l|l|l|}
      \hline
      \textbf{Probability Statement}  & \textbf{\MI} & \textbf{\ME}            \\ \hline
      $\mathbb{P}( |T \cap \{e,f,g,h\} | = 2 )$   & $\nicefrac{2}{21}$   & $\nicefrac{8}{27}$     \\ \hline
      $\mathbb{P}(|T \cap \{e, f\}| = |T \cap \{g, h\}| = 1)$  & $\nicefrac{4}{63}$   & $\nicefrac{16}{81}$    \\ \hline
    \end{tabular}
  \end{table}
\end{restatable}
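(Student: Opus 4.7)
The plan is to handle the \MI and \ME samplers separately, but both analyses begin by conditioning on the random perfect matching $M$. Since $M$ is a perfect matching on $V(H)$ and all four edges at $v$ are internal, exactly one of $e, f, g, h$ lies in $M$. The first-row event is symmetric across all four edges and the second-row event splits into symmetric sub-cases depending on which pair the matched edge lies in, each handled identically, so without loss of generality I may fix the matched edge to be $e$. The shift then gives $y_e = 1$ and $y_f = y_g = y_h = \nicefrac13$.

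For the \MI sampler I further condition on whether $e \in M'$. By the $7$-coloring construction this event has conditional probability $\nicefrac17$, hence unconditional probability $\nicefrac17$ after averaging over $M$. When it occurs, the partition matroid constraint applied on the $v$-side of $e$ reads $|T \cap \{f, g, h\}| \leq 1$; combined with the preserved marginal sum $y_f + y_g + y_h = 1$ this forces $|T \cap \{f, g, h\}| = 1$ almost surely. Hence $|T \cap \{e,f,g,h\}| = 2$ deterministically in this sub-case, and since $f, g, h$ have equal marginals $\nicefrac13$ the chosen edge is uniform over $\{f, g, h\}$. Two of the three chosen edges (namely $g$ and $h$) put one edge in each of $\{e,f\}$ and $\{g,h\}$, so the conditional probability of the second-row event in this sub-case is $\nicefrac23$. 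Multiplying by $\nicefrac17$ gives $\nicefrac17 \geq \nicefrac{2}{21}$ for event~1 and $\nicefrac{2}{21} \geq \nicefrac{4}{63}$ for event~2 from this sub-case alone; the remaining sub-case (matched edge at $v$ not in $M'$) contributes only nonnegative probability.

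For the \ME sampler, conditional on $M$, the max-entropy distribution is strongly Rayleigh and places $e$ in $T$ with probability $1$. By \thm{SRfacts}(1), $|T \cap \{f, g, h\}|$ is distributed as a sum of three independent Bernoullis with expectations summing to $1$. Applying \thm{Hoeff} to the indicator $\mathbbm{1}\{\cdot = 1\}$ reduces the lower bound for event~1 to a minimization over $\{0, x, 1\}$-valued Bernoulli configurations with sum $1$, and enumerating these configurations yields the claimed $\nicefrac{8}{27}$. For event~2, which (given $Y_e = 1$) is exactly $\{Y_f = 0\} \cap \{Y_g + Y_h = 1\}$, I would expose $Y_f$ first, use that conditioning on a single edge preserves strong Rayleigh (the BBL fact stated above) to keep the conditional distribution of $(Y_g, Y_h)$ strongly Rayleigh, apply \thm{SRfacts}(2) to bound $\mathbb{E}[Y_g + Y_h \mid Y_f = 0]$, and then re-apply \thm{Hoeff} to lower-bound $\mathbb{P}(Y_g + Y_h = 1 \mid Y_f = 0)$ at its worst-case configuration.

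The main obstacle is the pair event for the \ME sampler. Event~1 reduces to a single application of \thm{Hoeff} to the degree sum at $v$, but event~2 couples two disjoint subsets of $\partial v$ and cannot be resolved by the degree sum alone. The two-stage exposure coupled with an extremal enumeration, and in particular verifying that every $\{0, x, 1\}$-configuration of the conditional $(Y_g, Y_h)$-marginals produces a bound of at least $\nicefrac{16}{81}$, is the most delicate part of the argument.
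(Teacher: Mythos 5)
Your even-case analysis matches the paper's approach for \Cref{clm:odd2} quite closely: for \MI, condition on $e \in M \cap M'$ and observe that the partition constraint at the $v$-side of $e$ plus the preserved marginal sum forces $|T \cap \{f,g,h\}| = 1$ deterministically, with the selected edge uniform over the triple; for \ME, condition on the matched edge, use \Cref{thm:Hoeff} on $X_{\{f,g,h\}}$ for the first event, and for the second event expose $Y_f$ first, invoke the SR-preservation of conditioning and \Cref{thm:SRfacts}(2) to pin $\E[X_{\{g,h\}} \mid f \notin T]$ in $[\nf23, 1]$, then apply \Cref{thm:Hoeff} again. This is exactly the paper's argument for the even case.

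Two issues, however. First, a small numerical imprecision: for \ME event 1, you state that "enumerating these configurations yields the claimed $\nf{8}{27}$." Running the \Cref{thm:Hoeff} extremal enumeration for three Bernoullis with expectations summing to $1$, the minimizer is $(\nf13,\nf13,\nf13)$, which gives $3 \cdot \nf13 \cdot (\nf23)^2 = \nf49$, not $\nf{8}{27}$. Since $\nf49 > \nf{8}{27}$ the conclusion still holds, but the worst-case value from the enumeration is $\nf49$; the $\nf{8}{27}$ you were targeting is the weaker bound from the odd case.

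Second, and more substantively: your argument handles only the case $|V(H)|$ even. You condition on a perfect matching $M$ of $V(H)$, which does not exist when $|V(H)|$ is odd. The lemma's stated constants ($\nf2{21}$, $\nf4{63}$, $\nf{8}{27}$, $\nf{16}{81}$) are precisely those dictated by the odd case (the even case gives the strictly better $\nf17$, $\nf2{21}$, $\nf49$, $\nf{8}{27}$). The odd case requires the modified construction in Appendix~\ref{sec:odd-sampler}: split the external vertex $r$ into $r_1, r_2$ to form $\Hh$, sample a perfect matching of $\Hh$, perform a local increase or decrease to repair feasibility of $\yh|_I$ in the spanning-tree polytope, and then argue that with probability at least $\nf23$ none of the four edges at $v$ are perturbed (using \Cref{lemma:untouched}, which relies on $v$ being non-boundary), after which the even-case analysis applies. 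This $\nf23$ conditioning factor is exactly what produces the weaker constants in the table. Without addressing this, the proposal proves a related but weaker statement and does not establish the claimed bounds in full generality.
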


\begin{restatable}{lemma}{OddGen}
  \label{clm:odd4}
  For an internal edge $e = uv$: 
  \begin{OneLiners}
  \item[(a)] \label{item:odd4a} if both endpoints are non-boundary vertices, then
    \begin{table}[H]
      \centering
      \begin{tabular}{|l|l|l|}
        \hline
        \textbf{Probability Statement}                                                                                   & \textbf{\MI} & \textbf{\ME}            \\ \hline
        $\mathbb{P}( |\partial_T(u)| = |\partial_T(v)| = 2)$                  & $\nicefrac{1}{36}$   & $\nicefrac{128}{6561}$ \\ \hline
      \end{tabular}
    \end{table}
 
  \item[(b)] \label{item:odd4c} if both $u,v$ are boundary vertices, then
    \begin{table}[H]
      \centering
      \begin{tabular}{|l|l|l|}
        \hline
        \textbf{Probability Statement}                                                                                   & \textbf{\MI} & \textbf{\ME}            \\ \hline
        $\mathbb{P}($exactly one of $u,v$ has odd degree in $T)$ & $\nicefrac{1}{9}$    & $\nicefrac{5}{18}$   \\ \hline
      \end{tabular}
    \end{table}
  \end{OneLiners}
\end{restatable}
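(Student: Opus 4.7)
The plan is to handle each part and each sampler separately. In all cases the analysis proceeds by conditioning on the randomness of the matching step (the choice of $M$, and for MI also the sub-matching $M'$), which reduces the problem to a well-understood conditional distribution on the spanning tree $T$ with predictable marginals.

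For part (a), MI sampler: Since $u$ and $v$ are non-boundary, each of $L_{uv}$ and $R_{uv}$ consists of three internal edges of $y$-value $\nicefrac{1}{3}$, so the restricted partition matroid constraints $y(L_{uv}) \le 1$ and $y(R_{uv}) \le 1$ are tight. Conditioning on $e \in M'$ gives $y_e = 1$ and hence $e \in T$; integrality of the matroid intersection polytope then forces exactly one edge from each of $L_{uv}, R_{uv}$ into $T$, so $\deg_T(u) = \deg_T(v) = 2$. Computing $\Pr(e \in M')$ via $\Pr(e \in M) \cdot \Pr(M' \text{ picks } e\text{'s color}) = \nicefrac14 \cdot \nicefrac17 = \nicefrac{1}{28}$ already beats $\nicefrac{1}{36}$ in the even case, and the odd-$|V(H)|$ sampler (Appendix~\ref{sec:odd-sampler}) gives a slightly weaker but still sufficient bound.

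For part (a), ME sampler: Condition on $e \in M$, an event of probability $\nicefrac14$; then $y_e = 1$ forces $e \in T$, and the six other edges in $L_{uv} \cup R_{uv}$ each have marginal $\nicefrac13$. The target event becomes $|T \cap L_{uv}| = 1$ and $|T \cap R_{uv}| = 1$. By the SR property of the max-entropy distribution, Theorem~\ref{thm:SRfacts}(1) gives that $|T \cap L_{uv}|$ is distributed as $Y_1 + Y_2 + Y_3$ for independent Bernoullis with $\sum p_i = 1$, so Hoeffding (Theorem~\ref{thm:Hoeff}) implies $\Pr(\sum Y_i = 1) \ge \nicefrac49$ (worst case at $p_i = \nicefrac13$). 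To handle the joint event, I condition further on the realization of $T \cap L_{uv}$: this preserves SR on the conditional distribution, and Theorem~\ref{thm:SRfacts}(2) controls how the conditional marginals on $R_{uv}$ shift away from $\nicefrac13$. Reapplying Hoeffding to the conditional SR distribution on $R_{uv}$ and optimizing over the worst-case conditional marginal sum yields the bound $\nicefrac{128}{6561}$ after multiplying by $\nicefrac14$.

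For part (b), both endpoints are boundary, so $L_{uv}$ and $R_{uv}$ each contain fewer than three internal edges and the MI partition matroid constraints are slack. For the MI sampler, I case-split on the external degrees of $u$ and $v$ (each in $\{1,2,3\}$) and on the status of $e$ with respect to $M, M'$; in each branch the number of internal edges at $u$ and $v$ is fixed, and the partition matroid constraints still give an upper bound on $\deg_T(u)$ whenever an $M'$-edge is incident to $u$, which is enough to compute the mismatched-parity probability directly. For the ME sampler, I translate parity through the identity $\1[D_w \text{ odd}] = \tfrac{1}{2}\bigl(1 - \prod_{f \ni w}(1 - 2\,\1[f \in T])\bigr)$ for $w \in \{u,v\}$ and expand the product for the joint parity event; each resulting term is an expectation of a product of edge indicators, which is controlled using the negative-association bounds of Theorem~\ref{thm:SRfacts} together with the explicit marginals of $y$. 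The main obstacle throughout is the joint analysis for ME: Hoeffding gives clean one-dimensional marginal estimates for SR distributions, but the equality event $\{\deg_T(u) = 2\}$ is not monotone, so we cannot simply multiply single-vertex bounds. The key technical step is to decouple the events at $u$ and $v$ by conditioning on the edge $e$ (and on $T \cap L_{uv}$), exploit SR-preservation under conditioning, and then reapply Hoeffding to the conditional distribution.
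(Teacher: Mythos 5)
Your argument for part~(a) matches the paper's: for \MI, condition on $e\in M'$ (probability $\nicefrac1{28}$) and let the tight partition constraints force degree $2$ at both endpoints; for \ME, condition on $e\in M$, bound $\Pr[|T\cap L_{uv}|=1]\ge\nicefrac49$ by Hoeffding, then condition on the specific realization of $T\cap L_{uv}$, use SR-preservation under conditioning and \Cref{thm:SRfacts}(2) to bound how far $\ex[X_{R_{uv}}]$ can drift, and reapply \Cref{thm:Hoeff}. That is exactly the paper's even-case proof of \Cref{clm:odd4}(a).

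Part~(b) has genuine problems. First, the case-split ``on the external degrees of $u$ and $v$ (each in $\{1,2,3\}$)'' does not reflect the setting: since $H$ (respectively $G\dip S$ for a degree set $S$) has no proper min-cuts, it has no parallel edges (footnote in \S\ref{sec:samplers}, and property~3 of $\cT$), so every boundary vertex has external degree \emph{exactly one}. There is nothing to case-split on, and relying on the partition-matroid constraints here is a dead end: when both endpoints are boundary, the restricted constraints $y(L_{uv})\le1$, $y(R_{uv})\le1$ become slack, and they are not needed. The paper's \MI argument for~(b) is sampler-agnostic: condition on $e\in M$, observe each of $u,v$ has exactly two other internal $\nicefrac13$-valued edges, note at least one of these four must enter $T$ by connectivity, and since their expected count is $\nicefrac43$ Markov gives $\Pr[\ge2\text{ chosen}]\le\nicefrac13$, hence $\Pr[\text{exactly one}]\ge\nicefrac23$ conditionally and $\ge\nicefrac16$ unconditionally. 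Your matroid-based route as written would need substantial reworking to even be well posed.

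Second, the \ME argument for~(b) via the parity identity $\1[D_w\odd]=\tfrac12\bigl(1-\prod_{f\ni w}(1-2\,\1[f\in T])\bigr)$ is a different route from the paper, but it is not supported by the tools you cite. Expanding the product for the joint event ``exactly one of $u,v$ odd'' produces signed sums of expectations of products of edge indicators; \Cref{thm:SRfacts} gives bounds on conditional \emph{means} $\ex[X_S\mid X_e]$, not on $\ex\bigl[\prod_i\1[f_i\in T]\bigr]$ with alternating signs, and negative association controls only one direction of such product expectations. The paper instead exploits the structural observation that $e$ contributes equally to both degree parities, so the event reduces to $|T\cap\{a,b\}|$ and $|T\cap\{c,d\}|$ having opposite parities, and then does a finite case analysis on $|M\cap\{e,a,b,c,d\}|$ with \Cref{thm:SRfacts} and \Cref{thm:Hoeff} applied case by case. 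If you want to salvage the Fourier route you would need a genuine pairwise-negative-correlation statement for the signed terms, which you have not established.

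Finally, the stated constants $\nicefrac1{36}$, $\nicefrac{128}{6561}$, $\nicefrac19$, $\nicefrac5{18}$ are the \emph{odd}-$|V(H)|$ bounds; the even case gives slack. Your proposal only sketches the even case and defers the odd case to ``gives a slightly weaker but still sufficient bound,'' but the odd-case proof is where the work is: the perturbed solution $\yh$, conditioning on the local increase/decrease not touching the relevant edges, and a fresh case analysis (\S\ref{sec:corr-prop-odd}). That part of the lemma is not addressed.
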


\subsubsection{Correlation Properties: $|V(H)|$ even}
\label{sec:corr-prop-even}

We now prove the correlation properties for the even case: the
numerical bounds for the even case are better than those claimed above (which will be dictated by the
proofs of the odd case; see Appendix \ref{sec:odd-sampler}).

\begin{proof}[Proof of \Cref{clm:odd1}, Even Case] \label{3.1evencase}
  To prove $\pr(|T\cap\{f,g\}|=2) \geq \nf19$, we need only knowledge of the marginals and not the specific sampler. If one of $f,g$ lies in $M$ (which happens \wp \nf12), then its $y$-value equals $1$ and it belongs to $T$ w.p. $1$, and the
  other edge is chosen \wp $\nf13$, making the
  unconditional probability $\nf12 \cdot \nf13 =
  \nf16 \geq \nf19$. %
  Similarly, conditioned on $f$ lying in $M$ and hence belonging to
  $T$, edge $g$ is not chosen \wp $1 - y_e = \nf23$, so $\pr(T\cap \{f,g\} = \{f\}) \geq \nf14 \cdot \nf23 = \nf16 \geq \nf19$.
  
  \emph{The \ME claim: } It remains to show that $\pr(T\cap \{f,g\}=\{f\})\geq \nf{12}{72}$. Conditioned on $f \in M$, we have $g \not \in T$ \wp $\nicefrac{2}{3}$. Now condition on neither $f$ nor $g$ in $M$ (happens \wp $\nicefrac{1}{2}$). By \Cref{thm:SRfacts}, 
\[ \nicefrac{1}{3} \leq \ex[f \in T \mid g \not \in T] \]
By \Cref{thm:Hoeff}, $\pr(f \in T \mid g \not \in T) \geq \nicefrac{1}{3}$, so $\pr(f \in T \land g \not \in T) \geq \nf{1}{3} \cdot \nf{2}{3} = \nf{2}{9}$.
Putting all of this together, we get 
\[ \pr(f \in T \land g \not \in T) \geq \nicefrac{1}{4} \cdot \nicefrac{2}{3} + \nicefrac{1}{2} \cdot \nicefrac{2}{9} = \nicefrac{5}{18} \geq \nf{12}{72}. \qedhere \]
\end{proof}

\begin{proof}[Proof of \Cref{clm:odd2}, Even Case] \label{3.2evencase}
\emph{The \MI claims:}  Each perfect matching $M$ contains one of these four edges in
  $\partial v$. Say that edge is $e$. If $e$ also belongs to $M'$ (\wp
  $\nf17$), then $T$ contains exactly one of $\{f,g,h\}$. Since $\nf17
  \geq \nf2{21}$, this gives us the first bound in \Cref{clm:odd2}. Moreover,
  the probability of this edge in $T$
  belonging to the other pair (in this case, $\{g,h\}$) is
  $\nf23$. Hence $\mathbb{P}( |T \cap \{e,f\}| = 1 \land |T \cap \{g,h\}| =
  1) \geq \nf2{21} > \nf4{63}$, giving us the other bound in \Cref{clm:odd2} for the \MI sampler.

  \emph{The \ME claims:}
  For the first bound in \Cref{clm:odd2}, we have that one of the four
  edges must be in $M$ and have $y$-value 1. W.l.o.g., call that edge
  $e$. The other three, $f, g, h$, will be $\nicefrac{1}{3}$-valued
  edges. Since
  $\ex[X_{f, g, h}] = 1$, we may apply \Cref{thm:Hoeff} and obtain a lower bound of
  \[ \mathbb{P}(X_{f, g, h} = 1) \geq 3 \cdot \nicefrac{1}{3} \cdot
    (\nicefrac{2}{3})^2 = \nicefrac{4}{9} \geq \nf{8}{27}. \] For the second bound in
   \Cref{clm:odd2}, call $S_1 = \{e, f\}$ and $S_2 = \{g,
  h\}$. W.l.o.g., we once again label $e \in
  M$. Then, $\mathbb{P}(f \not \in T) = \nicefrac{2}{3}$. So for $S_2$, we
  condition on $f \not \in T$. Then by \Cref{thm:SRfacts},
  \[ \nicefrac{2}{3} \leq \ex[X_{S_2} \mid f \not \in T] \leq
    \nicefrac{2}{3} + \nicefrac{1}{3} = 1. \] Hence, we use
  \Cref{thm:Hoeff} again 
  and obtain $\mathbb{P}(X_{S_2} = 1 \mid f \not \in T) \geq 2 \cdot \nicefrac{1}{3} \cdot \nicefrac{2}{3} = \nicefrac{4}{9}$. 
  In total, we obtain 
  \[ \mathbb{P}(|T \cap \{e, f\}| = |T \cap \{g, h\}| = 1) \geq
    \nicefrac{2}{3} \cdot \nicefrac{4}{9} =
    \nicefrac{8}{27} \geq \nf{16}{81}. \qedhere \]
\end{proof}

\begin{proof}[Proof of \Cref{clm:odd4}a, Even Case] \label{3.3Aevencase}
  \emph{The \MI claims:}
      The event happens when $e \in M'$, which
  happens \wp $\nf1{28}$, which is at least $\nf1{36}$. 
  
 \emph{The \ME claims:} %
  Condition on $e \in M$. Let $S_1 = \partial(u) \setminus e$ and $S_2 = \partial(v) \setminus e$.  Denote $S_1 = \{a, b, c\}$. Lower bound $\mathbb{P}(|S_1 \cap T|=1)$ using  \Cref{thm:Hoeff}: $\mathbb{E}[|S_1 \cap T|]= 3 \cdot \nf13 = 1$, so $ \mathbb{P}(|S_1 \cap T| = 1) \geq 3 \cdot \nicefrac{1}{3} \cdot \left(\nicefrac{2}{3}\right)^{2/3} = \nicefrac{4}{9}.$
Consider the distribution over the edges in $S_2$ conditioned on $a \in T$; this distribution is also SR. By \Cref{thm:SRfacts},
$ \nicefrac{1}{3} \leq \mathbb{E}[X_{S_2} \mid X_a = 1] \leq 1.$ Applying \Cref{thm:SRfacts} twice more, 
\[\nicefrac{1}{3} \leq \mathbb{E}[X_{S_2} \mid X_a = 1, X_{b,c}=0] \leq 1 + \nicefrac{1}{3} + \nicefrac{1}{3} = \nicefrac{5}{3}.\] By \Cref{thm:Hoeff}, $\mathbb{P}(X_{S_2} = 1 \mid X_a = 1, X_{b,c}=0) \geq 3 \cdot \nicefrac{1}{9} \cdot \left(\nicefrac{8}{9}\right)^2 = \nicefrac{64}{243}$.
Using symmetry, we obtain $\mathbb{P}(X_{S_2} = 1 \land X_{S_1} = 1 \land e \in M) \geq \nicefrac{64}{243} \cdot \nicefrac{4}{9} \cdot \nf14 \geq \nf{128}{6561}.$ 
\end{proof}

\begin{proof}[Proof of \Cref{clm:odd4}b, Even Case] \label{3.3bevencase}
  \emph{The \MI claims:}
For part~(b), suppose $e \in M$ (\wp $\nf14$), then each of
  $u,v$ have two other internal edges, each with $y$-value
  $\nf13$. Let us say the good cases are when exactly one of these
  four is chosen; exactly one of $u,v$ has degree $2$ and the other
  has degree $1$ in these cases. We cannot choose zero of these four
  edges, because of the connectivity of $T$, so all bad cases choose
  at least two of these four. Given the $y$-values of $\nf13$ on all
  four edges, the expected number of these edges chosen are $\nf43$,
  so the the probability of a bad case at most $\nf13$. This means
  that with probability at least $\nf14 \cdot (1- \nf13) = \nf1{6}$,
  exactly one of $u,v$ has odd degree in $T$. Since $\nf16 \geq
  \nf19$, we have proven~(b).

  \emph{The \ME claims:}
  Observe that the edge $e$ will always contribute either: 0 to both
  the degree of $u$ in $T$ and the degree of $v$ in $T$ OR 1 to both of these degrees. Let $a,b$ be
  the two internal edges incident to $u$ and $c,d$ be the two edges
  incident to $v$.

  \begin{enumerate}%
  \item \emph{Case 1: $e \in M$.} This implies $a, b, c, d$ are all
    $\nicefrac{1}{3}$-valued edges, so
    $\ex[X_{a,b,c,d}] = \nicefrac{4}{3}$.
    Note that we must have $X_{a,b,c,d} \geq 1$. Therefore,  
    $\mathbb{P}(X_{a, b, c, d} = 1) \geq \nicefrac{2}{3}$.
    
  \item \emph{Case 2: $e  \not \in M$, exactly two of $a,b,c,d$ are in $M$.} W.l.o.g., say $a,c \in M$, so $a$ and $c$ are in $T$. 
    $\ex[X_{b, d}] = \nicefrac{2}{3}$,
    which implies (by \Cref{thm:Hoeff}) that
    $\mathbb{P}(X_{b, d} = 1) \geq \nicefrac{4}{9}$.
    
  \item \emph{Case 3: $e \not \in M$, exactly one of $a,b,c,d$ is in $M$.}
    W.l.o.g., say $a \in M$, so we have 
    $\ex[X_{b, c, d}] = 1$.
    We condition on $e \not \in T$, which happens w.p. \nf23. Since we sample a spanning tree on the internal vertices, this means that at least one of $c,d$ is in $T$. W.l.o.g., say $c \in T$. So in order to bound the probability that the internal parities of $u,v$ are different, equivalently we would like to bound $\pr(X_{b,d}=1 \mid e \not \in T)$. Since 
    $\nicefrac{2}{3} \leq \mathbb{E}[X_{b,d} \mid e \not \in T] \leq \nicefrac{2}{3} + \nicefrac{1}{3}$, 
     \Cref{thm:Hoeff} implies that $\mathbb{P}(X_{b,d}=1 \mid e \not \in T) \geq \nicefrac{4}{9}$. 
    So removing the conditioning on $e \not \in T$ gives a lower bound of $\nf{2}{3} \cdot \nf{4}{9} = \nf{8}{27}$.
  \end{enumerate}
  Taking the minimum of the bounds $\nf23, \nf49$, and $\nf8{27}$ in the three cases gives 
  \[\pr(\text{exactly one of }u,v \text{ has odd degree in }T) \geq \nf{8}{27} \geq \nf{5}{18}. \qedhere\]
\end{proof}

\section{Sampling Algorithm for General Solutions; and Cut Hierarchy}
\label{sec:min-cut-cactus}

Now that we can sample a spanning tree from a graph
with no proper min-cuts, we introduce the algorithm to sample an
$r_0$-tree from a 4-regular, 4EC graph, perhaps with proper min-cuts.

\subsection{The Algorithm}

Assume that the graph $G = (V,E)$ has a set of three special
vertices $\{r_0, u_0, v_0\}$, with each pair $r_0, u_0$ and $r_0, v_0$
having a pair of edges between them. This is without loss of
generality (used in line \ref{endcycle}). Define a \emph{double cycle}
to be a cycle graph in which each edge is replaced by a pair of
parallel edges, and call each such pair \emph{partner edges}.

\begin{algorithm} 
\caption{Sampling Algorithm for a Half-Integral Solution}\label{algo}
\begin{algorithmic}[1]

\State \textbf{let} $G$ be the support graph of a half-integral solution $x$. 
\State \textbf{let} $T = \emptyset$. 
\While{there exists a proper tight set of $G$ that is not crossed by another proper tight set} 
\State \textbf{let} $S$ be a minimal such set (and choose $S$ such that $r_0 \not \in S$). \label{criticalsets}
\State Define $G' = G / (V \setminus S)$. \label{localmulti}
\If{$G'$ is a double cycle}
\State Label $S$ a \emph{cycle set}. 
\State \textbf{sample} a random edge from each set of partner edges in $G[S]$; add these edges to $T$. 
\Else \hspace{0.1cm}// \textit{\small{$G'$ has no proper min-cuts (\Cref{clm:dichotomy})}}. \label{G'sec3form}
\State Label $S$ a \emph{degree set}. If $G' = K_5$, label $S$ a \emph{$K_5$ degree set}, and else a \emph{non-$K_5$ degree set}.
\If{$G'=K_5$} 
\State \textbf{sample} a random path on $G[S]$
\Else 
\State W.p.\ $\lambda$, let $\mu$ be the \ME distribution over $E(S)$ 
\State W.p.\ $1 - \lambda$, let $\mu$ be the \MI distribution. 
\State \textbf{sample} a spanning tree on $G[S]$ from $\mu$ and add its edges to $T$. 
\EndIf
\EndIf
\State \textbf{let} $G = G / S$ \label{algocontract}
\EndWhile
\State Due to $r_0, u_0, v_0$, at this point $G$ is a double cycle
(\Cref{clm:dcycle}). Sample one edge between each pair of adjacent vertices in $G$.  \label{endcycle}
\end{algorithmic}
\end{algorithm}

The sampling algorithm appears as Algorithm \ref{algo}, and is very
similar to that in \cite{KKO19}:
they refer to the sets in line \ref{criticalsets} as \textit{critical sets}. Since $G$ is a 4-regular, 4EC graph at every stage of the algorithm, if $|S| = 2$ or 3, then $S$ must be a cycle set, whereas if $|S| \geq 4$, then $S$ may be a degree or cycle set. 
The following two claims are proved in Appendix \ref{welldefinedalgo}, and show that Algorithm~\ref{algo} is well-defined.

\begin{claim} \label{clm:dcycle}
The graph remaining at the end of the algorithm (line \ref{endcycle}) is a double cycle.
\end{claim}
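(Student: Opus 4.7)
The plan is to argue that \Cref{algo} maintains the invariant that $G$ is 4-regular and 4-edge-connected (4EC), and that the while loop exits precisely when every proper tight set of $G$ is crossed by another proper tight set; from there, I will show that such a $G$ (when nontrivial) must be a double cycle.

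First, I would verify the loop invariant. Contracting a tight set $S$ with $|\partial S|=4$ on line~\ref{algocontract} creates a vertex of degree $4$, so 4-regularity is preserved, and 4-edge-connectivity is preserved under contractions since any cut in $G/S$ lifts to a cut of at least the same size in $G$. The selection rule in line~\ref{criticalsets} enforces $r_0 \notin S$, so the vertex $r_0$ along with its two pairs of parallel edges to the two distinguished neighbors (the current images of $u_0$ and $v_0$) persists throughout the execution. Consequently, whenever $|V(G)|\geq 4$, the set $\{r_0,u_0\}$ is a proper tight set that crosses $\{r_0,v_0\}$, so at least one proper tight set is always present while the loop is non-trivially active.

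Second, I would prove the structural claim: any 4-regular 4EC multigraph in which every proper tight set is crossed by another proper tight set, and which has at least one proper tight set, is a double cycle. Let $S$ be a minimum-size proper tight set and let $S'$ be a proper tight set crossing it. By submodularity and posi-modularity of the cut function combined with 4-edge-connectivity, each of $S\cap S'$, $S\cup S'$, $S\setminus S'$, $S'\setminus S$ is tight. Minimality of $|S|$ then forces $|S\cap S'|=|S\setminus S'|=1$, hence $|S|=2$; writing $S=\{a,b\}$ and $S'=\{b,c\}$, the tightness $|\partial\{a,b\}|=4$ combined with 4-regularity pins down exactly two parallel edges between $a$ and $b$, and likewise between $b$ and $c$, fully accounting for $b$'s four edges. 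The set $\{a,b,c\}$ is then tight (since $\partial\{a,b,c\}\geq 4$ forces $m(a,c)=0$) and proper for $|V|\geq 5$; applying the crossing hypothesis to $\{a,b,c\}$ and tracing submodularity through its crossing witness yields a new size-2 tight set $\{c,d\}$. Iterating builds a chain $a,b,c,d,\ldots$ of vertices joined by pairs of parallel edges, which must close into a cycle covering all of $V(G)$: any vertex outside would be disconnected from the cycle since each cycle vertex exhausts its degree on its two cycle neighbors, violating 4EC. The base case $|V(G)|=3$ is handled directly: 4-regularity together with the two pairs of parallel edges at $r_0$ forces $G$ to be the triangle with each edge doubled, which is a double cycle.

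The main obstacle is the iteration step, where after fixing $\{a,b\}$ and $\{b,c\}$ one must show that a new size-2 tight set $\{c,d\}$ exists, i.e., that $c$'s two edges into $V\setminus\{a,b,c\}$ cannot go to two distinct vertices. I would handle this by case analysis on the crossing witness $T$ of $\{a,b,c\}$: the possibilities for $T\cap\{a,b,c\}$ are $\{\{a\},\{c\},\{a,b\},\{b,c\}\}$ (the set $\{a,c\}$ is not tight since $m(a,c)=0$ gives $\partial\{a,c\}=8$), and in each case the constraint $\partial T=4$ combined with the already-determined edge structure at $a,b,c$ forces $c$'s two extra edges to be parallel to a single vertex $d$ lying in $T$, so $\{c,d\}$ is tight and the chain extends.
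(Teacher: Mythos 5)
Your proof takes a genuinely different route on the key structural step. The paper's proof of this claim is short: it splits on whether $G_f$ has a proper tight set; in the ``no'' case, it uses the double edges at $r_0$ to force $|V(G_f)|\le 3$ and checks those cases directly, and in the ``yes'' case it simply invokes \Cref{crossingtightsets}, i.e.\ Lemma~2 of~\cite{FF}, as a black box. Your proposal, by contrast, attempts to \emph{re-prove} that structural fact from scratch. Your preliminary observations are all sound: contraction of a tight set preserves 4-regularity and 4-edge-connectivity; $r_0$ and its two pairs of parallel edges persist (and the images of $u_0,v_0$ never merge, since such a merge would require contracting a set whose boundary is all four of $r_0$'s edges, i.e.\ all of $V\setminus\{r_0\}$, which is not proper); hence $\{r_0,u_0\}$ and $\{r_0,v_0\}$ are crossing proper tight sets whenever $|V(G_f)|\ge4$; and the uncrossing argument showing the minimum proper tight set has size $2$ with a doubled edge is correct.

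However, there is a genuine gap in the iteration step, which you yourself flag as the main obstacle. First, your case list for $T\cap\{a,b,c\}$ omits $\{b\}$ (which is in fact impossible, since $b$'s four edges all leave $T$ and then $T\setminus\{b\}$ would be disconnected, but this needs to be said), and the reason you give for excluding $\{a,c\}$ is invalid: $T\cap\{a,b,c\}=\{a,c\}$ does not require $\{a,c\}$ itself to be tight, only $T$. More importantly, the claimed conclusion — that ``in each case the constraint $\partial T=4$ combined with the already-determined edge structure at $a,b,c$ forces $c$'s two extra edges to be parallel to a single vertex $d$ lying in $T$'' — does not follow. In the cases $T\cap\{a,b,c\}=\{a\}$ or $\{a,b\}$ the constraint $|\partial T|=4$ gives information about where $a$'s two free edges go, not $c$'s; and even in the case $T\cap\{a,b,c\}=\{c\}$ or $\{b,c\}$, the bound $|\partial T|=4$ together with connectivity of $T\setminus\{b,c\}$ only tells you that $c$'s two free edges go \emph{into} $T$, not that they land on the same vertex. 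So the chain-extension step is not established by the argument as written; it would require a more careful uncrossing argument (for instance, working with the family of all minimum-size proper tight sets containing $c$) or a different decomposition, which is essentially the content of the cited lemma from~\cite{FF}.
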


\begin{claim} \label{clm:dichotomy}
In every iteration in Algorithm~\ref{algo}, $G'$ is either a double cycle or a graph with no proper min-cuts. 
\end{claim}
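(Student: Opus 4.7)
The plan is to establish the dichotomy by assuming $G'$ has at least one proper min-cut and showing that in that case $G'$ must be a double cycle (otherwise the claim holds trivially).

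First, I would verify that $G'$ is $4$-regular and $4$-edge-connected. Since $S$ is tight in $G$, the contracted vertex has degree $|\partial S| = 4$; the remaining vertices retain their degree $4$ from $G$. Every cut of $G'$ lifts to a cut of $G$ of the same size, so $4$-edge-connectivity is inherited from $G$.

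Second, I would show that every proper tight set of $G'$ is crossed in $G'$ by another proper tight set. Let $T'$ be a proper tight set of $G'$; by complementing in $G'$ if necessary, assume $T'$ does not contain the contracted vertex $r_c$. Then $T'$ corresponds to a set $T \subsetneq S$ with $2 \leq |T| \leq |S|-1$, which (using $|S| \leq |V(G)|-2$) is also a proper tight set of $G$ strictly inside $S$. By the minimality of $S$ among uncrossed proper tight sets, $T$ is crossed in $G$ by some proper tight $A$. Since $S$ is uncrossed, $A$ stands in one of four non-crossing relations to $S$: the cases $A \subseteq V \setminus S$ (forcing $A \cap T = \emptyset$) and $S \subseteq A$ (forcing $T \setminus A = \emptyset$) both contradict the crossing of $A$ and $T$; the case $V \setminus S \subseteq A$ can be reduced to the remaining case $A \subseteq S$ by replacing $A$ with its complement $V \setminus A$, which is still proper, tight, and crosses $T$. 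Hence $A \subsetneq S$; then $A$ descends to a proper tight set of $G'$ (with $\partial_{G'} A = \partial_G A$), and it crosses $T'$ in $G'$ because the fourth piece $V(G') \setminus (A \cup T')$ contains $r_c$ and is therefore non-empty.

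Third, I would invoke the following structural fact: any $4$-regular, $4$-edge-connected multigraph $H$ in which every proper min-cut is crossed by another proper min-cut is a double cycle. This can be proved via the cactus representation of min-cuts---uncrossed proper min-cuts correspond to non-trivial tree edges of the cactus, so their absence forces the cactus to reduce to a single cycle; combined with $4$-regularity of $H$, each cactus vertex must then contain exactly one vertex of $H$, yielding a double cycle. Alternatively, a direct submodularity argument shows that a smallest proper tight set $T$ in $H$ has size exactly $2$ with two parallel edges between its vertices (since for any crossing partner $A$, both $T \cap A$ and $T \setminus A$ are tight, hence singletons by minimality of $T$); one can then iteratively peel off such pairs to expose the full cycle structure.

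The main obstacle is this structural step; depending on the exposition style, it requires either carefully invoking the cactus theorem or carrying out the direct submodularity/peeling argument. Once that is available, the previous two steps are routine case analyses.
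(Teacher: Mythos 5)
Your proposal is correct and takes essentially the same route as the paper's proof: reduce to the setting of a $4$-regular $4$EC graph whose every proper min-cut is crossed, then invoke the Fleiner--Frank structural result (the paper's Fact~\ref{crossingtightsets}) to conclude $G'$ is a double cycle. The one place you add real content is the case analysis on the non-crossing relations between $A$ and $S$, which justifies that the crossing partner of $T$ (after complementing if needed) can be taken strictly inside $S$ and hence survives as a crossing proper tight set of $G'$; the paper leaves this step implicit, asserting only that minimality of $S$ forces $S'$ to be crossed without verifying that the witness lives in $G'$. Your sketch of the structural fact is extra (the paper simply cites it), but correct.
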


We will prove the following theorem in \S\ref{sec:charging-argument}. This in turn gives  \Cref{thm:main}. 

\begin{theorem} \label{thm:Ojoinlowering}
  Let $T$ be the $r_0$-tree chosen from Algorithm \ref{algo}, and $O$ be the set of odd degree vertices in $T$. The expected cost of the minimum cost $O$-join for $T$ is at most $(\nf{1}{2} - \varepsilon)\cdot c(x)$. 
\end{theorem}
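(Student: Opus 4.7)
The plan is to construct a random fractional $O$-join vector $z \in K_{join}(G, O)$ whose expected cost is at most $(\nicefrac{1}{2} - \varepsilon) \cdot c(x)$; since the $O$-join polytope is integral, the minimum cost $O$-join is bounded by $c^\top z$, and taking expectations completes the proof. The starting point is $z_e := x_e / 2$ for all $e$, which is feasible for every even-sized $O$ by \Cref{fact:tsp-implies-onetree}. Improvements come from lowering $z_e$ whenever $e$ is \emph{even at its settled piece}: recall from \S\ref{sec:introduction} that Algorithm~\ref{algo} induces a hierarchy of contractions that assigns each edge $e$ to a unique piece $S(e)$, and $e$ is called \emph{even} when both endpoints of $e$ have even degree in $T$ restricted to the edges of $G[S(e)]$. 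On each even edge I would reduce $z_e$ from $x_e/2$ to $x_e/3$, saving $x_e/6$; because both shores of any cut of $G$ contained in $S(e)$ see even parity from $T \cap E(S(e))$, the $O$-join constraints corresponding to cuts local to $S(e)$ are not endangered by this reduction.

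The delicate part is that cuts crossing $\partial S(e)$ may still carry $O$-join constraints and may become deficient after many reductions. To repair them I would formulate a flow problem on the cactus-like cut hierarchy: each even edge $e$ supplies $x_e/6$ units at piece $S(e)$, and any constraint $z(\partial U) \geq 1$ that becomes violated is a demand, to be absorbed by raising $z_f$ on some other edge $f$ at cost $c_f$ per unit. The enabling structural fact is that potentially deficient cuts sit at strictly higher levels of the hierarchy than $S(e)$, so supply from $e$ can be routed upward along the cactus and service any deficient cut that crosses $\partial S(e)$. Feasibility of the flow (with strictly positive net savings) would be proved by a Hall-type argument over the laminar pieces, giving in expectation a strict reduction of at least $\alpha \cdot x_e$ per edge, for a constant $\alpha > 0$ determined by the evenness probabilities of the samplers.

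These probabilities come from the correlation lemmas \Cref{clm:odd1}, \Cref{clm:odd2}, and \Cref{clm:odd4}. For a special edge $e = uv$, \Cref{clm:odd4}(a) lower-bounds the evenness probability directly by $\nicefrac{1}{36}$ under \MI and $\nicefrac{128}{6561}$ under \ME. For non-special edges (those touching boundary vertices), a more intricate case analysis combining \Cref{clm:odd1}, \Cref{clm:odd2}, and \Cref{clm:odd4}(b) converts ``exactly one odd endpoint'' bounds into full evenness guarantees, exploiting that parity absorbed by the contracted external vertex $r$ of $G'$ controls the interior parities. Because \MI is stronger on special edges while \ME is stronger on non-special ones, Algorithm~\ref{algo} randomizes between the two with weight $\lambda$; I would optimize $\lambda$ to balance their contributions, and the optimum yields $\varepsilon = 0.001695$ as claimed in \Cref{thm:main}. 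Cycle sets (handled by sampling one of two partner edges) and the odd-$|V(H)|$ samplers contribute analogous guarantees via Appendix~\ref{sec:odd-sampler}.

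The main obstacle will be the flow-based charging. A single edge $e$ may simultaneously be a source of savings routed to several deficient cuts, and may itself be raised to repair yet another cut elsewhere in the hierarchy; hence the charging cannot be uniform or ``local'' in the style of \cite{KKO19}. It must route supplies carefully along the cactus and be sensitive to which piece $S(e)$ each unit originated from. Writing the charging as a flow LP and certifying its feasibility via a piece-by-piece analysis over the laminar hierarchy will be the technical heart of \S\ref{sec:charging-argument}.
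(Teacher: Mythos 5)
Your high-level plan --- start from $z = x/2$, reduce $z_e$ on even-at-last edges, repair endangered cuts via a flow/Hall argument, then optimize the mixing parameter $\lambda$ --- is indeed the paper's strategy (\S\ref{sec:charging-argument}, via Lemma~\ref{lem:O-join decrease}). But several concrete pieces are off, starting with the direction of the charging: you write that ``potentially deficient cuts sit at strictly higher levels of the hierarchy than $S(e)$'' and would route supply upward along the cactus. In fact, since $e$ is an \emph{internal} edge of $G\dip{S(e)}$, the min-cuts of $G\dip{S(e)}$ are precisely the ones protected by the EAL property; the cuts that can become deficient are min-cuts of $G\dip{X}$ for \emph{descendants} $X$ of $S(e)$, where $e$ appears as an external edge. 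The repair raises $z_f$ on internal edges of those lower pieces, so charge flows down the hierarchy, not up, and the flow problem the paper actually solves is \emph{local}: Lemmas~\ref{hall} and~\ref{Hall's constants} bound how the four external edges of a single local multigraph $G\dip{S}$ spread charge to its internal edges via the bipartization $H=(B,F,E)$, not a global routing over the cactus.

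Two further gaps would make your version fail quantitatively. First, uniform reduction by $x_e/6$ on all EAL edges does not close: the paper uses three amounts $\tau \le \gamma \le \beta \le \nf{1}{12}$ for non-$K_5$ degree, $K_5$ degree, and cycle edges, with the structural constraints $\beta \ge 2\tau$ and $\beta \ge 2\gamma$. With $\tau = \gamma = \beta = \nf{1}{12}$, case~(iv) of \S\ref{sec:cycle-edges} yields net expected decrease $p(\beta - 2\tau) < 0$, i.e., cycle edges would \emph{lose}; the LP returns $\beta = \nf{1}{12}$ but $\gamma \approx 0.040$, $\tau \approx 0.036$. Second, you omit the Bernoulli flattening coins $B_e$ that force every edge's reduction probability down to exactly the lower bound $p_{sp}$, $p_{hs}$, or $p$ from Theorem~\ref{thm:EAL} (Observation~\ref{obs:exactreductions}). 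Without this, the expected charge to an edge $f$ from a reduced neighbor $e$ would depend on $e$'s configuration-dependent EAL probability rather than a fixed constant, and the clean per-case bookkeeping of \S\ref{sec:cycle-edges}--\S\ref{sec:degree-edges} would not go through.
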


\subsection{The Cut Hierarchy}

Recall our ultimate goal is to create a low cost, feasible solution in the $O$-join polytope, where $O$ is the set of odd degree vertices in our sampled tree $T$. We start with the fractional solution $z = x/2$ and then reduce the $z_e$ value of some edges. In the process, we may violate some constraints corresponding to min-cuts. To fix these cuts, we need a complete description of the min-cuts of a graph. This is achieved by the implicit hierarchy of critical sets that Algorithm \ref{algo} induces.

The hierarchy is given by a rooted tree
$\cT = (V_{\cT}, E_{\cT})$.\footnote{Since there are several graphs
  under consideration, the vertex set of $G$ is called
  $V_G$. Moreover, for clarity, we refer to elements of $V_G$ as
  vertices, and elements of $V_{\cT}$ as
  nodes.}  The vertex set $V_{\cT}$ corresponds to all critical sets found by the algorithm, along with a root node and leaf nodes representing the vertices in $V_G \setminus \{r_0\}$. If $S$ is a critical set, we label the node in $V_{\cT}$ with $S$, where we view $S \subseteq V_G$ and not $V_{G'}$. The root node is labelled $V_G \setminus \{r_0\}$ and each leaf nodes is labelled by the vertex in $V_G \setminus \{r_0\}$ corresponding to it. Now we define the edge set $E_{\cT}$. A node $S$ is a child of $S'$ if $S \subset S'$ and $S'$ is the first superset of $S$ contracted after $S$ in the algorithm. In addition, the root node is a parent of all nodes corresponding to critical sets that are not strictly contained in any other critical set (i.e., the critical sets corresponding to the vertices in the graph $G$ from line \ref{algocontract} when the \textbf{while} loop terminates). Finally, each leaf node is a child of the smallest critical set that contains it (or if no critical set contains it, is a child of the root node). Thus by construction, vertex sets labelling the children of a node are a partition of the vertex set labelling that node. A node in $V_{\cT}$ is a \emph{cycle} or \emph{degree} node if the corresponding critical set labelling it is a cycle or degree set, respectively. We say the root node is a cycle node (since the graph $G$ in line \ref{endcycle} is a double cycle), and accordingly call $V_G \setminus \{r_0\}$ a cycle set. (The leaf nodes are not labelled as degree or cycle nodes.)

\begin{figure}\centering
\includegraphics[scale=.6]{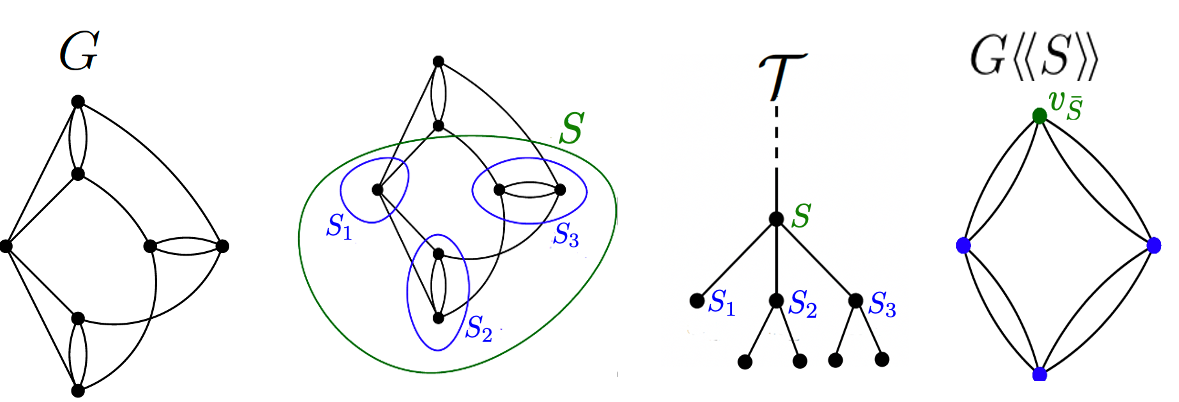}
\caption{A portion of the cut hierarchy $\cT$ and the local multigraph $G\dip{S}$.}
\label{fig:GS}
\end{figure}

\begin{definition}[Local multigraph]
  Let $S \subseteq V_G$ be a set labelling a node in $\cT$. Define the \emph{local multigraph} $G\dip{S}$ to be the following graph: take $G$ and contract the subsets of $V_G$ labelling the children of $S$ in $\cT$ down to single vertices and contract $\bar{S}$ to a single vertex $v_{\bar{S}}$. Remove any self-loops. The vertex $v_{\bar{S}}$ is called the \emph{external vertex}; all other vertices are called \emph{internal vertices.} An internal vertex is called a \emph{boundary vertex} if it is adjacent to the external vertex. The edges in $G\dip{S} \setminus v_{\bar{S}}$ are called \emph{internal edges}. Observe that $G\dip{S}$ is precisely the graph $G'$ in line \ref{localmulti} of Algorithm \ref{algo} when $S$ is a critical set, and is a double cycle when $S = V_G \setminus \{r_0\}$. 
\end{definition}

\textbf{Properties of $\cT$:}
\begin{enumerate}
    \item Let $G$ be a 4-regular, 4EC graph with associated hierarchy $\cT$. Let $S \subseteq V_G$ be a set labelling a node in $\cT$. If $S$ is a degree node in $\cT$, then $G\dip{S}$ has at least five vertices and no proper min-cuts, and hence every proper cut in $G\dip{S}$ has at least 6 edges. If $S$ is a cycle node in $\cT$, then $G\dip{S}$ is exactly a double cycle. 
    
    These follow from \Cref{clm:dichotomy} and the equivalence between $G'$ and $G\dip{S}$. 
    
    \item Algorithm \ref{algo} can be restated as follows: For each non-leaf and non-root node $S$ in $\cT$, sample a random path on $G\dip{S} \setminus v_{\bar{S}}$  if it is a double cycle or $K_5$, and otherwise use the $\ME$ or $\MI$ samplers w.p.\ $\lambda$ and $1-\lambda$, respectively, on $G\dip{S} \setminus v_{\bar{S}}$. Sample a uniformly random cycle on the double cycle in line \ref{endcycle}.
    
    \item For a degree set $S$, the graph $G\dip{S}$ having no proper min-cuts implies that it has no parallel edges. In particular, no vertex has parallel edges to the external vertex in $G\dip{S}$. Hence we get the following:
\end{enumerate}

\begin{corollary}
  \label{clm:boundary-degrees}
  For a set $S$ labeling a non-leaf node in $\cT$ and any internal vertex $v \in G\dip{S}$: if $S$ is a cycle set then $|\partial v \cap \partial S| \in \{0,2\}$, and if $S$ is a degree
  set then $|\partial v \cap \partial S| \in \{0,1\}$.
\end{corollary}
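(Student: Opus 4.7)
The plan is a short case split on whether $S$ is a cycle node or a degree node, directly leveraging the three properties of $\cT$ enumerated just above the corollary.

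If $S$ is a cycle node, Property~1 asserts that $G\dip{S}$ is exactly a double cycle. In a double cycle, each vertex is joined to its two cycle-neighbors by a pair of parallel edges and to no other vertex, yielding a total degree of $4$. Hence for any internal vertex $v$, either the external vertex $v_{\bar S}$ is a cycle-neighbor of $v$, in which case $|\partial v \cap \partial S| = 2$, or it is not, in which case $|\partial v \cap \partial S| = 0$.

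If $S$ is a degree node, I would exploit the no-parallel-edges consequence highlighted in item~3. For completeness I would spell it out: $G\dip{S}$ is $4$-regular because every vertex of $G\dip{S}$ corresponds either to a child of $S$ in $\cT$ (a critical set, hence tight with $|\partial| = 4$), to a leaf of $\cT$ (a vertex of the $4$-regular graph $G$), or to $\bar S$ (with $|\partial S| = 4$ by tightness of $S$). If two vertices $u, w$ of $G\dip{S}$ were joined by at least two parallel edges, then $|\partial \{u,w\}| \leq 4 + 4 - 2\cdot 2 = 4$; combined with $|V(G\dip{S})| \geq 5$ from Property~1, this would make $\{u,w\}$ a proper min-cut of $G\dip{S}$, contradicting Property~1. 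Thus any two vertices of $G\dip{S}$ share at most one edge, and in particular $|\partial v \cap \partial S| \in \{0,1\}$.

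I do not anticipate any genuine obstacle: the corollary is essentially a repackaging of Property~1 together with the no-parallel-edge consequence already noted. The only care needed is to verify that the offending cut $\{u,w\}$ is a genuinely proper cut (requiring $|V(G\dip{S})| \geq 4$), which is comfortably supplied by the ``at least five vertices'' clause of Property~1 for degree nodes.
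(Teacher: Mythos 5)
Your proof is correct and follows essentially the same route as the paper, which treats the corollary as an immediate consequence of the stated properties of $\cT$: the double-cycle structure handles cycle nodes, and simplicity (no parallel edges) handles degree nodes. You additionally spell out why ``no proper min-cuts'' forces simplicity for degree sets — the $4$-regularity plus the $|V(G\dip{S})| \geq 5$ clause ensures a pair of parallel edges would create a proper cut of size at most $4$, contradicting the ``every proper cut has at least $6$ edges'' part of Property~1 — which the paper merely asserts in Property~3 without elaboration.
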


Finally, we show how the hierarchy $\cT$ allows us to characterize the min-cut structure of $G$. The cactus representation of min-cuts (\cite{FF}) is a compact representation of the min-cuts of a graph, and it can be constructed from the cut hierarchy; we defer the details to Appendix \ref{cactusequiv}. In turn we obtain the following complete characterization of the min-cuts of $G$ in terms of local multigraphs.  

\begin{restatable}{claim}{CutMapping}
  \label{lem:min-cuts-mapping}
  Any min-cut in $G$ is either (a)~$\partial S$ for some node $S$ in $\cT$, or 
  (b)~$\partial X$ where $X$ is obtained as
  follows: for some cycle set $S$ in $\cT$, $X$ is the union of 
  vertices corresponding to some contiguous segment of the cycle $G\dip{S}$.
\end{restatable}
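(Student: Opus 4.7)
The plan is to verify both directions of the characterization, using as the main organizing tool the equivalence between $\cT$ and the cactus representation of min-cuts of $G$ that Appendix~\ref{cactusequiv} establishes.

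The easy direction is that every set of the described form gives a min-cut. For part~(a), any non-leaf, non-root node $S$ of $\cT$ labels a critical set that was tight in the contemporaneous contracted graph when Algorithm~\ref{algo} contracted it, so $|\partial S|=4$ there; since contracting connected subgraphs preserves the size of any cut disjoint from them, $|\partial S|=4$ in $G$ as well. The root boundary $\partial r_0$ has size $4$ by $4$-regularity, and each leaf singleton $\{v\}$ likewise. For part~(b), inside the double cycle $G\dip{S}$ every contiguous arc $X$ has exactly $4$ boundary edges, and these lift bijectively to $4$ edges of $\partial X$ in $G$ once we re-expand the contracted children.

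The main direction, that every min-cut is of one of the two forms, I would derive from the classical cactus representation: the min-cuts of any graph correspond bijectively to the cuts of a cactus (a tree of simple cycles), where tree-edges encode uncrossed min-cuts and each cycle encodes a family of pairwise-crossing min-cuts obtained by splitting the cycle into two arcs. Appendix~\ref{cactusequiv} constructs the cactus of $G$ directly from $\cT$: non-cycle nodes of $\cT$ (and their incoming tree-edges) encode exactly the uncrossed min-cuts, while each cycle node $S$ corresponds to a cactus cycle whose vertices are exactly those of $G\dip{S}$. Under this correspondence, every uncrossed min-cut of $G$ is $\partial S$ for some non-cycle node $S$, giving case~(a); and every crossing min-cut is $\partial X$ for a contiguous arc $X$ of some double cycle $G\dip{S}$, giving case~(b) when the arc has length at least two and case~(a) when it is a singleton. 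This exhausts all min-cuts and closes the argument.

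The main obstacle is the hierarchy-to-cactus equivalence itself. Two structural facts must be verified: first, that the critical sets chosen by Algorithm~\ref{algo} at line~\ref{criticalsets} (minimal uncrossed proper tight sets) are in bijection with the tree-edges of the cactus of $G$; and second, that whenever Algorithm~\ref{algo} encounters a non-trivial family of pairwise-crossing tight sets, the corresponding local multigraph $G\dip{S}$ must be a double cycle, so that this entire crossing family collapses into a single cactus cycle (this is essentially the content of \Cref{clm:dichotomy}, proven via posimodularity and submodularity of cut sizes in a $4$-edge-connected graph). Once these two facts are in place, reading off the cuts of the cactus gives precisely the dichotomy in the statement. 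The formal bookkeeping for this correspondence is what we defer to Appendix~\ref{cactusequiv}.
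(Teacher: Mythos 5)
Your proposal follows essentially the same route as the paper: build the cactus of $G$ directly from the hierarchy $\cT$ (cycle nodes becoming cactus cycles, degree nodes contributing doubled tree-edges), invoke the cactus representation theorem of \cite{FF} together with \Cref{clm:dichotomy} to see this is indeed the cactus of $G$, and then read off that cutting two edges of a common cactus cycle yields exactly the cuts in cases~(a) and~(b); the paper's own argument in Appendix~\ref{cactusequiv} is the same sketch, likewise deferring the equivalence with the canonical construction to \cite{FF}. The only difference is your added verification that such sets are in fact min-cuts, which the statement does not require but which is harmless.
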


\section{Analysis Part I: The Even-at-Last Property}
\label{sec:good-edges}

The proof of \Cref{thm:Ojoinlowering} proceeds in two parts:
\begin{OneLiners}
\item[(1)] In this section, we show that each edge $e$ is ``even-at-last'' with
  constant probability. (This is an extension of the property that both of
  its endpoints have even degree.)

\item[(2)] Then we construct the fractional $O$-join. As always,
  $z = x/2$ is a feasible join, but we show how to save a constant
  fraction of the LP value for an edge when it is even-at-last. This
  savings causes other cuts to be deficient, so other edges \emph{raise}
  their $z$ values in response. However, a charging argument shows
  that the $z$-value for an edge does decrease by a constant factor, in
  expectation. This argument appears in \S\ref{sec:charging-argument}.
\end{OneLiners}

To address part (1), let us define a notion of evenness for every edge in $G$. In the case where $G$ has no proper min-cuts, we called an edge \emph{even} if both of its endpoints were even in $T$. Now, the general definition of evenness will depend on where an edge belongs in the hierarchy $\cT$.

\begin{definition}
  We say an edge $e \in E(G)$ is \emph{settled} at $S$ if $S$ is the (unique) set such that $e$ is an internal edge of $G\dip{S}$; call $S$ the \emph{last set} of $e$. If $S$ is a degree or cycle set, we call $e$ a \emph{degree edge} or \emph{cycle edge}, respectively. 
\end{definition}

\begin{definition}[Even-at-Last]
Let $S$ be the last set of $e$, and $T\dip{S}$ be the restriction of $T$ to $G\dip{S}$. 

\begin{enumerate}
\item A degree edge  $e$ is called \emph{even-at-last (EAL)} if both of
  its endpoints have even degree in $T\dip{S}$.
\item For a cycle edge $e = uv$, the graph $G\dip{S} \setminus \{v_{\bar
    S}\}$
  is a chain of
  vertices $v_\ell, \ldots, u, v, \ldots, v_r$, with consecutive
  vertices connected by two parallel edges. Let
  $C := \{v_\ell, \ldots, u\}$, and $C' := \{v, \ldots, v_r\}$ be the
  partition of this chain. The cuts $\partial C$ and $\partial C'$
  are called the \emph{canonical cuts} for $e$. Cycle edge $e$ is called \emph{even-at-last
    (EAL)} if both canonical cuts are crossed an even number of
  times by $T\dip{S}$; in other words, if there is exactly one edge in $T\dip{S}$ from each of the two pairs of external partner edges leaving $v_{\ell}$ and $v_r$. 
\end{enumerate}
\end{definition}

Informally, a degree edge is EAL in the general case if it is even in the tree at the level at which it is settled. Also note that \emph{cycle} edges settled at the same (cycle) set are either all EAL or none are EAL. 

\begin{definition}[Special and Half-Special Edges]
  Let $e$ be settled at a \emph{degree} set $S$. We say that $e$ is
  \emph{special} if both of its endpoints are non-boundary vertices in
  $\contract$ and \emph{half-special} if exactly one of its endpoints
  is a boundary vertex in $\contract$.
\end{definition}

We now prove a key property used in \S\ref{sec:charging-argument} to reduce the $z$-values of edges in the fractional $O$-join.

\begin{theorem}[The Even-at-Last Property]
  \label{thm:EAL}
   The table below gives lower bounds on the probability that
   special, half-special, and all other types of degree edges are EAL in each
   of the two samplers.  
   \begin{center}
    \begin{tabular}{ c c c c}
     & special & half-special & other degree edges \\ 
     \MI & \nf{1}{36} & \nf{1}{21} & \nf{1}{18} \\  
     \ME & \nf{128}{6561} & \nf{4}{27} &  \nf{1}{9} \\
    $K_5$ & - & - & \nf{1}{9}
    \end{tabular}
  \end{center}
  Moreover, a cycle edge is EAL w.p.\ at least $\lambda \cdot \nf19 + (1-\lambda) \cdot \nf{4}{63}$.
\end{theorem}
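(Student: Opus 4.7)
My plan is to prove \Cref{thm:EAL} by a case analysis on the four edge types, in each case isolating an evenness event at each endpoint and leveraging the correlation bounds from Section \ref{sec:corr-props}.

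First, for \textbf{special edges} $e=uv$ (both endpoints non-boundary in $G\dip{S}$), every edge incident to $u$ or $v$ in $G\dip{S}$ is internal, so EAL simplifies to ``both endpoints have degree exactly $2$ in $T\dip{S}$''. This is precisely the event bounded by \Cref{clm:odd4}(a), so the special-edge row of the table follows immediately for both samplers.

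For \textbf{half-special edges} (say $u$ non-boundary, $v$ boundary), my plan is to decompose the EAL event as $\{|T\dip{S}\cap\partial u|=2\}\cap\{v\text{ has even degree in }T\dip{S}\}$. The first factor is lower-bounded by \Cref{clm:odd2} applied at $u$. The hard part will be showing that, conditional on $u$ having degree $2$, the parity of $v$'s internal degree is even with probability at least $\nf12$, so that the product matches the table. For \MI I would leverage the partition-matroid structure (conditioning on which edge at $u$ enters $M'$ essentially randomizes which partner at $v$'s side is forced in or out); for \ME I would use that conditioning on the event $\{|T\cap\partial u|=2\}$ preserves the SR property (\Cref{thm:SRfacts}) and then apply \Cref{thm:Hoeff} to the three indicators for $v$'s internal edges.

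For \textbf{other degree edges} (both endpoints boundary), each of $u,v$ has three internal edges, and EAL requires both to have exactly two of them in $T\dip{S}$. I would condition on the matching $M$ and the color choice in \MI (or, for \ME, on which edge of $\partial u$ lies in $M$), reducing the problem to a joint parity of small edge sets whose marginals are controlled by \Cref{clm:odd1} and \Cref{clm:odd2}. The coupling between the two endpoints is the main difficulty; I would handle it by an argument analogous to the even-case proof of \Cref{clm:odd4}(a) for \ME, breaking into cases by how many of $u,v$'s incident edges lie in the sampled matching and then invoking Hoeffding on the residual $\nf13$-valued set.

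Finally, for \textbf{cycle edges} settled at a cycle set $S$: the local sampler at $S$ deterministically picks exactly one edge of each partner pair in $G\dip{S}$, so EAL reduces to $T$ containing exactly one of the two external partner edges at $v_\ell$ \emph{and} exactly one at $v_r$. This is governed by the sampler at $S$'s parent $S'$ in $\cT$. If $S'$ is a cycle set, then in $G\dip{S'}$ the two relevant pairs are still parallel edges, and the random-cycle sampler at $S'$ picks exactly one of each pair, so EAL holds with probability $1$. If $S'$ is a degree set, then (since $G\dip{S'}$ is simple) the two pairs become four distinct edges at the contracted $S$-vertex in $G\dip{S'}$, and the EAL event becomes exactly the second row of \Cref{clm:odd2} applied at that vertex, with small boundary adjustments via \Cref{clm:odd1} when some of these four edges point to the external vertex of $G\dip{S'}$. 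Taking the worst case ($S'$ a degree set) and averaging over the $\lambda$/$(1-\lambda)$ randomization between \ME and \MI at $S'$ gives the stated convex combination. Throughout the half-special and cycle cases the chief obstacle is that EAL couples two pieces whose joint probability is not the product of their marginals; resolving this requires the right conditioning to bring either the partition-matroid structure or the SR/negative-correlation machinery into play.
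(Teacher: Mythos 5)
Your special-edge case coincides with the paper's (via \Cref{clm:odd4}(a)), but for the two boundary cases your decomposition targets the wrong event: EAL is defined through degrees in $T\dip{S}$, which \emph{includes} the external edges of $G\dip{S}$ that happen to lie in $T$, and your plan drops their contribution to the parity. For a half-special edge, the step you call ``the hard part'' has a one-line proof that you miss: the unique external edge at the boundary endpoint is settled at a strictly higher level of $\cT$, hence is independent of $T_S$, and therefore makes that endpoint even in $T\dip{S}$ with probability exactly $\nf12$ no matter what its internal parity is; combined with \Cref{clm:odd2} at the non-boundary endpoint this gives $\nf1{21}$ and $\nf4{27}$. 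The quantity you propose to bound instead (that the boundary endpoint's \emph{internal} degree is even, conditioned on the other endpoint having degree $2$) is not the EAL event, is not what SR/Hoeffding or matroid conditioning naturally controls, and even if proved would not yield the theorem. More seriously, for ``other degree edges'' (both endpoints boundary) your reduction to ``both endpoints have exactly two internal edges in $T\dip{S}$'' is simply incorrect, and the true argument cannot be carried out inside the single local sampler: one must analyze the \emph{joint} behavior of the two external edges $f,g$ at $u,v$ according to where they are settled --- different levels (independent, probability $\nf14$), a common higher degree set (via \Cref{clm:odd1} together with the symmetry lemma \Cref{claim:symmetry}, probability $\geq\nf19$), or a common cycle set where $f,g$ may be partner edges, in which case exactly one of them is in $T$ and the bound hinges on the probability that the internal parities of $u,v$ \emph{differ}, i.e.\ \Cref{clm:odd4}(b), giving $\nf1{18}$ for \MI. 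That cross-level case analysis is exactly where the binding table entries come from, and it is absent from your plan.

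Your cycle-edge reduction (exactly one external edge of each pair at $v_\ell$ and $v_r$ in $T$) is correct and your case split is close to the paper's, but two points need repair. First, when the parent is a cycle set and all four edges are settled there, EAL does \emph{not} hold with probability $1$: since $S$ is a critical (uncrossed) set, the partner pairing at the contracted vertex must take one edge from each of the two pairs $\{a,b\}$ and $\{c,d\}$, so the relevant probability is exactly $\nf12$ (still above what is needed, but your structural claim is false). Second, the binding worst case is the mixed one --- three of the four edges settled at a degree-set parent and the fourth at a higher level --- where \Cref{clm:odd1} at the parent times an independent factor $\nf12$ for the remaining edge gives $\nf1{18}$ (\MI) and $\nf1{12}$ (\ME); you only gesture at this, yet it is precisely what produces the stated mixture $\lambda\cdot\nf1{12}+(1-\lambda)\cdot\nf1{18}$.
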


\begin{proof} 
  Let $e$ be settled at $S$. Let $T_S$ be the spanning tree sampled on the internal vertices of $G\dip{S}$ (in the notation of Algorithm \ref{algo}, the spanning tree sampled on $G[S]$).

  First, assume that $S$ is a degree set: 
  \begin{enumerate}
  \item If none of the endpoints of $e$ are boundary vertices in $G\dip{S}$ (i.e., $e$ is special),
    then it is EAL exactly when both its endpoints have even degree in
    $T_S$. By
    \Cref{clm:odd4}(a), this happens \wp
    $\nf{1}{36}$ for the \MI sampler and \wp $\nf{128}{6561}$ for the \ME sampler.

  \item Now suppose that $e$ is half-special, so that exactly one of the endpoints of $e = uv$ (say $u$) is a boundary
    vertex in $G\dip{S}$, with edge $f$ incident to $u$ leaving $S$.
    By \Cref{clm:odd2}, the other endpoint $v$ is even in $T_S$ \wp~$\nf2{21}$ for the \MI sampler and $\nf{8}{27}$ for the \ME sampler. Moreover,
    the edge $f$ is chosen at a higher level than $S$ and is therefore
    independent of $T_S$, and hence can make the degree of $u$ even
    \wp $\nf12$. Thus $e$ is EAL \wp $\nf{1}{21}$ for the \MI sampler and $\nf{4}{27}$ for the \ME sampler. 

  \item Suppose both endpoints of $e$ are boundary
    vertices of $S$, with edges $f,g$ leaving $S$. Let $q_=$ be the probability
    that the degrees of vertices $u,v$ in the tree $T_S$ chosen within
    $S$ have the same parity, and $q_{\neq} = 1 - q_=$. Now, when $S$
    is contracted and we choose a $r_0$-tree $T'$ on the graph $G/S$
    consistent with the marginals, let $p_=$ be the probability that
    either both or neither of $f,g$ are chosen in $T'$, and
    $p_{\neq} = 1 - p_=$. Hence
    \begin{gather}
      \Pr[e \; EAL] = q_{oo}p_{11} + q_{oe}p_{10} + q_{eo}p_{01} +
      q_{ee}p_{00} = \nf12 (p_= q_= + p_{\neq} q_{\neq}), \label{eq:quad}
    \end{gather}
          where $q_{oo}, q_{oe}, q_{eo}$ and $q_{ee}$ correspond to different parity
          combinations of $u$ and $v$ in $T_S$ and $p_{00}, p_{01}, p_{10}, p_{11}$ 
          correspond to whether $f$ and $g$ are chosen in $T'$. The second inequality 
          follows from \Cref{claim:symmetry} applied to the random $r_0$-tree $T'$.
    \begin{enumerate}
    \item If $f,g$ are settled at different levels, then they are
      independent. This gives $p_= = p_{\neq} = \nf12$, and hence
      $\Pr[ e \; EAL ] = \nf14$ regardless of the sampler.
    \item If $f,g$ have the same last set which is a degree set, first consider when $S$ is a non-$K_5$ degree set. There are two cases. If the last set is not a $K_5$, then by \Cref{clm:odd1} each of the quantities
      $p_{11}, p_{01}, p_{10} \geq \nf19$. %
        By \Cref{claim:symmetry}, $p_{00}\ge\nf19$ as well. If the last set is a $K_5$, then $p_{11} = 1/6$, $p_{01} = p_{10} = 1/3$,  and $p_{00} = 1/6$. Hence, (\ref{eq:quad}) gives
      $\Pr[ e \; EAL] \geq \nf19$.
    \item If $f,g$ have the same last set which is a cycle set, then
      first consider the case where $f,g$ are partners, 
        in which case
      $p_{\neq} = 1$. Now~(\ref{eq:quad}) implies that
      $\Pr[e \; EAL] = \nf{q_{\neq}}{2}$, which by~\Cref{clm:odd4}(b) is at least
      $ \nf12 \cdot \nf1{9} = \nf{1}{18}$ in the \MI sampler, at least $\nf12 \cdot \nf5{18} = \nf5{36}$ in the \ME sampler, and exactly $\nf{1}{2} \cdot \nf{1}{3} = \nf{1}{6}$ in the $K_5$ sampler.
      
      If $f,g$ are not partners, then they are chosen independently, in
      which case again $p_= = p_{\neq} = \nf12$, and hence
      $\Pr[ e \; EAL ] = \nf14$.
    \end{enumerate}
  \end{enumerate}    
  
  Next, let $S$ be a cycle set. Let $e = uv$ be an edge inside the
  cycle, and let $\{a,b\},\{c,d\}$ be the four edges crossing
  $\partial S$. Let $v_S$ be the vertex obtained by contracting down
  $S$ (whose incident edges are then $\{a,b,c,d\}$). Now in order for
  $e$ to be EAL, one each of $\{a,b\}$ and $\{c,d\}$ must belong to
  $T$; call this event $\cE$. We again consider cases based on where
  these edges are settled.  Let node $P$ be the parent of node $S$,
  and let $v_S$ be the vertex in $G\dip{P}$ obtained from contracting $S$.
  
  \begin{enumerate}
  \item If all four edges are settled at $P$, and $P$ is a degree set,
    then in particular $P$ is a non-$K_5$ degree set. Then, %
     \Cref{clm:odd2} says that for the \MI sampler, we have $\Pr[\cE] \geq \nf4{63}$.
    In contrast, the \ME sampler gives us $\Pr[\cE] \geq \nicefrac{16}{81}$. 
    
  \item If all four edges are settled at $P$, and $P$ is a cycle set,
    then no matter how these four edges are distributed, $\Pr[\cE]
    \geq \frac12$.
  \item If three of them $\{a,b,c\}$ are settled at $P$, and the
    fourth (say $d$) at a higher level, then since $P$ has a vertex
    $v_S$ with a single edge leaving it, $P$ must be a degree set.
    Now exactly one of $\{a,b\}$ is chosen in $T$ \wp (exactly) $\nf{1}{3}$ in the $K_5$ sampler, and \wp at least $\nf29$ in the \MI sampler and $\nf13$ in the \ME sampler,
    by \Cref{clm:odd1}. And since $d$ is
    independently picked at a different level, exactly one of
    $\{c,d\}$ is chosen in $T$ \wp $\nf12$, giving an overall
    probability of $\nf19$ in the \MI sampler and $\nf16$ in the \ME sampler.

  \item Finally, if only two edges are settled at $P$, and two others go to
    higher levels, then $P$ is a cycle set by \Cref{clm:boundary-degrees}. In this case, exactly one
    of the two edges that are settled in $P$ is chosen. Now we want a
    specific one of the edges going to a higher level to be chosen
    into $T$ (and the other to not be chosen), which in the worst case
    happens \wp at least $\nf19$, due
    to \Cref{clm:odd1} (and using that the analogous bound for the $K_5$ sampler is $\nf{1}{3}$, which is only better than $\nf{1}{9}$).
  \end{enumerate}
\end{proof}

\section{Analysis Part II: The $O$-Join and Charging}
\label{sec:charging-argument}

To prove \Cref{thm:Ojoinlowering} and thereby finish the proof, we
construct an $O$-join for the random tree $T$, and bound its expected
cost via a charging argument. The structure here is similar to \cite{KKO19}; however, we use a flow-based
argument to perform the charging instead of the naive one, and also use our stronger notion of evenness (EAL). 

Let $O$ denote the (random set of) odd-degree vertices in $T$. The dominant of the $O$-join polytope
$\Kjoin(G,O)$ is given by
\[ x(\partial S) \geq 1 \qquad \qquad \forall S \sse V, |S \cap O|
  \odd. \] This polytope is integral, so it suffices to exhibit a
fractional $O$-join solution $z \in \Kjoin(G,O)$, with low expected cost. 
(The expectation is taken over $O$.) Note that $|S \cap O|
  \odd$ if and only if $|\partial S \cap T| \odd$.
  
Now \Cref{thm:Ojoinlowering} follows from the claim below, which we will prove in this section. 

\begin{lemma} \label{lem:O-join decrease}
  There is an $\eps > 0$ such that if the $r_0$-tree $T$ is sampled using the procedure described in
  Algorithm \ref{algo}, and $O$ is the set of odd-degree vertices in $T$, then there is
  a fractional solution $z \in \Kjoin(G,O)$ where $\E[z_e] \leq (\nf12 -\eps) x_e$
  for all edges $e \in E(G)$.
\end{lemma}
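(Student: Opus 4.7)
The plan is to start from the feasible fractional $O$-join $z^{(0)}_e := x_e/2$ (feasible by \Cref{fact:tsp-implies-onetree}) and modify it edge-by-edge to secure a uniform expected savings. By \Cref{lem:min-cuts-mapping}, the only tight constraints of $z^{(0)}$ are those coming from min-cuts of $G$, which are either boundary cuts $\partial S$ of a node $S$ of the hierarchy $\cT$, or segment cuts at a cycle set. Non-min cuts satisfy $|\partial U| \geq 6$, giving $z^{(0)}$-slack at least $\nf12$, so they tolerate small perturbations. Hence the only dangerous $O$-join constraints are min-cut constraints that are odd in $T$.

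The construction proceeds in two steps. First, for a small parameter $\delta > 0$, define the ``savings'' step
\[
  \tilde z_e := \frac{x_e}{2}\Bigl(1 - 2\delta\,\1[e \text{ is EAL}]\Bigr).
\]
If $e$ is settled at $S_e$ and EAL, then by the very definition of EAL the cuts corresponding to $e$'s endpoints (or canonical cuts, for cycle edges) inside $G\dip{S_e}$ are not odd in $T$, so lowering $z_e$ does not violate any $O$-join constraint coming from a min-cut internal to $G\dip{S_e}$. The only constraints that can become deficient are those of min-cuts containing $e$ whose two sides split the endpoints, which by \Cref{lem:min-cuts-mapping} are boundary cuts $\partial S'$ with $S' \subsetneq S_e$, and segment cuts at cycle ancestors of $S_e$ that do the same. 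Second, define $z_e := \tilde z_e + I_e$, where $I_e \geq 0$ is an increase supplied by a flow routing so that all deficient cuts are again feasible.

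The charging is formulated as a flow network. Each EAL edge $e$ acts as a source generating $\delta \cdot x_e$ units of demand; each potentially deficient cut acts as an intermediate requiring enough throughput to cover its shortfall; and the sinks are edges in those cuts that are settled strictly below $S_e$ in $\cT$, with capacities tuned to the sink's edge type. Feasibility of the flow for every realization of $T$ follows from the local connectivity structure of $G\dip{S}$ at each level (in particular, degree sets have no proper min-cuts, so every proper cut in $G\dip{S}$ has size at least $6$, providing enough routing slack). The expected-capacity side uses the EAL probability bounds of \Cref{thm:EAL} together with the fact that only finitely many ancestors of a given edge $f$ can charge it, so $\E[I_f] \leq \delta' \cdot x_f$ for some $\delta' < \delta$. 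Combining, $\E[z_f] \leq \bigl(\nf12 - (\delta - \delta')\bigr)\, x_f$.

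The main obstacle is this charging step. The na\"ive uniform distribution used in~\cite{KKO19} fails here because the EAL probabilities in \Cref{thm:EAL} are strongly non-uniform across special, half-special, ordinary degree, and cycle edges, so uniform spreading would overload the low-probability sinks. The flow formulation allows capacities to be tailored to the sink type, so that demand generated at low-probability sources is routed toward edges with comparatively high EAL probability. Once feasibility and the per-edge capacity bound are verified, optimizing the free parameters---the reduction amount $\delta$ and the mixing probability $\lambda$ between the \ME and \MI samplers in Algorithm~\ref{algo}---yields the quantitative $\eps = 0.001695$ in \Cref{thm:main}.
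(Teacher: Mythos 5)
Your high-level scaffolding matches the paper's: start from $z^{(0)} = x/2$, lower EAL edges, and charge the decrease back to lower edges via a flow rather than uniformly. But several load-bearing ingredients are missing, and without them the argument does not close.

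First, reducing $z_e$ by a uniform $\delta\,x_e$ whenever $e$ is EAL leaves the \emph{expected charge} to a lower edge $f$ unbounded: the charge contributed by an ancestor edge $e$ scales with $\Pr[e\ \text{EAL}]$, and \Cref{thm:EAL} supplies only \emph{lower} bounds on those probabilities. The paper resolves this by attaching independent Bernoulli coins $B_e$ with probabilities chosen so that the event ``$e$ is EAL \emph{and} $B_e=1$'' occurs with probability \emph{exactly} $p_{sp}$, $p_{hs}$, or $p$; these coin flips (and the perfect correlation of coins on cycle partners) are what make the expected-charge computation tractable. Second, a single reduction amount $\delta$ is not enough: the paper needs three parameters $\tau \le \gamma \le \beta$ for non-$K_5$ degree, $K_5$ degree, and cycle edges, together with the constraints $\beta \ge 2\tau$, $\beta \ge 2\gamma$, $\beta \le \nf1{12}$. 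These distinctions are forced by structural facts the proposal does not surface: a cycle edge's reduction can be split in half between the edge and its partner (justifying a larger $\beta$), $K_5$ local multigraphs only admit Hall constant $c=\nf23$ rather than $\nf12$, and the cap $\nf1{12}$ is exactly what guarantees non-min-cuts (with $\ge 6$ edges) never go deficient.

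Third, ``formulate the charging as a flow'' needs a concrete realization, and your description of a single global source-sink network is not what the paper does and would be hard to make precise (the set of deficient cuts depends on the random tree $T$). The paper instead builds a small \emph{local} bipartite graph $H=(B,F,E)$ per degree set $S$ (the ``bipartization'' of $\contract$), with $B$ the four external edges and $F$ the internal edges, and invokes a Hall-type condition (\Cref{hall}, \Cref{Hall's constants}) to get a fixed fractional assignment with per-internal-edge load at most $c=\nf12$ (or $\nf23$ for $K_5$). That local, $T$-independent allocation is what lets the paper bound the expected charge edge by edge and then compute the exact net decrease in the four cycle-edge configurations (i)--(iv) and the non-special/special/$K_5$ degree cases. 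Finally, the quantitative $\eps = 0.001695$ comes from a small linear program over $(\tau,\gamma,\beta)$ and then over $\lambda$, not from a free parameter $\delta$; your sketch gestures at this but does not set up the constraints that actually determine it. In short: the decomposition is right, but the coin-flip flattening, the per-type reduction constants, the local bipartization flow, and the case analysis are all missing, and these are where the proof lives.
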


\subsection{Construction of the Fractional $O$-join}

The construction of the fractional $O$-join $z$ goes as follows: We
start with the solution $z=\nf{x}2$. Notice that
$z(\partial S) \geq 1$ is a tight constraint in this initial solution
when $S$ is a min-cut.  Now we describe how to reduce the $z_e$
values.

Define 
\[p_{sp}^{MI} \coloneqq \nf{1}{36}, p_{hs}^{MI} \coloneqq \nf{1}{21}, p^{MI} \coloneqq \nf{1}{18}, p_{sp}^{ME} \coloneqq \nf{128}{6561}, p^{ME} \coloneqq  \nf19\]
\[p_{sp} \coloneqq \lambda p_{sp}^{ME} + (1-\lambda)p_{sp}^{MI}, p_{hs} \coloneqq \lambda p^{ME} + (1-\lambda)p_{hs}^{MI}, p \coloneqq \lambda p^{ME} + (1-\lambda)p^{MI}. \]
 Let $p_{sp}(e), p_{hs}(e),$ and $p_d(e)$ denote the probabilities that $e$ is EAL if $e$ is a special degree edge, half-special degree edge, or other degree edge, respectively. Let $p_c(e)$ denote the probability that a cycle edge $e$ is EAL. By \Cref{thm:EAL}, 
\[p_{sp}(e) \geq p_{sp},\ p_{hs}(e) \geq p_{hs},\ p_d(e) \geq p,\ p_c(e) \geq p.  \]

(We do not distinguish the
half-special case in the \ME sampler, as the half-special bound of $\nf{4}{27}$ in \Cref{thm:EAL} is greater than $p^{ME}$.)
Call an edge $e$ a \emph{$K_5$ degree edge} if $e$ is settled at a
degree set $S$ where $G\dip{S}$ is a $K_5$. By the inequalities above, the random variables below are well-defined.  

\begin{enumerate}
    \item Define a Bernoulli random variable for each edge $e$:
    \begin{enumerate}
        \item If $e$ is a special degree edge, set $B_e \sim \Ber(\nf{p_{sp}}{p_{sp}(e)})$. 
        \item If $e$ is a half-special degree edge, set $B_e \sim \Ber(\nf{p_{hs}}{p_{hs}(e)})$. 
        \item If $e$ is any other degree edge, set $B_e \sim \Ber(\nf{p}{p_d(e)})$.
        \item If $e$ is a cycle edge, set $B_e \sim \Ber(\nf{p}{p_c(e)})$. Further, if $e$ and $f$ are partners, perfectly correlate their coin flips, i.e., set $B_e = B_f$. 
    \end{enumerate}
    \item If $e$ is EAL \emph{and} $B_e = 1$, reduce $z_e$ by 
    \begin{enumerate}
        \item $\tau$ if $e$ is a non-$K_5$ degree edge.
        \item $\gamma$ if $e$ is a $K_5$ degree edge. 
        \item $\beta$ if $e$ is a cycle edge. 
    \end{enumerate}
    \item We enforce that $\tau \leq \gamma \leq \beta \leq \nf{1}{12}$, $\beta \geq 2\tau$, and $\beta \geq 2\gamma$. We will optimize $\tau, \gamma,$ and $\beta$ via a linear program later.
\end{enumerate}

The purpose of the Bernoulli coin flips is to flatten the probability that an edge is reduced down to the lower bound on the probability that it is EAL from \Cref{thm:EAL}:
\begin{observation} \label{obs:exactreductions}
    If $e$ is a special, half-special, other degree, or cycle edge, then $z_e$ is reduced with probability exactly $p_{sp}$, $p_{hs}$, $p$, or $p$, respectively. 
\end{observation}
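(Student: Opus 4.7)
The plan is to exploit the fact that the auxiliary Bernoulli $B_e$ is by construction drawn independently of the randomness used to sample the tree $T$, so the reduction event decomposes cleanly as a product of two independent events.

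Concretely, $z_e$ is reduced if and only if both of the events $\{e \text{ is EAL}\}$ and $\{B_e = 1\}$ occur. Since the coin $B_e$ is flipped from an external source not used in Algorithm \ref{algo}, it is independent of $T$, and therefore independent of the EAL indicator. Hence
\[ \Pr[z_e \text{ is reduced}] \;=\; \Pr[e \text{ is EAL}] \cdot \Pr[B_e = 1]. \]
I would then split into the four cases and plug in the definition of $B_e$: for a special degree edge, the product is $p_{sp}(e)\cdot \nf{p_{sp}}{p_{sp}(e)} = p_{sp}$; for a half-special degree edge, it is $p_{hs}(e)\cdot \nf{p_{hs}}{p_{hs}(e)} = p_{hs}$; for any other degree edge, $p_d(e)\cdot \nf{p}{p_d(e)} = p$; and for a cycle edge, $p_c(e)\cdot \nf{p}{p_c(e)} = p$. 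Each $B_e$ is a well-defined Bernoulli precisely because the lower bounds from \Cref{thm:EAL} guarantee that the corresponding ratio lies in $[0,1]$, so the construction is legitimate.

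The only mild subtlety is in the cycle case, where partner edges $e,f$ have their coins perfectly correlated via $B_e = B_f$, consistent with the fact that $e$ and $f$ are simultaneously EAL (every cycle edge settled at a given cycle set is EAL iff every other such edge is). This correlation matters for joint events across partner edges but is irrelevant to the marginal statement of the observation: $B_e$ still has marginal distribution $\Ber(\nf{p}{p_c(e)})$ and is still independent of $T$, so the per-edge computation above applies unchanged. I do not anticipate any real obstacle here; the observation is essentially a bookkeeping lemma that records the purpose of the thinning coins, namely that they ``flatten'' the possibly-different EAL probabilities $p_{sp}(e), p_{hs}(e), p_d(e), p_c(e)$ across edges of the same type down to the common worst-case values $p_{sp}, p_{hs}, p, p$, which is exactly what the charging argument in the remainder of \S\ref{sec:charging-argument} will require.
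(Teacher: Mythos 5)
Your argument is correct and is exactly the intended reasoning behind the observation; the paper states it without proof precisely because it is this one-line independence-plus-telescoping calculation. The only assumption you use that the paper leaves implicit is that each $B_e$ is drawn independently of the tree $T$ (and hence of the EAL indicator), which is the obvious reading of the construction, and your remark that the partner-edge coupling $B_e = B_f$ affects only joint distributions and not the marginal $\Pr[z_e\text{ reduced}]$ correctly disposes of the one potential subtlety.
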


 This reduction scheme may make $z$ infeasible for the $O$-join polytope. We now discuss how to maintain feasiblity.

\subsection{Maintaining Feasibility of the Fractional $O$-join via Charging}

 Suppose $f$ is EAL and that we reduce edge $z_f$ (per its coin flip $B_f$). Say $f$ is settled at $S$. If $S$ is a degree set, then the only min-cuts of $\contract$ are the degree cuts. So the only min-cuts that the edge $f$ is part of in $\contract$ are the degree cuts of its endpoints, call them $U,V$, in $\contract$ ($U$ and $V$ are vertices in $\contract$ representing sets $U$ and $V$ in $G$). But since $|\partial U \cap T|$ and $|\partial V \cap T|$ are both even by definition of EAL,  we need not worry that reducing $z_f$ causes $z(\partial U) \geq 1$ and $z(\partial V) \geq 1$ to be violated. Likewise, if $S$ is a cycle set, then by definition of EAL all min-cuts $S'$ in $\contract$ containing $e$ have $|\partial S' \cap T|$ even, so again we need not worry.

Since $f$ is only an internal edge for its last set $S$, the only cuts $S'$ for which the constraint $z(\partial S') \geq 1$, $|\partial S' \cap T|$ odd, may be violated as a result of reducing $z_f$ are cuts represented in lower levels of the hierarchy. Specifically, lef $f$ be an external edge for some $G\dip{X}$ (meaning $X$ is lower in the hierarchy $\cT$ than $S$) and $S'$ be a min-cut of $G\dip{X}$ (either a degree cut or a canonical cut). By \Cref{lem:min-cuts-mapping}, cuts of the form $S'$ are the only cuts that may be deficient as a result of reducing $z_f$. Call the internal edges of $\partial S'$ \emph{lower edges}. When $z_f$ is reduced and $|\partial S' \cap T|$ is odd, we must distribute an increase (charge) over the lower edges totalling the amount by which $z_f$ is reduced, so that $z(\partial S') = 1$. We say these lower edges receive a \emph{charge} from $f$. How the charge is distributed will depend on whether the lower edges are degree edges or cycle edges (see \S\ref{sec:cycle-edges} and \S\ref{sec:degree-edges} below). 

We claim that this procedure maintains feasibility of the $O$-join solution $z$. Indeed, by \Cref{lem:min-cuts-mapping}, no constraint corresponding to a \emph{min-cut} in the $O$-join polytope is violated. Further, by capping the amount an edge can be reduced at $\nf{1}{4} - \nf{1}{6} = \nf{1}{12}$, we will ensure that \emph{none} of the constraints are violated, since every other cut has at least 6 edges.

To prove \Cref{lem:O-join decrease}, we will lower bound the expected decrease to $z_e$, using different strategies for charging $e$ when $e$ is a cycle edge versus a degree edge. In particular, when $e$ is a cycle edge, we distribute charge from an external edge evenly between $e$ and its partner. When $e$ is a degree edge, charge will be distributed according to a maximum-flow solution.

\subsection{Charging of Cycle Edges}
\label{sec:cycle-edges}

Throughout, let $X_e = 1$ if edge $e$ is in $T$ and 0 otherwise.

We now analyze the expected net decrease for each edge, starting with cycle edges. The expected decrease for a cycle edge starts off as exactly $p \cdot \beta$ by \Cref{obs:exactreductions}. However, we need to control the
amount of charge the edge receives from edges settled at higher
levels. Note that in the calculations, we assume the worst-case
scenario that such edges settled at higher levels are reduced w.p.\ exactly $p$. The
special/half-special edges are reduced with lower probability,
which is only better for us because the expected amount of charge will
be lower in these cases. Note also that throughout the calculations we may assume that an edge settled at a higher level is a cycle edge. This will follow from the assumption that $\beta \geq 2\tau, 2\gamma$.

Consider a cycle edge $e$ that is settled at set $S$. Let its
partner on the cycle be $e'$, and let the four external edges for $S$
be $\{a,b\}$ (leaving the vertex $u$) and $\{c,d\}$ (leaving vertex
$v$). When $S$ is contracted, vertex $v_S$ has these four edges
$\{a,b,c,d\}$ incident to it. Let $P$ be $S$'s parent node in
$\cT$. Moreover, let $C$ and $C'$ be the canonical cuts, as in the
figure below. We consider the following cases as in \cite{KKO19}.

\begin{figure}[!htb]
\minipage{0.45\textwidth}
  \includegraphics[width=\linewidth, page=1]{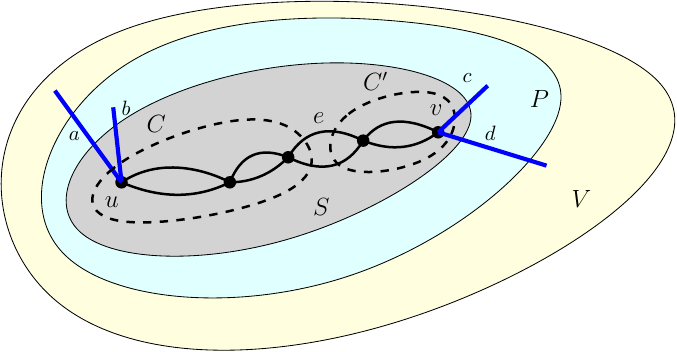}
  \caption{Case (i)}\label{fig:chargecase1}
\endminipage\hfill
\minipage{0.45\textwidth}
  \includegraphics[width=\linewidth, page=2]{charge2}
  \caption{Case (ii)}\label{fig:chargecase2}
\endminipage
\end{figure}

\begin{enumerate}
\item[(i)] Two of the four edges incident to $v_S$ also leave
  $P$---and therefore $P$ is a cycle set by
  \Cref{clm:boundary-degrees}. Of the two edges leaving $P$, one must
  be from $\{a,b\}$ and one from $\{c,d\}$. To see this, say both $c$ and $d$ (leaving node $u$) both did not leave $P$. Consider the set of vertices $S'$ which $c$ and $d$ lead to. $S' \cup u$ is a proper min cut which crosses $S$; however, this contradicts the fact that $S$ is a critical set (namely $S$ is not supposed to be crossed by another proper min cut).

  Suppose $\{a,d\}$ are the external edges
  for $P$, and $\{b,c\}$ are internal (and hence partners in this cycle set).

  Now, consider the event that $b$ is reduced; the important
  observation is that $c$ is also reduced, because both are EAL at the
  same time and their coins are perfectly correlated. Not only that, the event that $b$ is reduced implies that $|\delta(S) \cap T|$ is even. This means that
  both cuts $C, C'$ are deficient by the same amount and at the same time, due to this event
  of reducing $b,c$, and raising $e$ helps both of them. This means
  the net expected charge to $e$ is at most
  \[ p\frac\beta2 \big( \Pr[ C \odd \mid b \reduced] + \Pr[ C \odd
    \mid a \reduced] + \Pr[ C' \odd \mid d \reduced] \big). \] (Note
  that there is no term for $c$ here because of the discussion above;
  moreover, the $\nf\beta2$ term reflects that the charge of $\beta$
        is split between $e,e'$.) 

First we argue that $\Pr[C \odd \mid a \reduced] = \nf12$ (and by the same argument we then have $\Pr[C' \odd \mid d \reduced] = \nf12$). To see this, note that the event that $C$ is odd depends only on which of $a$ and $b$ are in $T$. Moreover, the event that $b$ is in $T$ is independent of the event that $a$ is reduced, since $b$ is contracted in the last set of $a$. Hence $\Pr[C \odd \mid a \reduced] = \Pr[X_a = X_b] = \nf{1}{2}$, as $a$ and $b$ are settled at different sets.

Next we argue that $\Pr[C \odd \mid b \reduced] = \nf12$. If $b$ is reduced, then $b$ is EAL, which implies $|T \cap \{a,d\}| = 1$. On the other hand, since $b,c$ are partner edges, we have that $|T \cap \{b,c\}| = 1$. So, the only ways for $C$ to be odd given $b$ is reduced are if $a,b \in T$, or if $c,d \in T$. Thus we have:

\begin{align*}
    \Pr[C \odd \mid b \reduced] &= \Pr[X_a = X_b = 1 \mid b \reduced] + \Pr[X_c = X_d = 1 \mid c \reduced] \\
    &= \Pr[X_b = 1] \cdot \Pr[X_a = 1 \mid b \reduced] + \Pr[X_c = 1] \cdot \Pr[X_d = 1 \mid c \reduced] \\
    &= \frac{1}{2} \cdot (\Pr[X_a = 1 \mid b \reduced] + \Pr[X_d = 1 \mid b \reduced]) \\
    &= \frac{1}{2}.
\end{align*}

In the first and third lines, we have used that $b$ is reduced if and only if $c$ is reduced, since $b$ and $c$ are partners. In the second line, we have used that the event $\{X_b = 1 \}$ is independent of the event $\{X_a = 1, b \reduced\}$. This is because the latter event depends only on edges settled at higher levels than $b$, as $b$ is settled at a cycle set, and on the independent coin flip $B_b$. Finally, in the fourth line we have used that conditioned on $b$ being reduced, the events $X_a = 1$ and $X_d =1 $ are disjoint.

  Thus the net expected decrease for edge $e$ is at least 
  
  \begin{gather}
     p(\beta - \nf\beta2 \cdot 3 \cdot \nf{1}2) = \boxed{p \nf\beta4}. \label{eq:cyc1}
  \end{gather}
\item[(ii)] Only one of the edges incident to $v_S$, say edge $a$, is external for $P$. By
  \Cref{clm:boundary-degrees}, $P$ is a degree set, and in the worst case, a $K_5$ degree set since $\gamma \geq \tau$. 
  The charge is maximized when $a$ is a cycle edge, in which case the
  expected charge to $e$ is at most
  \[ p \big( \nf\beta2 \Pr[ C \odd \mid a \reduced ] + \nf\gamma2 \Pr[ C
    \odd \mid b \reduced ] \cdot 1 + 2\cdot \nf\gamma2 \big). \]
where the latter charge is from $c$ and $d$. We see that $\Pr[C \odd \mid a \reduced] = \nf12$, by the same reasoning as in (i) above. We upper bound $\Pr[C \odd \mid b \reduced]$ by 1. Thus the net expected decrease is at least. 
  \begin{gather}
    p(\beta - \nf\beta4 - \nf{3\gamma}{2}) = \boxed{p(\nf{3\beta}4 -
     \nf{3\gamma}{2})}. \label{eq:cyc2}
  \end{gather}

\begin{figure}[!htb]
\minipage{0.45\textwidth}
  \includegraphics[width=\linewidth, page=3]{charge2}
  \caption{Case (iii)}\label{fig:chargecase3}
\endminipage\hfill
\minipage{0.45\textwidth}%
  \includegraphics[width=\linewidth, page=4]{charge2}
  \caption{Case (iv)}\label{fig:chargecase4}
\endminipage
\end{figure}

\item[(iii)] All four of the edges are internal to $P$, which is a
  cycle set.
  The partners must now be one edge from each of the two pairs
  $\{a,b\}$ and $\{c,d\}$---let's say $\{a,d\}$ and $\{b,c\}$ are
  partners (using an argument similar to that in (i)). The cut $C$ is odd in $S$ if and only if $C'$ is odd;
  indeed, $C$ is odd means we choose both or neither of
  $\{a,b\}$, and since $\{c,d\}$ are the partners of $\{a,b\}$, we
  choose both or neither of them. Hence, increasing $e$ fixes both
  cuts at the same time, similar to the argument in~(i), so we focus
  only on the increase due to cut $C$. Thus the net expected charge to $e$ is at most
  \[p \cdot \frac{\beta}{2} \cdot (\Pr(C \odd \mid a \reduced) + \Pr(C \odd \mid b \reduced)).\]
  We claim that $\Pr(C \odd \mid a \reduced) = \nf12$ (and likewise for the second term). We have $\Pr(C \odd \mid a \reduced) = \Pr(X_a = X_b \mid a \reduced) = \Pr(X_a = X_b) = \nf12$, where the penultimate equality follows from the fact that $a,b$ are settled at a cycle set, so the event that $a$ is reduced only depends on edges settled at higher levels.  So the net expected decrease is 
  $p(\beta - 2\cdot\nf\beta2\cdot\nf12 ) = p\nf\beta2$, which is
  no worse than~(\ref{eq:cyc1}).
  
\item[(iv)] All four of the edges are internal to $P$, and $P$ is a
  degree set. Note that $G\dip{P}$ cannot be a $K_5$ since the vertex $v_S$ has four internal edges incident to it. 
  We simply bound the charge due to each edge by $\tau/2$, and hence
  get net expected decrease of
  \begin{gather}
    p(\beta - 4\cdot \nf\tau2) = \boxed{p(\beta - 2\tau)}, \label{eq:cyc4}
  \end{gather}
\end{enumerate}

\subsection{Charging Degree Edges: Max-Flow Formulation} \label{sec:degree-edges}

So far we have shown that no cycle edge will receive too much charge. We did this by considering four different configurations for the external edges of $S$, and distributing charge evenly between a cycle edge and its partner. Before showing that no degree edge receives too much charge, we will define a charging scheme for degree edges that achieves better bounds than distributing charge uniformly. 

Let $\contract$ be the local multigraph for some degree set $S$, so $\contract$ is 4-regular and has no proper min-cuts. For an external edge $e$, let $u_e$ denote the internal boundary vertex to which $e$ is incident. We create a bipartite graph $H=(B,F,E)$ from $\contract$ as follows. The vertices of $B$ are labelled with the external edges of $\contract$. The vertices of $F$ are labelled with the internal edges of $\contract$. So $|B|=4$. Place an edge between $e \in B$ and $f \in F$ in the edge set $E$ if $e$ and $f$ share $u_e$ as an endpoint. Set $b_e = 1$ for all $e \in B$. We call $H$ the \emph{bipartization} of $\contract$.

The following lemma follows from the Max-Flow Min-Cut theorem:
\begin{lemma}\label{hall}
  Given a bipartite graph $G=(B,F,E)$ with positive integers $b_u$ for
  each $u \in B$, and $c \geq 0$, there exists
  $x: B \times F \rightarrow \mathbb{R}_{+}$ satisfying
  \begin{align}
    \textstyle \sum_{f \in \partial u} x(u,f)
    &= b_u \text{ for all }  u \in B, \label{constraint1} \\
    \textstyle \sum_{u \in \partial f} x(u,f)
    &\leq c \text{ for all } f \in F \label{upper bound}
  \end{align}
  if and only if 
  \begin{equation}\label{Hall cond}
    \textstyle    |N(R)| \geq \frac{b(R)}{c} \text{ for all } R \subseteq B
  \end{equation}
  where $b(R) = \sum_{u \in R} b_u$.
\end{lemma}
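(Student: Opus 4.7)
The plan is to reduce the question to a maximum $s$--$t$ flow in an auxiliary directed network and invoke the Max-Flow Min-Cut theorem. I would build a digraph on vertex set $\{s,t\} \cup B \cup F$ with arcs $s \to u$ of capacity $b_u$ for each $u \in B$, arcs $u \to f$ of capacity $+\infty$ for every edge $\{u,f\} \in E$, and arcs $f \to t$ of capacity $c$ for each $f \in F$. Feasible functions $x$ satisfying the two displayed constraints correspond exactly to $s$--$t$ flows of value $b(B) := \sum_{u \in B} b_u$: the equality constraint $\sum_{f \in \partial u} x(u,f) = b_u$ says each source arc is saturated, while the inequality $\sum_{u \in \partial f} x(u,f) \leq c$ is precisely the sink-arc capacity bound.

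For necessity of the Hall-type condition, assume such an $x$ exists. For any $R \subseteq B$, summing the equality over $u \in R$ and regrouping the double sum by $f \in F$ gives
\[
b(R) \;=\; \sum_{u \in R}\sum_{f \in \partial u} x(u,f) \;\leq\; \sum_{f \in N(R)} \sum_{u \in \partial f} x(u,f) \;\leq\; c \cdot |N(R)|,
\]
which rearranges to $|N(R)| \geq b(R)/c$.

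For sufficiency, by Max-Flow Min-Cut it suffices to show that every finite-capacity $s$--$t$ cut has capacity at least $b(B)$. Any such cut must avoid the infinite-capacity arcs $u \to f$, so its source side is determined by a pair $(R, F')$ with $R \subseteq B$, $F' \subseteq F$, and $N(R) \subseteq F'$; its capacity equals $b(B \setminus R) + c \cdot |F'|$. This is minimized at $F' = N(R)$, and by the hypothesis $c \cdot |N(R)| \geq b(R)$ is at least $b(B \setminus R) + b(R) = b(B)$. Thus the maximum flow value equals $b(B)$, and reading off the $x$-values from an optimal flow produces the desired solution.

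I expect the only subtle step to be identifying finite-capacity cuts correctly---specifically the constraint $N(R) \subseteq F'$ forced by the infinite-capacity arcs---since once this is in hand the cut-capacity computation is immediate. A possibly real-valued $c$ (and real-valued flows, since $x$ need not be integral) is handled by the standard LP-duality form of Max-Flow Min-Cut, which holds over the reals.
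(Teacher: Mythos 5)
Your proof is correct, and it is exactly the argument the paper has in mind: the paper states only that the lemma ``follows from the Max-Flow Min-Cut theorem'' and leaves the network construction and cut computation implicit, which you have filled in correctly (including the point that infinite-capacity $u \to f$ arcs force $N(R) \subseteq F'$ in any finite cut, and that the real-valued version of MFMC handles non-integral $c$ and $x$).
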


\Cref{hall} now gives the following result, which follows from straightforward casework. 
\begin{lemma}\label{Hall's constants}
  Let $H=(B,F,E)$ be the bipartization of $\contract$, as described
  above. If $\contract$ is not a $K_5$, then the smallest $c$
  satisfying (\ref{Hall cond}) is given by $c=\nf{1}{2}$. If
  $\contract$ is a $K_5$, then the smallest $c$ satisfying (\ref{Hall
    cond}) is given by $c=\nf{2}{3}$.
\end{lemma}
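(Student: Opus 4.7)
Since $b_e = 1$ for every $e \in B$, the smallest $c$ for which (\ref{Hall cond}) holds is exactly
\[
c^* \;=\; \max_{R \subseteq B} \frac{|R|}{|N(R)|},
\]
so the task reduces to computing this maximum for each of the two graph classes. For $R \subseteq B$, let $U_R := \{u_e : e \in R\}$. By \Cref{clm:boundary-degrees}, in a degree set each boundary vertex has exactly one external edge, so $|U_R|=|R|$ and each boundary vertex has exactly three internal-edge neighbors. A short inclusion--exclusion then yields
\[
|N(R)| \;=\; 3|R| - e(U_R),
\]
where $e(U_R)$ is the number of internal edges with both endpoints in $U_R$. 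The whole problem reduces to bounding $e(U_R)$.

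\textbf{Easy cases.} The case $|R|=1$ gives $|N(R)|=3$ and ratio $\nf{1}{3}$. For $|R|=2$, the graph $\contract$ is simple (since it has no proper min-cuts), so $e(U_R)\le 1$ and $|N(R)|\ge 5$, giving ratio at most $\nf{2}{5}$. For $|R|=3$, the trivial bound $e(U_R)\le \binom{3}{2}=3$ already yields $|N(R)|\ge 6$ and ratio at most $\nf{1}{2}$. If $\contract=K_5$, then for $|R|=4$ the four boundary vertices span a $K_4$, so $e(U_R)=6$, $|N(R)|=6$, and the ratio is $\nf{2}{3}$; this is tight at $R=B$, proving $c^*=\nf{2}{3}$ in the $K_5$ case.

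\textbf{Main step: $|R|=4$ when $\contract\neq K_5$.} This is the only case where the non-$K_5$ hypothesis is essential. Now $U_R$ equals the entire boundary; let $W$ denote the set of non-boundary internal vertices of $\contract$. By definition a vertex is ``boundary'' iff it is adjacent to the external vertex $v_{\bar S}$, so no edge joins $W$ to $v_{\bar S}$, and every edge of $\contract$ leaving $W$ must land in $U_R$. A degree-sum count on $U_R$ then gives
\[
|E(U_R,W)| \;=\; 12 - 2e(U_R) \;=\; |\partial W|_{\contract}.
\]
The hypothesis $\contract\neq K_5$ is equivalent to $W\neq\emptyset$: otherwise all internal edges at the four boundary vertices would lie inside the boundary, forcing it to be $K_4$ and $\contract$ to be $K_5$. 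Applying $4$-edge-connectivity of $\contract$ to the nonempty set $W$ yields $|\partial W|\geq 4$, hence $e(U_R)\leq 4$, $|N(R)|\geq 8$, and ratio at most $\nf{1}{2}$. Combining all cases gives $c^*\leq \nf{1}{2}$ for every non-$K_5$ $\contract$; tightness is witnessed by any such $\contract$ in which three boundary vertices span a triangle (so that the $|R|=3$ bound is attained), giving $c^*=\nf{1}{2}$. The main obstacle is exactly this $|R|=4$ step, which is the only place one must combine $4$-edge-connectivity with the structural fact that non-boundary vertices send no edge to $v_{\bar S}$; the remaining cases are one-line counting.
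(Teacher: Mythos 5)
Your proof is correct, and since the paper merely asserts that \Cref{Hall's constants} ``follows from straightforward casework'' without giving details, there is no written proof in the paper to compare against; your argument is a valid filling-in of that casework. The reduction to $c^* = \max_{R\subseteq B} |R|/|N(R)|$ and the identity $|N(R)| = 3|R| - e(U_R)$ (using \Cref{clm:boundary-degrees} to get $|U_R|=|R|$ and internal degree $3$ at each boundary vertex) set up the problem cleanly, the small cases are routine, and the $|R|=4$, non-$K_5$ case is the right crux: letting $W$ be the non-boundary internal vertices, the facts that $W\neq\emptyset$, that no edge of $\contract$ joins $W$ to $v_{\bar S}$, and that $4$-edge-connectivity gives $|\partial W|\geq 4$ combine with the degree count $|\partial W| = 12 - 2e(U_R)$ to give $e(U_R)\leq 4$, hence $|N(R)|\geq 8$ and ratio at most $\nf12$.

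One small caveat concerns the tightness direction. As written, the lemma sounds as if $c^*$ equals $\nf12$ for \emph{every} non-$K_5$ degree set $\contract$, but that is not literally true (for instance, if the four boundary vertices happened to form an independent set in $\contract$, then $e(U_R)=0$ for all $R$ and $c^*=\nf13$). Your phrasing---that tightness is ``witnessed by any such $\contract$ in which three boundary vertices span a triangle''---correctly acknowledges that the lower bound is instance-dependent, and in any case only the upper bound $c^*\le\nf12$ (resp.\ $\nf23$ for $K_5$) is used in \S\ref{sec:degree-edges} for the charging argument. So this imprecision in the lemma's statement is inherited from the paper, not introduced by your proof, and is harmless downstream.
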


The fact that we may achieve a value of $c = \nf12$ instead of $c = \nf23$ as long as $\contract$ is not a $K_5$ motivates why we utilize a different sampler for $K_5$. In fact, $c = \nf23$ on $K_5$ is achieved by distributing uniform charge over internal edges.

\subsection{Charging of non-$K_5$ Degree Edges}

We now analyze the expected net decrease for degree edges. By \Cref{obs:exactreductions}, the expected decrease for a non-special degree edge starts off as at least
$p_{hs} \cdot \tau$ or $p_{hs} \cdot \gamma$, since $p_{hs} \leq p$.
We will analyze the charge on non-$K_5$ degree and $K_5$ degree edges separately, as a different sampler is used when $G\dip{S} = K_5$. We begin with the former.

\subsubsection{Charging of non-special edges} \label{sec:nonsp}
\begin{enumerate}
    \item \textbf{Case 1: All external edges are non-$K_5$ degree edges.}
Let $x: B \times F \rightarrow \mathbb{R}_+$ be any function satisfying the constraints in \Cref{hall} for $c=\nf{1}{2}$ (by \Cref{Hall's constants} such an $x$ exists, and can be found efficiently using max-flow min-cut). The charging scheme is as follows: for each external edge $e$ and boundary vertex $u_e$, if the cut $\delta(u)$ is odd and the edge $e$ is reduced, charge each internal edge $f$ incident to $u_e$ by $\tau \cdot x(e,f)$. The constraint (\ref{constraint1}) ensures that the charge neutralizes the reduction of $e$ by $\tau$. For any internal edge $f$, let $\delta(f)$ denote the external edges with which $f$ shares an endpoint, i.e., the neighbors of $f$ in the bipartization $H = (B,F,E)$. We have 
\begin{align}
\mathbb{E}[\text{charge to } f] &= \sum_{e \in \delta(f)} \tau \cdot x(e,f) \cdot \mathbb{P}(u_e \odd \wedge e \red) \nonumber \\
& \leq \tau \cdot c \cdot \mathbb{P}(u_e \odd \mid e \red) \cdot \mathbb{P}(e \red) \label{chargeeq} 
\end{align}
Using the naive bound that $\pr(u_e \odd \mid e \red) \leq 1$, we obtain that the the expected decrease is at least 
$\boxed{p_{hs}\tau - \frac{\tau}{2} \cdot p.}$
\item \textbf{Case 2: All external edges are $K_5$ degree edges.}
An analogous argument shows that the expected decrease is at least 
$\boxed{p_{hs}\tau - \frac{\gamma}{2} \cdot p.}$

\item \textbf{Case 3: All external edges are cycle edges.} We have that $\mathbb{P}(u_e \odd \mid e \red) = \frac{1}{2}$ in this case. This is because
\[\Pr(u_e \odd \mid e \reduced) = \Pr\left(X_e \neq \sum_{f \in \delta(e)} X_f \mid e \reduced \right) = \Pr\left(X_e \neq \sum_{f \in \delta(e)} X_f\right) = \frac{1}{2}. \]
where in the penultimate equality we have used that the event that $e$ is reduced is independent of $X_e$, since $e$ is settled at a cycle set (and independent of $X_f$ for $f \in \delta(e)$, since $e$ is settled at a higher level than such $f$). So the expected decrease is at least 
$\boxed{p_{hs}\tau - \frac{\beta}{2} \cdot \frac{1}{2} \cdot p.}$

\item \textbf{Case 4: The external edges are a combination of non-$K_5$ degree edges, $K_5$ degree edges, and cycle edges.} In this case, we use an averaging argument to show that the expected charge is no worse than $p \cdot \max\{\frac{\tau}{2}, \frac{\gamma}{2}, \frac{\beta}{4}\}.$ See Appendix \ref{sec:charging proofs} for details. 
\end{enumerate}

\subsubsection{Charging of special edges}
These edges are never charged, as their endpoints are both non-boundary. Also, these cannot be $K_5$ degree edges by definition. So the expected decrease is equal to the expected reduction, which is at least
    $ \boxed{p_{sp}\tau = \frac{\tau}{36}.}$

\subsection{Charging of $K_5$ Degree Edges}
$K_5$ degree edges have an initial decrease of exactly $p\gamma$, as they are neither special nor half-special. We will use the following lemma.

\begin{lemma} \label{lem:k5oddstuff}
Let $G\dip{S}=K_5$. Let $f$ be an external edge and $u$ be the boundary vertex that is one of its endpoints. Then
\[Pr(u \odd \mid f \red) = \frac{1}{2}. \]
\end{lemma}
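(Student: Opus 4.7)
The plan is to combine a symmetry argument about the Hamiltonian-path sample taken inside $S$ with the fact that Algorithm~\ref{algo} runs the samplers at different nodes of the hierarchy $\cT$ independently. Since $G\dip{S}=K_5$, the graph on the four internal vertices is $K_4$, the external vertex $v_{\bar S}$ is joined to each internal vertex by exactly one edge, and so the boundary vertex $u$ has a single external edge, namely $f$. Writing $T_S$ for the (uniformly random) Hamiltonian path sampled at the node $S$, this yields
\[
|\delta(u)\cap T| \;=\; \deg_{T_S}(u) + \1[f\in T].
\]

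The first step is to separate the randomness. The event $\{f\red\}$, and even the indicator $\1[f\in T]$, is determined entirely by the sampling performed at the last set of $f$, which is a strict ancestor of $S$ in $\cT$. Since Algorithm~\ref{algo} draws a fresh independent sample at every node, $T_S$ is independent of the pair $(\1[f\red],\1[f\in T])$; in particular $\deg_{T_S}(u)$ is independent of the conditioning event $\{f\red\}$ and of $\1[f\in T]$.

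The second step invokes symmetry of the uniform distribution on the $12$ Hamiltonian paths of $K_4$: each such path has $2$ endpoints out of $4$ vertices, so $\Pr(u\text{ is an endpoint})=\tfrac12$. Consequently $\deg_{T_S}(u)\in\{1,2\}$ is uniformly distributed, and its parity is $\mathrm{Bernoulli}(\tfrac12)$, independently of everything outside $S$. (Note that the uniform distribution on Hamiltonian paths of $K_4$ is indeed consistent with the required marginal $x_e=\tfrac12$ on each edge, by a direct count.)

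Combining the two, for either value of $\1[f\in T]$ conditional on $\{f\red\}$ the independent fair-coin parity of $\deg_{T_S}(u)$ makes $|\delta(u)\cap T|$ equally likely to be even or odd, giving $\Pr(u\odd\mid f\red)=\tfrac12$. The only point requiring care is the independence of $T_S$ from the higher-level randomness that governs $\{f\red\}$, and that is immediate from the structure of Algorithm~\ref{algo}; I do not foresee any real obstacle beyond this bookkeeping.
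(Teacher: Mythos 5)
Your proof is correct, and it supplies the argument the paper only calls ``straightforward'' without writing out. The two key facts you isolate are exactly the right ones: (i) the Hamiltonian path $T_S$ on $K_4$ is drawn independently of everything at higher nodes of $\cT$, so conditioning on $\{f \red\}$ (and on $\1[f \in T]$) leaves the law of $\deg_{T_S}(u)$ unchanged; and (ii) under the path sampler $\deg_{T_S}(u)\in\{1,2\}$ with each parity occurring with probability $\nf12$. The identity $|\delta(u)\cap T| = \deg_{T_S}(u) + \1[f\in T]$ then gives the claim.

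One small remark that makes the argument slightly more robust: you don't actually need the distribution on Hamiltonian paths to be \emph{uniform}. Any marginal-preserving distribution supported on Hamiltonian paths of $K_4$ already forces $\Pr[\deg_{T_S}(u)=1]=\Pr[\deg_{T_S}(u)=2]=\nf12$, because $\E[\deg_{T_S}(u)]=3\cdot\nf12=\nf32$ and the degree is constrained to $\{1,2\}$. So the lemma holds for whichever marginal-preserving path sampler the algorithm uses. (It is, however, essential that the sampler is restricted to Hamiltonian paths rather than arbitrary spanning trees of $K_4$: a star would make $\deg_{T_S}(u)$ odd with probability $1$, and the lemma would fail.) This is consistent with the paper's choice in Algorithm~\ref{algo} to treat the $K_5$ case by sampling a random path.
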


\begin{proof}
We note that $\delta(f)$ is independent of the events that $f \in T$ and $f$ is reduced, since $f$ is settled at a higher level than the edges in $\delta(f)$. So,
\begin{align*}
    \Pr[u \odd \mid f \reduced] &= \Pr[X_f  = 1 \wedge \delta(f) \text{ even}   \mid f \reduced] + \Pr[X_f  = 0 \wedge \delta(f) \odd   \mid f \reduced] \\
    &= \Pr[\delta(f) \text{ even}] \cdot \Pr[X_f = 1 \mid f \reduced] + \Pr[\delta(f) \odd] \cdot \Pr[X_f = 0 \mid f \reduced] \\
    &= \frac{1}{2} \cdot (\Pr[X_f = 1 \mid f \reduced] + \Pr[X_f = 0 \mid f \reduced]) \\
    &= \frac{1}{2}
\end{align*}
where in the penultimate line we have used that we sample a path on the internal vertices of the $K_5$.
\end{proof}

We will factor in this probability into the charging of $K_5$ degree edges below. 
\begin{enumerate}
    \item \textbf{All external edges are non-$K_5$ degree edges.}  Since the edge being charged is a $K_5$ edge, \Cref{Hall's constants} gives us a worst-case expected charge of $p\cdot \frac{2\tau}{3}$. However, factoring in $Pr(u \odd \mid f \red)=\frac{1}{2}$ for every external edge $f$, we have a worst-case expected charge of $p \cdot \frac{2\tau}{3} \cdot \frac{1}{2}$, which by (\ref{chargeeq}) gives us an expected net decrease of at least 
    \[\boxed{p\gamma - p\frac{\tau}{3}.} \]
    \item \textbf{All external edges are $K_5$ degree edges.} Using similar reasoning as above, we have a worst-case expected charge of $p \cdot \frac{2\gamma}{3} \cdot \frac{1}{2}$, which gives us an expected net decrease of at least
    \[\boxed{p\gamma - p\frac{\gamma}{3}.} \]
    \item \textbf{All external edges are cycle edges. } Using the same reasoning as above, we get an expected net decrease of at least
    \[\boxed{p\gamma - p \frac{\beta}{3}.} \]
    \item \textbf{The external edges are a combination of non-$K_5$ degree edges, $K_5$ degree edges, and cycle edges.} An averaging argument as in the previous section shows that the three cases above are the only possible worst cases. 
\end{enumerate}

\subsection{Setting the Parameters}
For fixed $\lambda$, finding the optimal values of $\beta, \gamma,
\tau$ can be done with a linear program. In
particular, we maximize the minimum expected decrease $\delta$, which is the minimum of the boxed expressions in \S\ref{sec:cycle-edges} and \S\ref{sec:degree-edges}. The additional constraints are that $\tau \leq \gamma \leq \beta$, $\beta \leq \nf{1}{12}$, $\beta \geq 2\tau$, and $\beta \geq 2\gamma$.  We
then optimize over $\lambda$. This gives $\lambda = 0.71$, $\beta = \nf{1}{12}$, $\gamma = \nf{13}{36}$, $\tau = \nf{13}{36}$, and $\delta = 0.000792$. We have $\varepsilon = 2\delta = 0.001584$, proving \Cref{thm:Ojoinlowering}. 

We note that using the max entropy sampler alone $(\lambda = 1)$ gives $\varepsilon = 0.00146$ and using the matroid intersection sampler alone $(\lambda = 0)$ gives $\varepsilon = 0.00093$.

\section*{Acknowledgments}
The authors thank Nathan Klein and Mehrshad Taziki for pointing out an error in a conditional probability calculation in case (ii) of \Cref{sec:cycle-edges} in a previous version.

{\small
\bibliographystyle{alpha}
\bibliography{bib}

\newcommand{\etalchar}[1]{$^{#1}$}
\begin{thebibliography}{AGM{\etalchar{+}}17}

\bibitem[AGM{\etalchar{+}}17]{AGMGS17}
Arash Asadpour, Michel~X. Goemans, Aleksander M\'{z}dry, Shayan~Oveis Gharan,
  and Amin Saberi.
\newblock An o(log n/log log n)-approximation algorithm for the asymmetric
  traveling salesman problem.
\newblock {\em Oper. Res.}, 65(4):1043–1061, August 2017.

\bibitem[BBL09]{BBL09}
Julius Borcea, Petter Br\"and\'ed, and Thomas~M. Liggett.
\newblock Negative dependence and the geometry of polynomials.
\newblock {\em Journal of the American Mathematical Society}, 22(2):521--567,
  2009.

\bibitem[Chr76]{Chr}
Nicos Christofides.
\newblock Worst-case analysis of a new heuristic for the travelling salesman
  problem.
\newblock Technical Report 338, Graduate School of Industrial Administration,
  Carnegie-Mellon University, Pittsburgh, PA, 1976.

\bibitem[FF09]{FF}
Tam\'as Fleiner and Andr\'as Frank.
\newblock A quick proof for the cactus representation of mincuts.
\newblock Technical Report Quick-proof No.~2009-03, EGRES, 2009.

\bibitem[HN19]{HN19}
Arash Haddadan and Alantha Newman.
\newblock Towards improving {Christofides} algorithm for half-integer {TSP}.
\newblock In {\em 27th Annual European Symposium on Algorithms, {ESA} 2019,
  September 9-11, 2019, Munich/Garching, Germany}, pages 56:1--56:12, 2019.

\bibitem[Hoe56]{Hoe56}
Wassily Hoeffding.
\newblock On the distribution of the number of successes in independent trials.
\newblock {\em The Annals of Mathematical Statistics}, pages 713--721, 1956.

\bibitem[KKO20a]{KKO19}
Anna~R. Karlin, Nathan Klein, and Shayan {Oveis Gharan}.
\newblock An improved approximation algorithm for {TSP} in the half integral
  case.
\newblock In {\em STOC}, pages 28--39. {ACM}, 2020.

\bibitem[KKO20b]{KKO20}
Anna~R. Karlin, Nathan Klein, and Shayan {Oveis Gharan}.
\newblock A (slightly) improved approximation algorithm for metric {TSP}.
\newblock {\em CoRR}, abs/2007.01409, 2020.

\bibitem[KKO21]{KKO21}
Anna Karlin, Nathan Klein, and Shayan {Oveis Gharan}.
\newblock A (slightly) improved bound on the integrality gap of the subtour
  {LP} for {TSP}.
\newblock {\em CoRR}, abs/2105.10043, 2021.

\bibitem[OSS11]{OSS11}
Shayan {Oveis Gharan}, Amin Saberi, and Mohit Singh.
\newblock A randomized rounding approach to the traveling salesman problem.
\newblock In {\em FOCS}, pages 550--559. {IEEE} Computer Society, 2011.

\bibitem[Ser78]{Ser}
Anatoliy Serdyukov.
\newblock O nekotorykh ekstremal’nykh obkhodakh v grafakh.
\newblock {\em Upravlyaemye sistemy}, 17:76--79, 1978.

\bibitem[SW90]{SW90}
David~B. Shmoys and David~P. Williamson.
\newblock Analyzing the {H}eld-{K}arp {TSP} bound: A monotonicity property with
  application.
\newblock {\em Information Processing Letters}, 35(6):281--285, 1990.

\bibitem[SWvZ14]{SWZ14}
Frans Schalekamp, David~P. Williamson, and Anke van Zuylen.
\newblock 2-matchings, the traveling salesman problem, and the subtour {LP:}
  {A} proof of the boyd-carr conjecture.
\newblock {\em Math. Oper. Res.}, 39(2):403--417, 2014.

\bibitem[Wol80]{Wol80}
Laurence~A. Wolsey.
\newblock Heuristic analysis, linear programming and branch and bound.
\newblock In {\em Combinatorial Optimization II}, volume~13 of {\em
  Mathematical Programming Studies}, pages 121--134. Springer, 1980.

\end{thebibliography}
}

\appendix

\section{Details of the Cut Hierarchy Construction}
\label{sec:app-cactus}

\subsection{The Algorithm} \label{welldefinedalgo}

\begin{fact}[Lemma~2 in \cite{FF}] \label{crossingtightsets}
  Let $k \geq 1$ be an integer, $H=(W,F)$ be a $4$-regular, $4$EC graph in which there is a proper min-cut, and every proper min-cut is crossed by some proper min-cut. Then $H$ is a double cycle. 
\end{fact}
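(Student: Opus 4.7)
The plan is to run the standard uncrossing argument for minimum cuts and then iterate until no proper piece remains. Since $H$ is 4-regular and 4EC the minimum cut has value 4, so ``min-cut'' means $|\partial X|=4$. Start with any proper min-cut $\partial S$ and, using the hypothesis, pick a proper min-cut $\partial T$ that crosses it. Let $A=S\cap T$, $B=S\setminus T$, $C=T\setminus S$, $D=W\setminus(S\cup T)$; all four are non-empty by the definition of crossing. Submodularity of the cut function gives
\[
|\partial A|+|\partial D|\le |\partial S|+|\partial T|=8,\qquad |\partial B|+|\partial C|\le 8,
\]
and 4-edge-connectivity forces each $|\partial X|\ge 4$, so equality holds throughout and every quadrant is itself a min-cut. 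Equality in submodularity, combined with the usual double-counting of edges across $\partial S$ and $\partial T$, forces $e(A,D)=e(B,C)=0$ and the ``quadrilateral'' relations $e(A,B)=e(C,D)$, $e(A,C)=e(B,D)$, $e(A,B)+e(A,C)=4$.

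Next I would iterate. Whenever a quadrant $X\in\{A,B,C,D\}$ has $|X|\ge 2$, the set $X$ itself is a proper min-cut, and by the hypothesis it is crossed by yet another proper min-cut $\partial T'$. Repeating the analysis inside $X$ subdivides it into two consecutive ``arcs'' of a larger polygonal structure, preserving the property that non-adjacent pieces have no edges between them. By induction on $|W|$ this produces a circular ordering $v_1,\dots,v_n$ of the vertices of $W$ in which $e(v_i,v_j)=0$ whenever $i,j$ are not cyclically adjacent. Since $H$ is 4-regular and each $v_i$ touches only its two neighbors on the cycle, and since 4EC rules out a single pair of parallel edges carrying all 4 incident edges (that would be a 2-cut, contradicting 4EC if $n\ge 3$), each cyclically adjacent pair is joined by exactly two parallel edges, which is the definition of a double cycle.

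The main obstacle is checking that the recursive subdivision really embeds into a \emph{single} cyclic ordering rather than branching. Concretely, when a new crossing min-cut $\partial T'$ refines $A$ into $A_1,A_2$, I need to argue that $A_1$ and $A_2$ inherit the position of $A$ in the existing polygon, i.e.\ that the four edges leaving $A$ split between them in a way compatible with $B$ and $C$ being the two neighbors of $A$. This again follows from the same equality-in-submodularity computation applied to $\partial T'$ and each of $\partial S$, $\partial T$: any edges from $A_1$ or $A_2$ to the ``far'' side $D$ would be forbidden by $e(A,D)=0$, and any edge to both $B$ and $C$ from the same sub-piece is ruled out by a similar crossing-count with a further min-cut. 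Once this is verified, the circular structure propagates all the way down to singletons and the double-cycle conclusion follows as above.
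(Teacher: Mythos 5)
The paper does not prove this statement; it simply cites it as Lemma~2 of \cite{FF}, where it is part of the cactus/polygon representation theory of minimum cuts. So your proof stands or falls on its own.

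Your first-level uncrossing is correct: submodularity plus 4-edge-connectivity forces $|\partial A|=|\partial B|=|\partial C|=|\partial D|=4$, and the edge-count identities give $e(A,D)=e(B,C)=0$, $e(A,B)=e(C,D)$, $e(A,C)=e(B,D)$. You should actually push this one step further---since $|\partial S|=e(A,C)+e(B,D)=2e(A,C)=4$, each surviving bundle has \emph{exactly} two edges, which is the double-cycle structure at this coarse level. The genuine gap, however, is precisely the one you flag. When you refine $A$ by a crossing min-cut $T'$, the uncrossing of $A$ with $T'$ controls only edges among $A_1=A\cap T'$, $A_2=A\setminus T'$, $E_1=T'\setminus A$, $E_2=\overline{A\cup T'}$. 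The sets $E_1,E_2$ need not respect your earlier partition into $B,C,D$, and your submodularity computations do \emph{not} rule out the ``mixed'' configuration where $A_1$ sends one edge to $B$ and one to $C$ (and $A_2$ likewise). Checking the candidate cuts in that configuration, one gets $|\partial(A_1\cup B)|=|\partial(A_1\cup C)|=6$, so 4-edge-connectivity yields no contradiction, and ``a similar crossing-count with a further min-cut'' is an assertion rather than an argument. Ruling out the mixed case is the actual content of the lemma: one needs either the polygon representation of cross-closed set families, or a careful minimal-counterexample argument (for instance, extract a size-2 tight set $\{a,b\}$ with $e(a,b)=2$, contract it, verify the ``every proper min-cut is crossed'' hypothesis is inherited by $H/\{a,b\}$---which itself requires a case analysis on cuts separating $a$ from $b$---and induct). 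As written, your proof is a correct opening move but is missing its central step, and the sentence meant to fill it does not do so.
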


\begin{proof}[Proof of \Cref{clm:dcycle}.]
Let $G_f$ refer to the graph in the claim, and $G$ refer to the input to the algorithm.

\emph{Case 1: There is no proper tight set in $G_f$.} Let $u$ be the vertex in $G_f$ such that the two edges that between $r_0$ and $u_0$ in $G$ are between $r_0$ and $u$ in $G_f$. Then either $G_f$ is a graph on 2 vertices with four parallel edges (i.e., a double cycle on 2 vertices), or $\{r_0, u\}$ is a tight set in $G_f$. In the latter case, since $\{r_0, u\}$ is not proper, $G_f$ is a double cycle with 3 vertices.
  \emph{Case 2: There is a proper tight set in $G_f$, and every proper
    tight set is crossed by another proper tight set.} By Fact
  \ref{crossingtightsets}, $G_f$ is a double cycle.
\end{proof}

\begin{proof}[Proof of \Cref{clm:dichotomy}.]
Let $v_{\bar{S}}$ denote the vertex corresponding to the contraction of $V \setminus S$ in $G'$. Suppose $G'$ has a proper min-cut. Let $S'$ be the shore of this min-cut not containing $v_{\bar{S}}$. Then $S' \subset S$. But by minimality of $S$, $S'$ is crossed by another proper tight set. Since this must be true for every proper min-cut $S'$ in $G'$, by Fact \ref{crossingtightsets}, $G'$ must be a double cycle. 
\end{proof}

\subsection{Cactus Representation of Min-Cuts} \label{cactusequiv}

\CutMapping*

To prove the above claim, we recall the cactus representation of min-cuts (see, for e.g., \cite{FF}, for a full exposition). The key idea to prove \Cref{lem:min-cuts-mapping} is that Algorithm \ref{algo} implicitly constructs the cactus representation of min-cuts. 

\begin{definition}
  A \emph{cactus} is a loopless, 2EC graph in which each edge belongs to exactly one cycle.
\end{definition}
Note that this includes cycles on two vertices, containing exactly two parallel edges. The min-cuts of a cactus are obtained by cutting any two edges belonging to the same cycle. 

\begin{theorem}[Cactus Representation of Min-Cuts, \cite{FF}, Theorem 7] \label{cactusthm}
   Let $G$ be a 4-regular, 4EC graph. There is a cactus $C=(U,F)$ and a mapping $\phi: V \rightarrow U$ so that if $U_1$ and $\bar{U_1}$ are two shores of a min-cut in $C$, then $\phi^{-1}(U_1)$ and $\phi^{-1}(\bar{U_1})$ are two shores of a min-cut in $G$. Further, every min-cut in $G$ can be obtained this way for some $U_1$. 
\end{theorem}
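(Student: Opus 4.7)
The plan is to verify that the hierarchy $\cT$ together with the local multigraphs $G\dip{S}$ encodes a cactus representation of the min-cuts of $G$, and then to read off the classification from \Cref{cactusthm}. I would construct an explicit cactus $C = (U, F)$ together with a map $\phi : V_G \to U$ built from $\cT$, and check that the min-cuts of $C$, pulled back through $\phi$, are exactly the cuts described in (a) and (b).

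First I would build $C$ top-down along $\cT$, starting from a single vertex $u_{r_0}$ representing $r_0$. For each non-leaf node $S$ processed after its parent, locate the unique vertex $u \in U$ already representing $\bar S$ and attach a gadget at $u$: a $2$-cycle (two parallel edges) if $S$ is a degree node, adding one new ``anchor'' vertex for $S$'s descendants; or a $(k+1)$-cycle if $S$ is a cycle node whose $G\dip{S}$ is a double cycle on $k+1$ vertices (including $v_{\bar S}$), adding $k$ new vertices in the same cyclic order as the children of $S$ appear in $G\dip{S}$. Then set $\phi(v)$ to the cactus vertex corresponding to the innermost ancestor gadget vertex of $v$, and $\phi(r_0) = u_{r_0}$. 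By construction, min-cuts of $C$ are obtained by cutting two edges of a single cycle, and their $\phi$-preimages are: (i) for a $2$-cycle coming from a degree node $S$, the single cut $\partial S$, which is form (a); (ii) for a $(k+1)$-cycle coming from a cycle node $S$, a cut $\partial X$ where $X$ is a contiguous union of children of $S$ in $G\dip{S}$, which is form (b) (or (a) when $X$ is a single child).

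It remains to check that every min-cut of $G$ actually arises in $C$; I would prove this by induction on the height of $\cT$. Take a min-cut $(A, \bar A)$ of $G$. If $|A| = 1$ or $|\bar A| = 1$ the cut is $\partial v$ for some vertex $v$, and the corresponding leaf node (or the root, when $v = r_0$) produces it via form (a). Otherwise $A$ is a proper tight set, so by the definition of critical sets in Algorithm~\ref{algo}, no top-level critical child $S$ of the root is crossed by $A$; equivalently, for each such $S$, one of $A \subseteq S$, $\bar A \subseteq S$, $S \subseteq A$, $S \subseteq \bar A$ holds. If some $S$ satisfies $A \subseteq S$ or $\bar A \subseteq S$, I recurse inside the subtree at $S$, noting that min-cuts of $G$ contained in $S$ are precisely the min-cuts of $G\dip{S}$ whose non-external shore lies inside $S$ (contracting $\bar S$ preserves edges with both endpoints in $S$). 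Otherwise both $A$ and $\bar A$ are disjoint unions of top-level children of the root, and the cut becomes a $4$-edge cut in the root's local multigraph; by \Cref{clm:dcycle} this multigraph is a double cycle, and the only $4$-edge cuts in a double cycle are contiguous arcs---form (b) for the root.

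The hard part will be making the inductive step airtight, in particular ensuring that the cycle-node gadgets faithfully capture \emph{all} families of pairwise-crossing min-cuts at a given level. This is precisely what Fact~\ref{crossingtightsets} provides: whenever proper tight sets pairwise cross at some level of the contracted graph, that local multigraph must be a double cycle, and the $(k+1)$-cycle gadget then exactly represents the associated family of crossing cuts. The remaining subtlety is verifying the bijection between min-cuts of $G$ lying inside a critical set $S$ and min-cuts of $G\dip{S}$ not touching the external vertex $v_{\bar S}$, which follows since contraction of $\bar S$ preserves the edge set of any cut whose shore is contained in $S$.
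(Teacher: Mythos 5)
The statement you were asked to address, \Cref{cactusthm}, is the classical cactus representation theorem of Dinitz--Karzanov--Lomonosov, which the paper imports as a black box from \cite{FF} (Theorem~7 there); the paper gives no proof of it, and indeed your own argument implicitly invokes it (``read off the classification from \Cref{cactusthm}''), which would be circular if the goal were to prove \Cref{cactusthm} itself. What you are actually sketching is much closer to the paper's proof of \Cref{lem:min-cuts-mapping} in Appendix~\ref{cactusequiv}: build a cactus $C$ from the hierarchy $\cT$, define $\phi$, and show that the $\phi$-preimages of the min-cuts of $C$ are exactly the cuts of types (a) and (b). That proof in turn \emph{uses} \Cref{cactusthm} (by observing that the constructed cactus coincides with the one in \cite{FF}), rather than re-proving the completeness of the cactus representation from scratch.

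Within that reinterpretation, there is a genuine gap in your cactus construction. For a degree node $S$ you attach a single $2$-cycle with one new anchor vertex $a_S$, and you set $\phi$ so that \emph{all} leaf children of $S$ map to $a_S$. Consequently the only min-cut of $C$ that touches this gadget pulls back to $\partial S$; the singleton degree cuts $\partial v$ for the (possibly several) leaf children $v$ of $S$ --- which are min-cuts of $G$, since $G$ is $4$-regular --- are not represented in $C$ at all. To see the failure concretely, take a degree set $S$ with $G\dip{S} = K_5$, so $S$ has four leaf children; your cactus collapses all four to one vertex and produces only the single cut $\partial S$. The fix matches the paper's construction in Appendix~\ref{cactusequiv}: for a degree node $S$, create a vertex $u_T$ for \emph{every} child $T$ of $S$ (leaf or not) and join $u_S$ to each $u_T$ by its own pair of parallel edges, so that each child's degree cut corresponds to a $2$-cycle of $C$. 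With that repair, and after resolving the circularity by either citing \Cref{cactusthm} (as the paper does) or independently verifying by induction that every min-cut of $G$ is realized (using \Cref{crossingtightsets} at cycle levels and \Cref{clm:dichotomy} at degree levels), your argument becomes a correct proof of \Cref{lem:min-cuts-mapping}, but it is not a proof of \Cref{cactusthm}.
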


\begin{figure}\centering
\includegraphics[scale=.6]{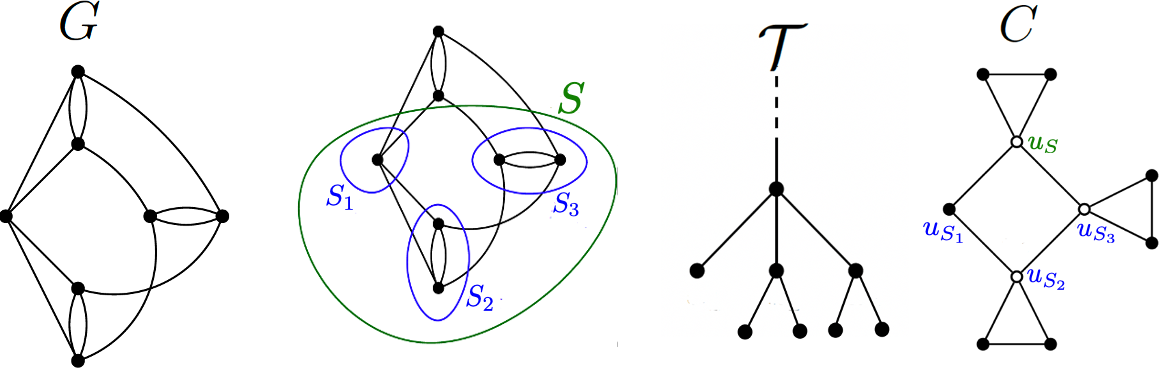}
\caption{Construction of the cactus $C$ from the hierarchy $\cT$ in the proof of \Cref{lem:min-cuts-mapping}.}
\label{fig:cactus}
\end{figure}

\begin{proof}[Proof sketch of \Cref{lem:min-cuts-mapping}.]
  Construct a cactus $C=(U,F)$ as follows. Create a vertex $u_S$ in $U$ for every node $S$ in the hierarchy. If $S$ is a cycle node, make a cycle between $u_S$ and the vertices $u_T$ for all nodes $T$ which are children of $S$. If $S$ is a degree node, include two parallel edges from $u_S$ to each $u_T$ for which $T$ is a child of $S$. See Figure \ref{fig:cactus}. All of the terminal vertices in $U$, except for $u_{V_G \setminus r_0}$, correspond to all of the leaf nodes in $\cT$, which in turn correspond to all of the vertices in $V_G \setminus \{r_0\}$. We identify the vertex $u_{V_G \setminus r_0}$ in $U$ with the vertex $r_0$ in $V_G$. In other words, the mapping $\phi$ in Theorem \ref{cactusthm} can be defined as $\phi(v) = u_v$ for all $v \in V_G \setminus r_0$, and $\phi(r_0) = u_{V_G \setminus r_0}$. 
  
  Since the construction of the cactus in the proof of Theorem \ref{cactusthm} given in \cite{FF} and the construction of $C$ above are equivalent, $\phi$ satisfies the conclusions of Theorem \ref{cactusthm}. Since the min-cuts of a cactus are obtained by cutting any two edges belonging to the same cycle, \Cref{lem:min-cuts-mapping} follows. 
  \end{proof}

\section{Samplers for Odd $|V(H)|$}
\label{sec:odd-sampler}

The main change from the case where $V(H)$ is even is that $x = \nicefrac12 \cdot \ones$
is not a point in the perfect matching polytope: in fact, $H$ has no
perfect matchings because it has an odd number of
vertices.

To fix this, split the external vertex $r$ into two external vertices
$r_1, r_2$, each incident to some two of the edges in $\partial r$,
and then add two parallel edges between $r_1, r_2$. Call this multigraph
$\Hh$. Again, all vertices and edges between vertices in
$V \setminus \{r_1, r_2\}$ are called internal, vertices in
$\{r_1, r_2\}$ and edges in $\partial r_1 \cup \partial r_2$ are
external, and internal vertices adjacent to external vertices are 
boundary vertices.

\begin{fact}
  The graph $\Hh$ is also $4$-regular and 4EC; it has an even
  number of vertices. However, it contains a single proper
  min-cut separating the internal vertices
  $I := V \setminus \{r_1, r_2\}$ from the two external vertices.  Any
  perfect matching $M$ of $\Hh$ contains either zero or two
  edges in $\partial I$.
\end{fact}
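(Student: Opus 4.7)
My plan is to verify the four assertions of the fact in turn, with the $4$-edge-connectivity and the uniqueness of the proper min-cut forming the technical core.

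The regularity and parity claims are immediate. Every internal vertex retains its four neighbors from $H$; each of $r_1, r_2$ is assigned two of the edges in $\partial_H r$ and is joined to the other by two parallel edges, for a total degree of four. Since $|V(H)|$ is odd in this case, $|V(\Hh)| = |V(H)| + 1$ is even.

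For the remaining two claims I carry out a single case analysis on an arbitrary cut $\partial_{\Hh} S$, split by whether $r_1, r_2$ lie on the same or opposite sides. In the same-side case, the two parallel edges do not cross the cut, and contracting $r_1, r_2$ back to a single vertex yields a cut in $H$ of the same size; the $4$-edge-connectivity of $H$ gives $|\partial_{\Hh} S| \geq 4$, and any equality case with $S$ proper in $\Hh$ would yield a proper min-cut in $H$, which is forbidden by hypothesis. In the opposite-side case, say $r_1 \in S$ and $r_2 \notin S$, the two parallel edges contribute $2$ to the cut; writing $A = S \cap I$ and $B = I \setminus A$, I count the remaining contributions using that each $r_i$ has exactly two external edges. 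This yields an identity of the form $|\partial_{\Hh} S| = 4 + (\text{slack})$, where the slack is non-negative by $4$-edge-connectivity of $H$ applied to $A$ (when $A$ is non-empty; the boundary cases $A = \emptyset$ or $A = I$ make $S$ a singleton or co-singleton directly). Pinning the slack to zero to identify min-cuts forces $A$ to itself be a min-cut in $H$, so by the no-proper-min-cuts hypothesis $A \in \{\emptyset, I\}$; tracing back shows the resulting $S$ is always a singleton or co-singleton in $\Hh$, hence not proper. The only proper min-cut is therefore $\partial I$, which consists of exactly the four external edges and is proper whenever $|V(H)| \geq 3$.

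Finally, the matching claim follows from the standard parity identity $|M \cap \partial S| \equiv |S| \pmod 2$ for any perfect matching $M$: since $|I| = |V(H)| - 1$ is even, $|M \cap \partial I| \in \{0, 2, 4\}$, and the value $4$ is excluded because it would force both $r_1$ and $r_2$ to each be incident to two edges of $M$. I expect the main obstacle to be the bookkeeping in the opposite-side case, where simultaneously establishing the inequality $|\partial_{\Hh} S| \geq 4$ and pinning down the equality cases both hinge on the no-proper-min-cuts hypothesis for $H$.
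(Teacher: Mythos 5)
The paper states this Fact without proof, so I am evaluating your argument on its own terms. Your case split, the same-side contraction, and the matching-parity argument are all in the right spirit, but there is a genuine gap in the opposite-side case at the point where you pin the slack to zero. Writing $A = S\cap I$, $B = I\setminus A$, and $e_{A1},e_{B1}$ for the numbers of edges from $A,B$ to $r_1$, the identity is $|\partial_{\Hh}S| = 4 + \bigl(|\partial_H A| - 2e_{A1}\bigr)$ (using $e_{A1}+e_{B1}=2$), so zero slack requires \emph{both} $|\partial_H A|=4$ and $e_{A1}=2$. You use only the first condition, conclude $A$ is a min-cut of $H$, and then assert that the no-proper-min-cuts hypothesis forces $A\in\{\emptyset,I\}$. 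But that hypothesis only says the min-cuts of $H$ are degree cuts; since $A\subseteq I = V(H)\setminus\{r\}$, this actually gives $A\in\{\emptyset,I\}\cup\{\text{singletons of }I\}$. For a singleton $A=\{a\}$ the set $S=\{a,r_1\}$ has two elements and is proper in $\Hh$, so your final ``tracing back shows $S$ is a singleton or co-singleton'' does not apply, and the argument as written fails to rule out a second proper min-cut.

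The missing ingredient is the second zero-slack condition $e_{A1}=2$ combined with the simplicity of $H$: if $A=\{a\}$ and $e_{A1}=2$, then $a$ would have two parallel edges to $r$ in $H$, which is impossible because a 4-regular 4EC graph with no proper min-cuts is simple (see the footnote in \S\ref{sec:samplers}, where parallel edges $u,v$ would make $\partial\{u,v\}$ a proper min-cut). Hence $e_{A1}\le 1$ when $|A|=1$, and $|\partial_H A|\ge 6$ when $2\le|A|\le|I|-1$, so the slack is at least $2$ in every non-boundary opposite-side case; together with your same-side analysis this pins down $\partial I$ as the unique proper min-cut. There is also a small imprecision in the same-side case: when $S=I$ the contracted cut in $H$ is $\partial_H r$, a \emph{degree} cut rather than a proper one, so the phrase ``would yield a proper min-cut in $H$, which is forbidden'' does not literally apply to $S=I$; that case is precisely the exception producing $\partial I$, which you do acknowledge afterward, but the wording of the case analysis should exclude it explicitly.
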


The sampling procedure for trees on $H[I] = \Hh[I]$ is conceptually similar to
the one above, but with some crucial changes because of the
proper min-cut.
\begin{enumerate}
\item Since $\Hh$ is $4$-regular and 4EC, and
  $|V(\Hh)|$ is even, setting $\nf14 = \nf{x_e}{2}$ on
  each edge gives a solution $K_{PM}(\Hh)$ in the perfect
  matching polytope. Sample a perfect matching $M$ such that for each
  edge $e$, we have $\Pr[e \in M] = \nf14$.  Again define the fractional
  solution
  \[ y_e = \nf13 + \ones_{(e \in M)} \cdot \nf23. \] This ensures
  $\E[y] = x$, but $y|_I$ may no longer be in the spanning tree polytope
  $\Kst(\Hh[I])$. Indeed, $y(\partial I) \in \{\nf43, \nf83\}$ depending on
  whether $M$ has zero or two edges crossing the cut, so $y(E(I)) \in
  \{|I|-\nf23,|I|-\nf43\}$ and not $|I|-1$ as required. Hence we cannot directly sample a spanning tree on
  $\Hh[I]$, as we did above. We fix this in
  step~\ref{item:fix-feas} below.

\item Pick a random induced sub-matching $M'$ of $M$, and define
  partition matroid constraints on the sets $L_{uv}, R_{uv}$ incident on the endpoints of %
  edges $uv$ in this sub-matching $M'$ exactly as in
  steps~\ref{item:pick-Mp-even} and~\ref{item:beetles-even} of the sampling
  procedure for the even-size case above. 

\item \label{item:fix-feas} We now address that $y|_I$ is not in the
  spanning tree polytope %
  by locally changing the fractional solution
  as follows:
  \begin{enumerate}
  \item (Local Decrease) If $M \cap \partial I = \emptyset$, then
    $y(E(I)) = |I|-\nf23$ and $y(\partial I) = \nf43$. Hence $y|_I$ is
    not a solution to $\Kst(\Hh[I])$.  To fix this, pick a random edge
    $e \in \partial I$, pick a random edge $f \in E(I)$ adjacent to
    $e$, and reduce the $y$-value of $f$ by $\nf13$ to get solution
    $\yh$. (Note this may reduce either a $1$-edge to $\nf23$, or a
    $\nf13$-edge to zero.)

  \item (Local Increase) If $|M \cap \partial I| = 2$, then
    $y(E(I)) = |I|-\nf43$, and hence $y|_I \not\in \Kst(\Hh[I])$.
    To fix this, pick a random edge $e$ from the two matching edges in
    $M \cap \partial I$, pick a random edge $f \in E(I)$ adjacent to
    $e$, and increase the $y$-value of $f$ by $\nf13$ to get solution
    $\yh$. (Note this increases one $\nf13$ edge to $\nf23$.)

    Suppose this edge $f$ is contained in some matroid constraint, say
    on set $P$ of edges:
    \begin{enumerate}
    \item If all three edges $f,g,h$ of this constraint are in
      $E(I)$, pick a random one of $\{g,h\}$, say $h$, and redefine
      $P \gets \{f,g\}$. Note that $\yh_f + \yh_g = 1$ and the
      constraint is tight.
    \item If $f$ belongs to a constraint $P = \{f,g\}$, that used to
      be $\{f,g,h\}$ but the third edge $h$ was in $\partial I$ and
      hence was dropped, retain the constraint $P = \{f,g\}$. Again
      $\yh_f + \yh_g = 1$.
    \end{enumerate}
  \end{enumerate}
\end{enumerate}

The following claims show that $\yh|_I$ has the right marginals, and
that it belongs to the spanning tree polytope. So now we can sample
a spanning tree $T$ from it (subject to the partition matroid $\cM$).
Then we show, in \S\ref{sec:corr-prop-odd}, that this tree also
satisfies negative correlation properties.%

\subsection{Feasibility of the Solution \texorpdfstring{$\yh$}{yhat}}
\label{sec:feasible}

\begin{claim}[Marginal Preserving]
  $\E[\yh_e] = x_e = \nf12$ for all $e \in E(I)$.
\end{claim}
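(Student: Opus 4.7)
The plan is to show that the local increase/decrease step has zero expected effect on each internal edge, so that the matching-stage marginal $\E[y_e] = \nf13 + \Pr[e\in M]\cdot\nf23 = \nf13 + \nf14\cdot\nf23 = \nf12 = x_e$ transfers directly to $\hat y$. Let $A=\{M\cap\partial I=\emptyset\}$ and $B=\{|M\cap\partial I|=2\}$. For $e\in E(I)$, the correction $\hat y_e - y_e$ equals $-\nf13$ on $A\cap\{e\text{ selected}\}$, $+\nf13$ on $B\cap\{e\text{ selected}\}$, and $0$ elsewhere, so it suffices to prove $\Pr[A,\, e\text{ selected}] = \Pr[B,\, e\text{ selected}]$.

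First I would verify $\Pr[A] = \Pr[B] = \nf12$. Every edge of $\partial I$ covers exactly one of $\{r_1,r_2\}$, both of which must be matched, so a.s.\ $|M\cap\partial I|\in\{0,2\}$ and $A,B$ partition the sample space. Linearity gives $\E[|M\cap\partial I|] = \sum_{f\in\partial I}\Pr[f\in M] = 4\cdot\nf14 = 1$, forcing $\Pr[B]=\nf12$. Along the way I would also record the companion fact $\{\tilde f\in M\}\subseteq B$ for every $\tilde f\in\partial I$: if $\tilde f=(r_i,v)\in M$, then $r_{3-i}$ cannot use the parallel $r_1r_2$ edge and so must itself be matched into $I$, giving $|M\cap\partial I|=2$.

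The main step is an endpoint-by-endpoint matching of selection probabilities. If $e=uv$ has no boundary endpoint, it is never selected and both sides vanish. Otherwise let $u$ be a boundary endpoint of $e$ with $m_u$ external edges incident (and hence $4-m_u$ internal edges at $u$). For each external $\tilde f$ at $u$, in case $A$ we pick $\tilde f$ from the four edges of $\partial I$ with probability $\nf14$ and then $e$ from the internal edges at $u$ with probability $\nf{1}{4-m_u}$, giving
\[ \Pr[A,\ \tilde f\text{ chosen},\ e\text{ chosen}] \;=\; \nf12\cdot\nf14\cdot\nf{1}{4-m_u} \;=\; \nf{1}{8(4-m_u)}. \]
In case $B$, using $\{\tilde f\in M\}\subseteq B$ and the uniform $\nf12$ for picking among the two matching edges in $M\cap\partial I$,
\[ \Pr[B,\ \tilde f\text{ chosen},\ e\text{ chosen}] \;=\; \Pr[\tilde f\in M]\cdot\nf12\cdot\nf{1}{4-m_u} \;=\; \nf14\cdot\nf12\cdot\nf{1}{4-m_u} \;=\; \nf{1}{8(4-m_u)}. \]
Summing over the external edges incident to $u$ (and symmetrically over the other endpoint of $e$ if it is also a boundary vertex) yields the desired identity.

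I do not anticipate a real obstacle: the calculation splits additively over the (at most two) boundary endpoints of $e$, and the per-$\tilde f$ contributions in cases $A$ and $B$ agree exactly, driven by the numerical coincidence $\Pr[A]=\nf12=2\cdot\Pr[\tilde f\in M]$. The only care needed is in the inclusion $\{\tilde f\in M\}\subseteq B$, which is the engine that makes the $-\nf13$ and $+\nf13$ corrections cancel.
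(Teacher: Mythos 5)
Your argument is correct and essentially the same as the paper's: both reduce to showing the local increase and decrease have cancelling expected effect on each internal edge, exploiting the coincidence $\Pr[A]=\nf12 = 2\cdot\Pr[\tilde f\in M]$ together with $\{\tilde f\in M\}\subseteq B$. Your per--boundary-edge decomposition is a slightly cleaner way to organize the same cancellation than the paper's two-case bookkeeping (one vs.\ two boundary endpoints), and your $m_u$ parameter is harmlessly general---in this setting $H$ is simple, so every boundary vertex has $m_u=1$.
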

\begin{proof}
  Since $\E[y_e] = \nf12$ for all $e$, we focus on the expected
  difference $\E[\yh_e - y_e]$. We only need to worry about internal
  edges incident to boundary vertices, since those are the only edges
  whose $y$-values are changed.

  \begin{itemize}
  \item Let $e$ be an edge incident only on one boundary vertex $v$,
    that has a boundary edge $f$. Now consider choosing $M$ randomly.
    When $M \cap \partial I = \emptyset$, which happens \wp $\nf12$,
    $e$ is decreased \wp $\nf1{12}$, since the total internal degree
    of boundary vertices is $12$. On the other hand, when $f \in M$ (\wp
    $\nf14$, and this implies the event $|M \cap \partial I| = 2$),
    $e$ is increased with probability $\nf16$, since the total
    internal degree of the boundary vertices incident to internal
    endpoints of $M$ is $6$. 

  \item Let $e = uv$ such that both $u,v$ are boundary vertices with
    edges $f,g \in \partial I$ respectively. Now, when 
    $M \cap \partial I = \emptyset$, which happens \wp $\nf12$, $e$ is
    decreased \wp $\nf16$ (since $e$ is counted from both ends).  When
    $f \in M$ (\wp $\nf14$, and this implies the event
    $|M \cap \partial I| = 2$), $e$ is increased with probability
    $\nf16$ ``through $f$''. And same happens for $g$.  Therefore,
    regardless of how $f$ and $g$ are correlated, $e$ is increased
    with probability $2 \cdot \nf14 \cdot \nf16 = \nf1{12}$. 
  \end{itemize}
  In both cases, the expected difference is zero, giving the proof.
\end{proof}

\begin{claim}[Feasible]
  The solutions $\yh|_{I}$ belong to the spanning tree polytope.
\end{claim}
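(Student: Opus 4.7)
The plan is to verify the three families of constraints defining $\Kst(\Hh[I])$ for $\yh|_I$: nonnegativity, the equality $\yh(E(I)) = |I|-1$, and the subset inequalities $\yh(E(S)) \leq |S|-1$ for $2 \leq |S| \leq |I|-1$ (the $|S|\leq 1$ cases being trivial). The essential input is that $y|_I$ is almost in the spanning tree polytope: it already satisfies every subset inequality, only the total-sum equality can be off, and the local modification producing $\yh$ is designed precisely to repair this.

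For nonnegativity, in Local Decrease the altered edge $f$ has $y_f \in \{\nf13,1\}$ and so drops to $0$ or $\nf23$; in Local Increase, $f$ is adjacent to the matching edge $e\in M$, so $f \notin M$, meaning $y_f=\nf13$ rises to $\nf23$. For the equality constraint, the degree-sum identity on $I$ gives $y(E(I)) = |I| - \nf12 y(\partial_{\Hh} I)$, and then the two cases match up: if $M \cap \partial I = \emptyset$ then $y(\partial I) = \nf43$, so $y(E(I)) = |I|-\nf23$, and Local Decrease subtracts $\nf13$ to land at $|I|-1$; if $|M \cap \partial I| = 2$ then $y(\partial I) = 2 + \nf23 = \nf83$, so $y(E(I)) = |I|-\nf43$, and Local Increase adds $\nf13$ to land at $|I|-1$.

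For the subset inequalities, the identity $y(E(S)) = |S| - \nf12 y(\partial_{\Hh} S)$ together with $y(\partial_{\Hh} S) \geq 2$ (because $y \in \Ktsp(\Hh)$) already gives $y(E(S)) \leq |S|-1$. Local Decrease only lowers coordinates, so the inequality persists trivially. The Local Increase case is the crux: when the increased edge $f$ happens to lie in $E(S)$ (both endpoints inside $S$), we need the strengthened bound $y(\partial_{\Hh} S) \geq \nf83$ in order to absorb the extra $\nf13$.

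The main obstacle, and the one step that uses the global structure of $H$, is proving this strengthened bound. The key observations are (i) the boundary matching edge $e$ shares the endpoint $v$ with $f$, so $f \in E(S)$ forces $v \in S$, hence $e \in \partial_{\Hh}(S)$ contributes $y_e = 1$; and (ii) since $H$ has no proper min-cuts and the range $2 \leq |S| \leq |I|-1$ makes $S$ a proper subset of $V(H)$, the unmodified edge set satisfies $|\partial_{\Hh}(S)| = |\partial_H(S)| \geq 6$. Thus $\partial_{\Hh}(S)$ contains at least five edges besides $e$, each of $y$-value $\geq \nf13$, yielding $y(\partial_{\Hh} S) \geq 1 + \nf53 = \nf83$, hence $\yh(E(S)) \leq |S| - \nf43 + \nf13 = |S|-1$ as required.
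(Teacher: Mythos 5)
Your proof is correct and follows essentially the same route as the paper's: the equality constraint holds by construction of the local change, the subset constraints follow from the degree-sum identity together with the fact that proper cuts of $H$ have at least six edges, the decrease case is monotone, and in the increase case the matching edge $e \in M\cap\partial I$ crossing $\partial_{\Hh} S$ supplies exactly the $\nf13$ of slack that the paper extracts via its tightness argument, so the two arguments are the same up to phrasing (direct slack bound vs.\ contrapositive). One small correction: the parenthetical justification ``because $y\in\Ktsp(\Hh)$'' is false when $M\cap\partial I=\emptyset$ (then $y(\partial I)=\nf43$, which is precisely why $\yh$ is needed at all); however, the inequality you actually use, $y(\partial_{\Hh}S)\ge 2$ for $2\le |S|\le |I|-1$, still holds by your own observation (ii), since such $S$ are proper in $H$, so $|\partial_{\Hh}S|=|\partial_H S|\ge 6$ and every edge has $y$-value at least $\nf13$.
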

\begin{proof} 
  The equality constraint $\sum_{e \in I} \yh_e = |I|-1$ is satisfied
  by design, so we need to check that
  $\sum_{e \in E[S']} \yh_e \leq |S'| - 1$ for all proper subsets
  $S' \subsetneq I$. Recall that $\partial_H S'$ has at least six
  edges, including the edges leaving $I$.  Let
  $a := |E_H(S', I \setminus S')|$ and $b := |E_H(S', V\setminus I)|$ so
  that $a + b \geq 6$.

  Observe that
  \begin{gather}
    \sum_{e \in E[S']} y_e = \frac12 \bigg( \sum_{v \in S'} \sum_{e
      \in \partial_H v} y_e - \sum_{e \in \partial_H S'} y_e \bigg)
    \leq \frac12 \Big( 2|S'| - (a+b)/3 \Big) \leq |S'| - 1. \label{eq:spanning_tree}
  \end{gather}
  \begin{itemize}
  \item In a local decrease, since we decrease $y$ to get $\yh$, the constraint is still satisfied.
  \item In a local increase, since we increase only one edge by $\nf13$, we
    only need to consider the case when~\eqref{eq:spanning_tree} is
    tight, which implies that every edge going out of $S'$ has value
    exactly $\nf13$.  In particular, every vertex in $S'$ is matched inside $S'$.
    By construction, the edge whose
    $y$-value we increase must have a neighboring edge matched by $M$ outside $I$.
    This $M$-edge cannot be incident to $S'$, so we are safe. \qedhere
  \end{itemize}
\end{proof}

\subsection{Correlation Properties: $|V(H)|$ odd}
\label{sec:corr-prop-odd}

To prove the correlation properties in the odd case, we observe that each of the proofs
from~\S\ref{sec:corr-prop-even} focuses on a small set of
edges. Hence, if we can show that with constant probability the
fractional solution for these edges is unchanged, the proofs go through 
with small changes (and  weaker constants). A more
careful case-analysis is likely to  give better constants.

We will often make arguments that condition on certain sets of edges not being perturbed by a local increase or decrease. The following lemma gives lower bounds on the probabilities that these desired events occur.

\begin{lemma}\label{lemma:untouched}
Let $f, g$ be internal edges incident to vertex $v$. Then, if $f$ or $g$ lies in $M$, $y_f$ and $y_g$ remain untouched by a local increase or decrease \wp at least $\nf23$. Similarly, if $v$ is an internal vertex with internal edges $e, f, g, h$, the four edges remain untouched with probability at least $\nf23$
\end{lemma}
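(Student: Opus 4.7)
The argument rests on two structural facts about $\Hh$: since $H$ is simple (by the footnote in Section~\ref{sec:samplers}), each internal vertex $u$ of $\Hh$ has at most one edge to $\{r_1, r_2\}$; writing $m_u$ for the number of external edges at $u$, we have $m_u \le 1$. Consequently there are exactly four boundary vertices in $\Hh$, each with $m_u = 1$ and $d_u := 4 - m_u = 3$ internal edges, and the internal subgraph of $\Hh$ is simple (no parallel internal edges). Both modifications pick the touched edge in the same two-step way: first pick the internal endpoint $u$ of a uniformly random edge $e'$---with $e' \in \partial I$ in a local decrease and $e' \in M \cap \partial I$ in a local increase---then pick a uniformly random internal edge $f'$ at $u$. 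Hence for a specific internal edge $f'$ with boundary endpoint $u$, the probability that $f'$ is touched via $u$ is $(m_u/4)(1/d_u) = 1/12$ in a local decrease, and at most $(1/2)(1/d_u) = 1/6$ in a local increase (with equality iff $u$ is matched to external by $M$). The plan is to condition on which modification occurs, bound the probability that the touched edge lies in the set of interest in each case, and combine.

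\textbf{Part 1.} Assume WLOG $f = v v_1 \in M$, with $g = v v_2$. Since $v, v_1$ are matched internally by $f$, neither is incident to any edge of $M \cap \partial I$; hence in a local increase the picked $u$ lies outside $\{v, v_1\}$. So edge $f$ cannot be touched at all, and edge $g$ can only be touched via $u = v_2$, giving $\mathbb{P}(\text{touched} \mid \text{local increase}) \le 1/6$. In a local decrease, summing the per-endpoint contributions over the (at most three) boundary endpoints of $\{f, g\}$,
\[
\mathbb{P}(\text{touched} \mid \text{local decrease}) \le \underbrace{(1/4)(2/3)}_{\text{via } v} + \underbrace{(1/4)(1/3)}_{\text{via } v_1} + \underbrace{(1/4)(1/3)}_{\text{via } v_2} = \tfrac{1}{3},
\]
where the factor $2$ at $v$ counts both $f$ and $g$ being incident there. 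Since both conditional bounds are at most $1/3$, so is the unconditional probability of being touched.

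\textbf{Part 2.} Now $v$ is non-boundary ($m_v = 0$, since all four edges are internal), so the picked $u \ne v$; the four edges at $v$ go to four distinct neighbors by internal simplicity. In a local decrease, at most four boundary neighbors each contribute $1/12$, for a total of $\le 4/12 = 1/3$. In a local increase, only the two boundary vertices actually matched to external by $M$ contribute, each $\le 1/6$, for a total of $\le 2/6 = 1/3$. Combining gives the $2/3$ lower bound.

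\textbf{Main obstacle.} The $2/3$ bound is essentially tight---saturated when $v$'s neighborhood is exactly the four boundary vertices of $\Hh$---so the analysis has no slack. The central conceptual step in Part 1 is recognizing that $f \in M$ forces $V(M \cap \partial I) \cap \{v, v_1\} = \emptyset$, which both eliminates the local-increase contribution from $f$ and restricts that from $g$ to the single vertex $v_2$; without this observation the local-increase bound would degrade to $1/2$ and the $2/3$ guarantee would fail.
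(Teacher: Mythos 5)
Your proof is correct and takes essentially the same approach as the paper's: split on local increase versus local decrease, and bound the probability that the perturbed edge lies in the set of interest by summing the per-boundary-vertex contributions ($\nf1{12}$ in a decrease, $\nf16$ in an increase). Your write-up is considerably more explicit than the paper's terse proof, spelling out the two-step selection mechanism and the structural facts (each boundary vertex has exactly one external and three internal edges, $\Hh[I]$ is simple) that make the counting go through.
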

\begin{proof}
  If one of $f,g$ lies in $M$ (which happens \wp \nf12), then its $y$
  value equals $1$ and it belongs to $T$ \wp $1$. If
  there is a local increase, then the increase can only affect the
  edge not in $M$ (through its other endpoint), so the $y$-value
  remains unchanged \wp $\nf23$. If there is a local
  decrease, both edges $f,g$ may have both endpoints that are boundary
  vertices in worst case, giving a $\nf4{12}$ chance of being decreased,
  and hence a $\nf8{12} = \nf23$ chance of
  $y$-values of these two edges remaining unchanged.
  
  The second statement follows from the fact that there are three internal edges adjacent to an external edge. 
\end{proof}

We move on to the proofs of the correlation properties. We restate the lemmas here for convenience. 
  
\Oddfg*

\begin{proof}[Proof of \Cref{clm:odd1}, Odd Case]
  \emph{The \MI claims:} As in the even case, to prove $\pr(|T \cap \{f,g\}|=2) \geq \nf{1}{9}$, we need only knowledge of the marginals and not the specific sampler. If one of $f,g$ lies in $M$ (which happens \wp \nf12)
  $y$-values of these two edges remaining unchanged with probability at least \nf23, due to Lemma \ref{lemma:untouched}. Now, conditional on these $y$-values remaining unchanged, the other edge (not in $M$) is chosen with probability $\nf13$, making the
  unconditional probability $\nf12 \cdot \nf23 \cdot \nf13 = \nf19$,
  as desired.  Similarly, conditioned on $f$ lying
  in $M$ and hence belonging to $T$, edge $g$ is not chosen \wp
  $1 - y_e = \nf23$, giving a 
  probability at least $\nf14 \cdot \nf23 \cdot \nf23 = \nf19$, proving the second part of \Cref{clm:odd1} for the \MI sampler.

  \emph{The \ME claim:} 
  It remains to show that $\pr(T \cap \{f,g\} = \{f\}) \geq \nf{12}{72}$. In the case that $f \in M$, we once again have a probability of at least $\nicefrac{2}{3}$ that $f$ and $g$ are untouched, due to Lemma \ref{lemma:untouched}. Sincce $\pr(g \not \in T) = \nf{2}{3}$ in this case, we have 
\[ \pr(T\cap\{f,g\}=\{f\} \wedge f \in M) \geq \nicefrac{1}{4} \cdot \nicefrac{2}{3} \cdot \nicefrac{2}{3} = \nicefrac{1}{9}. \]
In the case that neither $f$ nor $g$ is in $M$, the probability $f$ or $g$ is chosen in a local decrease is at most $\nicefrac{1}{3}$ and the probability $f$ or $g$ is chosen in a local increase is at most $\nicefrac{1}{2}$. (For the first bound, the worst case is when each of the three endpoints of $f,g$ is incident to an external edge. Then the probability that neither $f$ nor $g$ is chosen for a local decrease is at least $\nf12 \cdot \nf23 + \nf14 \cdot \nf13 + \nf14 = \nf23$. For the second bound, the worst case is when each edge in $M \cap \partial I$ is incident to an endpoint of $f$, w.l.o.g. Then the probability that neither $f$ nor $g$ is chosen for a local increase is at least $\nf12 \cdot \nf23 + \nf12 \cdot \nf13 = \nf12$.)  Hence, the edges remain untouched with probability at least $\nicefrac{1}{2}$. In total, this gives
\[ \pr(T \cap \{f,g\} = \{f\} \wedge f,g \not \in M) \geq \nicefrac{1}{2} \cdot \nicefrac{1}{2} \cdot \nicefrac{2}{9} = \nicefrac{4}{72} \]
where the bound $\pr(T \cap \{f,g\} = \{f\} \mid f,g \not \in M) \geq \nf19$ was computed in the even case (\ref{3.1evencase}). 
Hence, 
\[ \pr(T \cap \{f, g\} = \{f\}) \geq \nicefrac{1}{9} + \nicefrac{4}{72} = \nicefrac{12}{72}. \qedhere \]
\end{proof}

\Oddefgh*

\begin{proof}[Proof of \Cref{clm:odd2}, Odd Case]
  \emph{The \MI claims:} Each perfect matching $M$ contains one of these four edges in
  $\partial v$. Say that edge is $e$.
  The vertex $v$ has no external edges, so
  regardless of whether we do a local increase or decrease, these
  four edges remain untouched \wp
  $\nf23$, due to Lemma \ref{lemma:untouched}. If this happens, and if $e$ also belongs to $M'$ (\wp
  $\nf17$), tree $T$ contains exactly one of $\{f,g,h\}$, which gives
  us the bound $\nf{2}{21}$. Moreover,
  the probability of this edge in $T$
  belonging to the other pair (in this case, $\{g,h\}$) is $\nf23$,
  giving the bound $\nf{4}{63}$.
  
  \emph{The \ME claims:} Since $v$ is not a boundary vertex, none of $e, f, g, h$ are picked for a local increase or decrease \wp at least $\nf23$, due to Lemma \ref{lemma:untouched}. Conditioning on $e, f, g, h$ remaining untouched, we have the same analysis as that of the even case (\ref{3.2evencase}). Hence, we obtain 
\[ \pr(|T \cap \{e, f, g, h\}| = 2) \geq \nicefrac{2}{3} \cdot \nicefrac{4}{9} = \nicefrac{8}{27} \]
and 
\[ \pr(|T \cap \{e, f\}| = |T \cap \{g, h\}| = 1) \geq \nicefrac{2}{3} \cdot \nicefrac{8}{27} = \nicefrac{16}{81}. \qedhere \]
\end{proof}

\OddGen*

\begin{proof}[Proof of \Cref{clm:odd4}a, Odd Case]
  \emph{The \MI claims:} In this proof, we will often use that $G$ is a simple graph.
    Suppose $e \in M'$, which happens \wp $\nf1{28}$. Call the edges
    $(\partial u \cup \partial v) \setminus \{e\}$ \emph{interesting}
    edges.  Condition on the matching $M$ (which decides whether the
    change is a local increase or a decrease), and the edge $f$ in
    $\partial I$ whose incident internal edge is increased/decreased.
    At most two of the three internal edges
    incident to $f$ can be interesting. Since $u,v$ are non-boundary
    vertices, the value of $e$ is never changed by the alteration.

    First suppose it is a local decrease, then with probability at
    least $\nf13$ a non-interesting edges is reduced, and hence the
    matroid constraint ensures we always succeed. With probability at
    most $\nf23$ an interesting edge is reduced, in which case we
    still have a $\nf23$ chance of getting degree $2$ at both $u$ and
    $v$. So we have $\nf13 + \nf23\cdot\nf23 = \nf79$ of success,
    conditioned on the choice of the edge.

    Else, suppose it is a local increase; the argument is
    similar. With probability at least $\nf13$ a non-interesting edges
    is increased, and hence the matroid constraint ensures we always
    succeed. With probability at most $\nf23$ an interesting edge is
    increased, in which case we change the matroid constraint to drop
    one of the other interesting edges. This dropped edge can be added
    in with its $\yh$-value $\nf13$, and hence we have a $\nf23$
    probability of getting degree $2$ at both $u$ and $v$. Again, the
    probability of success is $\nf13 + \nf23\cdot\nf23 = \nf79$ of
    success.
    Removing the conditioning on $e \in M'$, the net probability of
    success is $\nf{1}{28} \cdot \nf79 = \nf1{36}$.
    
    \emph{The \ME claims: }Let $U = \delta(u) \setminus \{e\}$ and $V = \delta(v) \setminus \{e\}$. Condition on $e \in M$. Since $u,v$ are both non-boundary vertices, the value of $e$ will remain unchanged regardless of whether there is a local increase or decrease, so we will have $y_e = 1$ always. Recall that $y$ is changed at a single edge when the local increase or decrease is performed. We take cases on whether a local increase or local decrease occurs.

\begin{enumerate}[wide, labelwidth=!, itemsep=1.5ex]

\item[\underline{A local decrease occurs.}] 

\begin{figure}[h]
    \centering
    \includegraphics[scale=0.5]{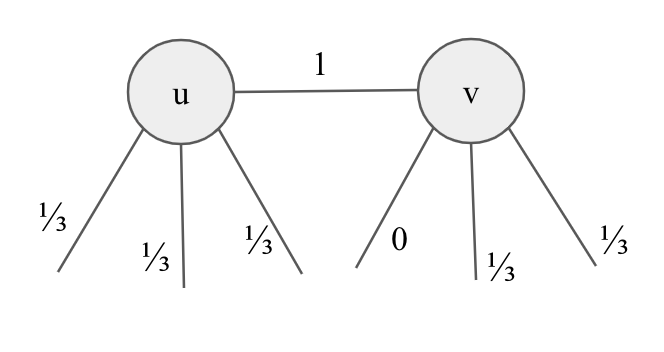}
    \caption{The decrease case in \Cref{clm:odd4}a, labelled with $y$-values after the decrease is performed.}
    \label{fig:6a-decrease}
\end{figure}
If none of the edges in $U \cup V$ are decreased, then (conditioned on $e \in M$) the analysis from the even case (\ref{3.3Aevencase}) gives 
\[\pr(X_{U} = 1 \wedge X_{V} = 1) \geq \nf{256}{2187}.\] 
Now consider the case that an edge in $U \cup V$ is decreased. Let $S_1 = U$ and let $S_2 = \{a, b\}$ denote the set of two nonzero edges in $V \setminus \{e\}$ (see \Cref{fig:6a-decrease}). Applying \Cref{thm:Hoeff} and the fact that $\ex(|S_2 \cap T|) = \nicefrac{2}{3}$, we have $\pr(|S_2 \cap T| = 1) \geq 2 \cdot \nicefrac{2}{3} \cdot \nicefrac{1}{3} = \nicefrac{4}{9}$.
By \Cref{thm:SRfacts}, 
\[ \nicefrac{1}{3} = 1 - \nicefrac{2}{3} \leq \ex(|S_1 \cap T| \mid a \in T, b \not \in T) \leq 1 +  \nicefrac{1}{3} = \nicefrac{4}{3}. \]
We apply \Cref{thm:Hoeff} again to lower bound $\pr(|S_1 \cap T| = 1 \mid a \in T, b \not \in T)$ by 
\[ \pr(|S_1 \cap T| = 1 \mid a \in T, b \not \in T) \geq 3 \cdot \nicefrac{1}{9} \cdot (\nicefrac{8}{9})^2 = \nicefrac{64}{243}. \]
Since the same calculation may be repeated with $b \in T$ and $a \not \in T$ (as $y_a = y_b$), we have 
\[ \pr(X_U = 1 \wedge X_V = 1) \geq \nicefrac{4}{9} \cdot \nicefrac{64}{243} = \nicefrac{256}{2187}. \]
So comparing both cases, we have that conditioned on $e \in M$, if there is a local decrease then \[\pr(X_U=1 \wedge X_V=1) \geq \min\{\nf{256}{2187}, \nf{256}{2187}\}=\nf{256}{2187}.\]

\item[\underline{A local increase occurs.}] We first note that the probability of no edge in $U \cup V$ being increased is at least $\nf{1}{3}$, in which case the analysis from the even case (\ref{3.3Aevencase}) gives \[\pr(X_{U} = 1 \wedge X_{V} = 1) \geq \nf{256}{2187}.\]
Now consider the case that an edge $c$, say in $U$, is increased from $\nf13$ to $\nf23$. Let $U \setminus \{c\} = \{a,b\}$. Since $\ex[X_{a,b}] = \nf{2}{3}$, by Markov's inequality we have $\pr(X_{a,b} = 0) \geq \nicefrac{1}{3}$. Furthermore, $ \ex[X_c \mid X_{a, b}=0] \geq \ex[X_c] = \nf{2}{3}$ by \Cref{thm:SRfacts}. In total, $\pr(c \in T, a, b \not \in T) \geq \nf{1}{3} \cdot \nf{2}{3} =\nf{2}{9}$.
Applying  \Cref{thm:SRfacts} thrice more, we obtain
\[ \nicefrac{1}{3} \leq \ex(|S_2 \cap T| \mid c \in T, a, b \not \in T) \leq \nicefrac{5}{3} \]
which by \Cref{thm:Hoeff} gives us 
\[ \pr(|S_2 \cap T| = 1 \mid c \in T, a, b \not \in T) \geq \nicefrac{64}{243}. \]
This yields, in the case that an edge in $U \cup V$ is increased, 
\[ \pr(X_U = 1 \land X_V = 1) \geq \nicefrac{64}{243} \cdot \nicefrac{2}{9} = \nf{128}{2187}. \]
Since the probability that no edge in $U \cup V$ is increased is at least $\nf13$, and $\nf{256}{2187} > \nf{128}{2187}$, in the worst case we have that, conditioned on $e \in M$, if there is a local increase then
\[ \pr(X_U = 1 \land X_V = 1) \geq \nicefrac{1}{3} \cdot \nicefrac{256}{2187} + \nicefrac{2}{3} \cdot \nicefrac{128}{2187} = \nicefrac{512}{6561}. \]
\end{enumerate}
Now comparing the cases of local increase and decrease and removing the conditioning of $e \in M$, we obtain in the worst case 
\[\pr(X_U = 1 \land X_V = 1) \geq \nicefrac{1}{4} \cdot \min\{\nf{256}{2187}, \nf{512}{6561}\} = \nf{128}{6561} \]
which is the bound we sought. 
\end{proof}

\begin{proof}[Proof of \Cref{clm:odd4}b, \emph{Odd Case}.]
    \emph{The \MI claims.} Fix a matching $M$ such that $e \in M$
    (\wp $\nf14$); each of $u,v$ have two other internal edges, each
    with $y$-value $\nf13$. Call these four \emph{interesting} edges.
    Suppose $M \cap \partial I$ has two edges, then we perform a local
    increase step. Condition on the edge in $M \cap \partial I$ chosen
    as part of the local increase: in the worst case it is incident to
    the endpoint of two of the interesting edges. 
    \begin{itemize}
    \item Then \wp $\nf23$ it increases one of the interesting edges
      from $y_f = \nf13$ to $\yh_f = \nf23$. The total $\yh$-value of
      interesting edges becomes $\nf53$. By the same argument as for
      the even case, the tree $T$ contains exactly one
      interesting edge with probability at least $\nf13$.
    \item With the remaining probability $\nf13$, the increase is not
      to an interesting edge, and hence we succeed with the original
      probability $\nf23$.
    \end{itemize}
    The total probability of success is therefore
    $\nf23 \cdot \nf13 + \nf13 \cdot \nf23 = \nf49$.
        
    On the other hand, if $M \cap \partial I$ is empty, then we
    perform a local decrease step. Condition on the boundary edge
    chosen in the first part of the local decrease. In the worst case
    it is one of the boundary edges incident to either $u$ or $v$.
      One of the interesting edges is reduced \wp $\nf23$, from
      $y_f = \nf13$ to $\yh_f = 0$. If that happens, the total
      $\yh$-value of interesting edges sums to $1$.  The connectivity
      of the spanning tree means that every tree we sample with these
      marginals will have exactly one interesting edge. 

    Hence, in either case, we have exactly one of these four internal
    edges and satisfy the condition of the claim with probability
    $\nf49$, conditioned on having $e \in M$. The overall probability
    is therefore at least $\nf14 \cdot \nf49 = \nf1{9}$.
    
    \emph{The \ME claims.} Observe that the edge $e$ will always contribute either: 0 to both the degree of $u$ in $T$ and the degree of $v$ in $T$ OR 1 to both the degree of $u$ in $T$ and the degree of $v$ in $T$. Thus, to lower bound the probability that $u$ and $v$ have different parities in $T$, we will not need to consider whether or not $e$ is in $T$. 

Let $a,b$ be the two \emph{internal} edges in $\delta(u) \setminus \{e\}$ and $c,d$ be the two \emph{internal} edges in $\delta(v) \setminus \{e\}$. (We know each set has exactly two edges because $u,v$ are each boundary vertices.) Let $f,g$ be the two external edges incident to $u,v$, respectively. We take cases based on (a)~whether a local increase or decrease is performed, (b)~whether or not any of $a,b,c,d$ are changed by the local increase or decrease, and (c)~whether or not $e \in M$. Note that for (b), we do not take subcases on whether or not $e$ is changed, by the above paragraph. Also note that for (c), we consider whether $e \in M$ in order to determine the marginals of $\{a,b,c,d\}$ prior to a local increase or decrease.

\begin{enumerate}[wide, labelwidth=!, itemsep=1ex]
\item[\underline{Case I.N.1: local increase occurs, $\{a, b, c, d\}$ not touched, $e \in M$.}] In this case, we obtain the same bound from Case 1 of the even case in the proof of \Cref{clm:odd4}b (see \ref{3.3bevencase}), that is,
\[ \pr(\text{exactly one of $u,v$ has odd degree in $T$}) \geq \pr(X_{a,b,c,d}=1) \geq \boxed{\nicefrac{2}{3}}. \]

\item[\underline{Case I.N.2: local increase occurs, $\{a, b, c, d\}$ not touched, $e \not \in M$.}] This case includes the two subcases $a,c \in M$ and $a \in M, c \not \in M$, which are cases 2 and 3 in the proof of \Cref{clm:odd4}b (see \ref{3.3bevencase}) and yield lower bounds on the desired probabilitiy of $\boxed{\nf49}$ and $\boxed{\nf{8}{27}}$, respectively. Finally, there is one more subcase for the case where $|V(H)|$ is odd, namely, $a,c \not \in T$, since we may have $f,g \in M$. Since in this subcase $a,b,c,d$ are all $\nf13$-valued edges, we have the same bound from Case 1 of the even case in the proof of \Cref{clm:odd4}, namely, $\boxed{\nf23}$. 

\item[\underline{Case I.T.1: local increase occurs, $\{a, b, c, d\}$ touched, $e \in M$.}] One of the edges $a, b, c, d$ is increased from $\nicefrac{1}{3}$ to $\nicefrac{2}{3}$, while the rest are still $\nf13$-valued, since $e \in M$. Hence, $\ex[X_{a,b,c,d}] = \nicefrac{5}{3}$.
Furthermore, note that we must have $X_{a,b,c,d} \geq 1$. This gives  
\[ \pr(\text{exactly one of $u,v$ has odd degree in $T$}) \geq \pr(X_{a, b, c, d} = 1) \geq \boxed{\nf{1}{3}}. \]

\item[\underline{Case I.T.2: local increase occurs, $\{a, b, c, d\}$ touched, $e \not \in M$.}] Just as in Case I.N.2., there are three subcases. First consider the subcase of $a,c \in M$. In this subcase, $a,c$ cannot be increased, so we know $\ex[X_{b,d}] = \nf13$. By \Cref{thm:Hoeff}, $\pr(X_{b,d}=1) \geq \boxed{\nf12}$. 

Next, consider the subcase that only $a \in M$. Then, we have $\ex(X_{b, c, d})=\nf{4}{3}$
and hence by \Cref{thm:Hoeff}, 
$\pr(X_{b, c, d} = 2) \geq 2 \cdot \nf13 \cdot \nf13 \geq \boxed{\nf{5}{18}}$.

For the final subcase that $f,g \in M$, we have $\ex[X_{a, b, c, d}] = \nf{5}{3}$
and since $X_{a,b,c,d} \geq 1$, we have $\pr(X_{a, b, c, d} = 1) \geq \boxed{\nf{1}{3}}$. 
\end{enumerate}

This concludes the local increase cases. We move on to the local decrease cases.

\begin{enumerate}[wide, labelwidth=!, itemsep=1ex]
\item[\underline{Case D.N.1: local decrease occurs, $\{a, b, c, d\}$ not touched, $e \in M$.}] This is the same case as I.N.1., giving a bound of $\boxed{\nf23}$.

\item[\underline{Case D.N.2: local decrease occurs, $\{a, b, c, d\}$ not touched, $e \not \in M$.}] This is the same case as I.N.2., taking only the first subcase ($a,c \in M$), since $M \cap \partial I = \emptyset$ in a local decrease. Thus the lower bound here is $\boxed{\nf49}$.

\item[\underline{Case D.T.1: local decrease occurs, $\{a, b, c, d\}$ touched, $e \in M$.}] Since $e \in M$, we know $a, b, c, d$ start of as $\nicefrac{1}{3}$-valued. Then, w.l.o.g., say $a$ is decreased to 0. Then, $\ex[X_{b, c, d}] = 1.$ But since $X_{b, c, d} \geq 1$, this gives $\pr(X_{b, c, d} = 1) = \boxed{1}.$

\item[\underline{Case D.T.2: local decrease occurs, $\{a, b, c, d\}$ touched, $e \not \in M$.}] Since in a local decrease $M \cap \partial I = \emptyset$, we have w.l.o.g. that $a,c \in M$ and that $a$ is not decreased. So $\ex(X_{b, c, d}) = 1 + 2 \cdot \nf13 - \nf13 = \nf{4}{3}.$ Since we know $a \in T$, we instead study $\pr(X_{b, c, d} = 2)$; by \Cref{thm:Hoeff}, we have 
$\pr(X_{b, c, d} = 2) \geq 2 \cdot \nf16 \cdot \nf56 = \boxed{\nf{5}{18}}$.
\end{enumerate}
The minimum of all the boxed quantities is $\nf{5}{18}$. Hence, for the odd case of \Cref{clm:odd4}b, we obtain the desired bound
\[ \pr(\text{exactly one of $u,v$ has odd degree in $T$}) \geq \nf{5}{18}. \qedhere \]
\end{proof}

\section{Charging Proof Details} \label{sec:charging proofs}

\vspace{0.3cm}

\begin{proof}[Proof of Averaging Argument, \S\ref{sec:nonsp}, Case 4]
We will use an averaging argument to show that the expected charge is no worse than $p \cdot \max\{\frac{\tau}{2}, \frac{\gamma}{2}, \frac{\beta}{4}\}.$ We construct a flow network as in the proof of \Cref{hall} in order to show there exists a function $x: B \times F \rightarrow \mathbb{R}_+$ satisfying constraint (\ref{upper bound}) for $c = \max\{\frac{\tau}{2}, \frac{\gamma}{2}, \frac{\beta}{4}\}$. Then we can apply a charging scheme like the one described in Case 1 using the function $x$. The flow network is as follows: take the graph $H=(B,F)$ and add a source node $s$ and a sink node $t$. Connect $s$ to every vertex $b$ in $B$, and label $(s,b)$ with capacity $\tau, \gamma,$ or $\frac{\beta}{2}$ based on whether $b$ corresponds to a non-$K_5$ degree, $K_5$ degree, or cycle edge, respectively. To find $x$ satisfying the properties in the above paragraph, by Lemma 1 we need to show that for each $T \subseteq B$, 
\[|N(T)| \geq \frac{\tau \cdot n_{\tau} + \gamma \cdot n_{\gamma} + \frac{1}{2} \cdot \beta \cdot n_{\beta}}{\max\{\frac{\tau}{2}, \frac{\gamma}{2}, \frac{\beta}{4}\}}\]
where $n_{\tau}, n_{\gamma},$ and $n_{\beta}$ are the numbers of vertices in $T$ corresponding to non-$K_5$ degree, $K_5$ degree, and cycle edges, respectively. Note $n_{\tau} + n_{\gamma} + n_{\beta} = |T|$. 

 \Cref{Hall's constants} showed that 
\begin{align*}
|N(T)| &\geq \frac{\tau \cdot |T|}{\frac{\tau}{2}} \geq \frac{\tau \cdot |T|}{\max\{\frac{\tau}{2}, \frac{\gamma}{2}, \frac{\beta}{4}\}} \\
|N(T)| &\geq \frac{\gamma \cdot |T|}{\frac{\gamma}{2}} \geq \frac{\gamma \cdot |T|}{\max\{\frac{\tau}{2}, \frac{\gamma}{2}, \frac{\beta}{4}\}} \\
|N(T)| &\geq \frac{\frac{1}{2} \cdot \beta \cdot |T|}{\frac{\beta}{4}} \geq \frac{\frac{1}{2} \cdot \beta \cdot |T|}{\max\{\frac{\tau}{2}, \frac{\gamma}{2}, \frac{\beta}{4}\}}.
\end{align*}
Now taking a convex combination of the three inequalities above we obtain:
\begin{align*}
|N(T)| &= \frac{n_{\tau}}{|T|} \cdot |N(T)| + \frac{n_{\gamma}}{|T|} \cdot |N(T)| + \frac{n_{\beta}}{|T|} \cdot |N(T)| \\
&\geq \frac{ \tau \cdot n_{\tau} + \gamma \cdot n_{\gamma} +  \frac{1}{2} \cdot \beta \cdot n_{\beta}}{\max\{\frac{\tau}{2}, \frac{\gamma}{2}, \frac{\beta}{4}\}}
\end{align*}
which is precisely what we sought to show. This concludes Case 4. 
\end{proof}

\section{Symmetry Lemma}
\label{sec:other-lemmas}

\begin{claim}[Symmetry Lemma]
  \label{claim:symmetry} For any graph $H$ with each edge $e$ having $x_e = \nf12$, let $T$ be
a random set of edges faithful to the marginals, i.e.,
$\Pr[e \in T] = x_e$.
  For any two edges $f,g$, let $p_{00}, p_{10}, p_{01}, p_{11}$ denote
  the probabilities of $T \cap \{f,g\}$ being
  $\emptyset, \{f\}, \{g\}, \{f,g\}$ respectively. Then
  $p_{00} = p_{11}$ and $p_{01} = p_{10}$.
\end{claim}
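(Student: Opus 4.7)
The plan is to prove this directly from the marginal constraints alone — we do not need to know anything about the joint distribution of $T$ beyond the fact that $\Pr[e \in T] = \nf12$ for every edge $e$. In particular, the $r_0$-tree structure and the cut hierarchy play no role; this is a purely two-variable identity.

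First I would write down the two marginal identities that $f$ and $g$ each induce on the four probabilities $p_{00}, p_{10}, p_{01}, p_{11}$. Namely, from $\Pr[f \in T] = \nf12$ we get $p_{10} + p_{11} = \nf12$, and hence by total probability $p_{00} + p_{01} = \nf12$; similarly, from $\Pr[g \in T] = \nf12$ we get $p_{01} + p_{11} = \nf12$ and $p_{00} + p_{10} = \nf12$. This gives the two equalities
\begin{align*}
p_{10} + p_{11} &= p_{00} + p_{01}, \\
p_{01} + p_{11} &= p_{00} + p_{10}.
\end{align*}

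Then I would simply add and subtract these two equations. Adding them yields $2p_{11} = 2p_{00}$, so $p_{00} = p_{11}$. Subtracting them yields $p_{10} - p_{01} = p_{01} - p_{10}$, so $p_{01} = p_{10}$. This completes the proof.

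There is no main obstacle here — it is a one-line linear-algebra computation once the four marginal equations are set down. The only substantive point worth emphasizing is that the identity is a consequence of \emph{both} $x_f = \nf12$ and $x_g = \nf12$ (so that $\Pr[f \in T] + \Pr[f \notin T]$ and $\Pr[g \in T] + \Pr[g \notin T]$ combine to yield the symmetry); the statement would fail for general marginals.
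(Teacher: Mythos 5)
Your proof is correct and is essentially the same elementary argument as the paper's: both derive the symmetry purely from the marginal constraints by writing down two linear relations among $p_{00}, p_{01}, p_{10}, p_{11}$ and combining them. The only cosmetic difference is that the paper isolates which hypothesis drives each conclusion (the sum $x_f + x_g = 1$ gives $p_{00}=p_{11}$ via $\E[|T\cap\{f,g\}|]=1$, while the equality $x_f = x_g$ gives $p_{01}=p_{10}$), whereas you plug in $x_f = x_g = \nf12$ directly and add/subtract the resulting equations; both are one-line computations.
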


\begin{proof}
  Observe that $p_{00} + p_{01} + p_{10} + p_{11} =1$. Moreover,
  $p_{01} + p_{10} + 2p_{11} = \E[|T \cap \{f,g\}|] = x_f + x_g =
    1$. 
  Hence, $p_{00} = p_{11}$, proving the first statement. Now,
  $ p_{10} + p_{11} = x_f = x_g = p_{01} + p_{11},$
  giving $p_{10} = p_{01}$, proving the second.
\end{proof}

\end{document}